%
%
%

\documentclass[11pt]{article}
\usepackage{graphicx,amssymb,amsmath}

\usepackage[linesnumbered, ruled]{algorithm2e}
\usepackage{amsfonts}
\usepackage{amsmath}
\usepackage{booktabs}
\usepackage{cite}
\usepackage{color}
\usepackage{enumerate}
\usepackage{float}
\usepackage{gensymb}
\usepackage{hyperref}
\usepackage{mathrsfs}
\usepackage{subfig}
\usepackage{tabularx}
\usepackage{fullpage}
\usepackage[top=0.8in, bottom=0.9in, left=0.9in, right=0.9in]{geometry}

\graphicspath{{./images/}}

\def\calC{\mathcal{C}}
\def\calI{\mathcal{I}}
\def\calL{\mathcal{L}}
\def\calU{\mathcal{U}}
\def\calD{\mathcal{D}}
\def\calK{\mathcal{K}}
\def\calR{\mathcal{R}}
\def\calQ{\mathcal{Q}}
\def\calF{\mathcal{F}}
\def\calA{\mathcal{A}}

\newtheorem{observation}{Observation}

\newtheorem{lemma}{Lemma}
\newtheorem{theorem}{Theorem}

\newtheorem{subproblem}{Subproblem}
\newenvironment{proof}{\noindent {\textbf{Proof:}}\rm}{\hfill $\Box$
\rm\bigskip}







\title{Reverse Shortest Path Problem for Unit-Disk Graphs\thanks{This research was supported in part by NSF under Grant CCF-2005323. Preliminary results of this paper appeared in {\em Proceedings of the 17th Algorithms and Data Structures Symposium (WADS 2021)} and {\em Proceedings of the 16th International Conference and Workshops on Algorithms and Computation (WALCOM 2022)}.}}

\author{
Haitao Wang\thanks{Department of Computer Science,
Utah State University, Logan, UT 84322, USA. {\tt haitao.wang@usu.edu}}
\and
Yiming Zhao\thanks{Corresponding author. Department of Computer Science,
Utah State University, Logan, UT 84322, USA. {\tt yiming.zhao@usu.edu}}
}



\begin{document}
\pagestyle{plain}
\date{}

\thispagestyle{empty}
\maketitle

 \vspace{-0.2in}
\begin{abstract}
Given a set $P$ of $n$ points in the plane, the unit-disk graph $G_{r}(P)$ with respect to a parameter $r$ is an undirected graph whose vertex set is $P$ such that an edge connects two points $p, q \in P$ if the Euclidean distance between $p$ and $q$ is at most $r$ (the weight of the edge is 1 in the unweighted case and is the distance between $p$ and $q$ in the weighted case). 
Given a value $\lambda>0$ and two points $s$ and $t$ of $P$, we consider the following {\em reverse shortest path problem}: computing the smallest $r$ such that the shortest path length between $s$ and $t$ in $G_r(P)$ is at most $\lambda$. 
In this paper, we present an algorithm of $O(\lfloor\lambda\rfloor \cdot n \log n)$ time and another algorithm of $O(n^{{5}/{4}} \log^{7/4} n)$ time for the unweighted case, as well as an  $O(n^{{5}/{4}} \log^{5/2} n)$ time algorithm for the weighted case. We also consider the $L_1$ version of the problem where the distance of two points is measured by the $L_1$ metric; we solve the problem in $O(n \log^3 n)$ time for both the unweighted and weighted cases.

\end{abstract}

\section{Introduction}
\label{sec:introduction}

Given a set $P$ of $n$ points in the plane and a parameter $r$, the
{\em unit-disk graph} $G_{r}(P)$ is an undirected graph whose vertex set is
$P$ such that an edge connects two points $p, q \in P$ if the
(Euclidean) distance between $p$ and $q$ is at most $r$.
The weight of each edge of $G_{r}(P)$ is defined to be one in the {\em unweighted} case 
and is defined to the distance between the two vertices of the edge in the {\em
weighted} case. Alternatively, $G_r(P)$ can be viewed as the
intersection graph of the set of congruous disks centered at the
points of $P$ with radii equal to $r/2$, i.e., two vertices are
connected if their disks intersect. The \emph{length} of a
path in $G_{r}(P)$ is the sum of the weights of the edges of the path.

Computing shortest paths in unit-disk graphs
with different distance metrics and different weights assigning
methods has been extensively studied,
e.g.,~\cite{ref:CabelloSh15,ref:ChanAl16,ref:ChanAp18,ref:GaoWe05,ref:KaplanDy17,ref:RodittyOn11,ref:WangNe20}.
Although a unit-disk graph may have $\Omega(n^2)$ edges, geometric properties allow to solve the single-source-shortest-path problem (SSSP) in sub-quadratic time. Roditty and Segal~\cite{ref:RodittyOn11} first proposed an algorithm of $O(n^{4/3 + \epsilon})$ time for  unit-disk graphs for both unweighted and weighted cases, for any $\epsilon>0$. Cabello and Jej\v{c}i\v{c}~\cite{ref:CabelloSh15} gave an algorithm of $O(n \log n)$ time for the unweighted case. Using a dynamic data structure for bichromatic closest pairs~\cite{ref:AgarwalVe99}, they also solved the weighted case in $O(n^{1 + \epsilon})$ time~\cite{ref:CabelloSh15}. Chan and Skrepetos~\cite{ref:ChanAl16} gave an $O(n)$ time algorithm for the unweighted case, assuming that all points of $P$ are presorted. Kaplan et al.~\cite{ref:KaplanDy17} developed a new randomized result for the dynamic bichromatic closest pair problem; applying the new result to the algorithm of~\cite{ref:CabelloSh15} leads to an $O(n \log^{12 + o(1)} n)$ expected time randomized algorithm for the weighted case. Recently, Wang and Xue~\cite{ref:WangNe20} proposed a new algorithm that solves the weighted case in $O(n \log^2 n)$ time. Somce approximation algorithms for the  problem have also been developed~\cite{ref:GaoWe05,ref:ChanAp18,ref:WangNe20}.

The $L_1$ version of the SSSP problem has also been studied, where the distance of two points in the plane is measured under the $L_1$ metric when defining $G_r(P)$. Note that in the $L_1$ version a ``disk'' is a diamond. The SSSP algorithms of \cite{ref:CabelloSh15,ref:ChanAl16} for the $L_2$ unweighted version can be easily adapted to the $L_1$ unweighted version. Wang and Zhao~\cite{ref:WangAn21} recently solved the $L_1$ weighted case in $O(n \log n)$ time. It is known that $\Omega(n\log n)$ is a lower bound for the SSSP problem in both $L_1$ and $L_2$ versions~\cite{ref:CabelloSh15,ref:WangAn21}. Hence, the SSSP problem in the $L_1$ weighted/unweighted case as well as in the $L_2$ unweighted case has been solved optimally.


In this paper, we consider the following \emph{reverse shortest path} (RSP)
problem. In addition to $P$, given a value $\lambda > 0$ and two points $s, t
\in P$, the problem is to compute the smallest value $r$ such that the distance between $s$ and $t$ in $G_{r}(P)$ is at most $\lambda$. There are four cases for the RSP problem depending on whether $L_1$ or $L_2$ metric is considered and whether the unit-disk graphs are weighted or not. Throughout the paper, we let $r^*$ denote the optimal value $r$ for any case. The goal is therefore to compute $r^*$.


Observe that $r^*$ must be equal to the distance of two points in $P$ in any case (i.e., $L_1$, $L_2$, weighted, unweighted). In light of this observation, Cabello and Jej\v{c}i\v{c}~\cite{ref:CabelloSh15} mentioned a straightforward solution that can compute $r^*$ in $O(n^{4/3} \log^3 n)$ time for both the unweighted and the weighted cases in the $L_2$ metric, by using the distance selection algorithm of Katz and Sharir~\cite{ref:KatzAn97} to perform binary search on all interpoint distances of $P$. In this paper, we gave two algorithms for the $L_2$ unweighted case and their time complexities are $O(\lfloor \lambda \rfloor \cdot n \log n)$ and $O(n^{5/4} \log^{7/4} n)$, respectively; we also gave an algorithm of $O(n^{5/4} \log^{5/2} n)$ time for the $L_2$ weighted case. In addition, we solve the $L_1$ RSP problem in $O(n \log^3 n)$ time for both the unweighted and weighted cases.

Since the original reporting of our results,\footnote{Our algorithms for the $L_2$ unweighted case were included in~\cite{ref:WangRe21}; our results for the $L_2$ weighted case and the $L_1$ problem have been presented in the 29th Fall Workshop on Computational Geometry (FWCG 2021) and has also been accepted in~\cite{ref:WangRe22}. Note that the second algorithm for the $L_2$ unweighted case runs in $O(n^{5/4} \log^{2} n)$ time in~\cite{ref:WangRe21}; in this full version, we slightly improve the time to $O(n^{5/4} \log^{7/4} n)$ by changing the threshold for defining large cells from $n^{3/4}$ to $(n/\log n)^{3/4}$ in Section~\ref{sec:second}.} some exciting progress has been made by Katz and Sharir~\cite{ref:KatzEf21arXiv}, who proposed randomized algorithms of $O(n^{6/5+\epsilon})$ expected time for the $L_2$ RSP problem for both the unweighted and weighted cases, for any arbitrarily small $\epsilon>0$.\footnote{It is not explicitly stated in~\cite{ref:KatzEf21arXiv} that the algorithm is randomized. A key subroutine used in the algorithm is Theorem~1, which is originally from \cite{ref:AvrahamTh15} and is a randomized algorithm (see Section~4 in \cite{ref:AvrahamTh15}).} Note that all our results are deterministic.

Note that reverse/inverse shortest path problems have been studied in the literature under various problem settings. Roughly speaking, the problems are to modify the graph (e.g., modify some edge weights) so that certain desired constraints related to shortest paths in the graph can be satisfied, e.g.,~\cite{ref:BurtonOn92,ref:ZhangCo03}. Our reverse shortest path problem in unit-disk graphs may find applications in scenarios like the following. Consider $G_r(P)$ as an $L_2$ unit-disk intersection graph representing a wireless sensor network in which each disk represents a sensor and two sensors can communicate with each other (e.g., directly transmit a message) if there is an edge connecting them in $G_r(P)$. The disk radius is proportional to the energy of the sensor. For two specific sensors $s$ and $t$, suppose we want to know the minimum energy for all sensors so that $s$ and $t$ can transmit messages to each other within $\lambda$ steps for a given value $\lambda$. It is easy to see that this is equivalent to our $L_2$ RSP problem in the unweighted case. If the latency of transmitting a message between two neighboring sensors is proportional to their Euclidean distance and we want to know the minimum energy for all sensors so that the total latency of transmitting messages between $s$ and $t$ is no more than a target value $\lambda$, then the problem becomes the weighted case.

In addition to the shortest path problem, many other problems of unit-disk graphs have also been studied, i.e. clique~\cite{ref:ClarkUn90}, independent set~\cite{ref:MatsuiAp98}, distance oracle~\cite{ref:ChanAp18, ref:GaoWe05}, diameter~\cite{ref:ChanAl16, ref:ChanAp18, ref:GaoWe05}, etc. Comparing to general graphs, many problems can be solved efficiently in unit-disk graphs by exploiting their underlying geometric structures, although there are still problems that are NP-hard for unit-disk graphs and other geometric intersection graphs, e.g.,~\cite{ref:deBergAf18, ref:ClarkUn90}.

\subsection{Our approach}
\label{subsec:OurApproach}

As the length of any path in $G_r(P)$ is an integer in the unweighted case, the length of a path of $G_r(P)$ is at most $\lambda$ if and only if the length of the path is at most $\lfloor \lambda\rfloor$; therefore, we can replace $\lambda$ in the unweighted problem by $\lfloor \lambda\rfloor$.
In the following, we simply assume that $\lambda$ is an integer in the
unweighted case.
Recall that our goal is to compute $r^*$, which must be equal to the distance of two
points in $P$ in both the unweighted and weighted cases. Given a value
$r$, {\em the decision problem} is to decide whether $r\geq r^*$. It
is not difficult to see that $r\geq r^*$ if and only if the distance
of $s$ and $t$ in $G_r(P)$ is at most $\lambda$. Therefore, the
decision problem can be solved efficiently by using the
shortest path algorithm for the corresponding case~\cite{ref:CabelloSh15,ref:ChanAl16}. More specifically, with $O(n\log n)$-time
preprocessing (to sort the points of $P$), given any $r$, whether
$r\geq r^*$ can be decided in $O(n)$ time for the $L_2$ unweighted unit-disk graphs by
the algorithm of Chan and Skrepetos~\cite{ref:ChanAl16}.
For the $L_2$ weighted case, the decision problem can be solved
in $O(n \log^2 n)$ time by Wang and Xue's shortest path algorithm~\cite{ref:WangNe20}. As in the
$L_2$ unweighted case, the decision problem in the $L_1$ unweighted case can be solved in $O(n\log n)$
time by applying the SSSP algorithms for the $L_2$ unweighted case~\cite{ref:CabelloSh15,ref:ChanAl16,ref:WangRe21} (or $O(n)$ time after $O(n\log n)$ time preprocessing for
sorting the points of $P$~\cite{ref:ChanAl16}). The decision problem in the $L_1$ weighted case can be solved in $O(n \log n)$ time as well~\cite{ref:WangAn21}.

Since $r^*$ must be equal to the distance of two points of $P$, we can find $r^*$ by doing binary search on the set of pairwise distances of all points of $P$. Given any $1 \leq k \leq \binom{n}{2}$, the distance selection algorithm of Katz and Sharir~\cite{ref:KatzAn97} can compute the $k$-th smallest $L_2$ distance among all pairs of points of $P$ in $O(n^{4/3}\log^2 n)$ time. Using this algorithm, the binary search can find $r^*$ in $O(n^{4/3}\log^3 n)$ time for both the $L_2$ unweighted and weighted cases. This is the algorithm mentioned in~\cite{ref:CabelloSh15}.

Our algorithms for the $L_2$ RSP problem are based on parametric search~\cite{ref:ColeSl87,ref:MegiddoAp83}, by parameterizing the decision algorithm of Chan and Skrepetos~\cite{ref:ChanAl16} (which we refer to as the CS algorithm) in the unweighted case, and parameterizing the decision algorithm of Wang and Xue~\cite{ref:WangNe20} (which we refer to as the WX algorithm) in the weighted case. For the $L_1$ RSP problem, we use an approach similar to the $L_2$ distance selection algorithm in~\cite{ref:KatzAn97}. 
Below is an overview on our algorithms.

\paragraph{The $L_2$ unweighted case.}
The CS algorithm first builds a grid in the plane and then runs the breadth-first-search (BFS) algorithm with the help of the grid; in the $i$-th step of the BFS, the algorithm finds the set of points of $P$ whose distances from $s$ in $G_r(P)$ are equal to $i$. Although we do not know $r^*$, we run the CS algorithm on a parameter $r$ in an interval $(r_1,r_2]$ such that each step of the algorithm behaves the same as the CS algorithm running on $r^*$.
The algorithm terminates after $t$ is reached, which will happen within $\lambda$ steps. In each step, we use the CS algorithm to compare $r^*$ with certain {\em critical values}, and the interval $(r_1,r_2]$ will be shrunk based on the results of these comparisons. Once the algorithm terminates, $r^*$ is equal to $r_2$ of the current interval $(r_1,r_2]$. With the linear-time decision algorithm (i.e., the CS algorithm~\cite{ref:ChanAl16}), each step runs in $O(n\log n)$ time. The total time of the algorithm is $O(\lambda\cdot n\log n)$.

The above algorithm is only interesting when $\lambda$ is relatively small. In the worst case, however, $\lambda$ can be $\Theta(n)$, which would make the running time become $O(n^2\log n)$. Next, by combining the strategies of the parametric search and the $L_2$ distance selection algorithm~\cite{ref:KatzAn97}, we derive a better algorithm. The main idea is to partition the cells of the grid in the CS algorithm into two types: {\em large cells}, which contain at least $(n / \log n)^{3/4}$ points of $P$ each, and {\em small cells} otherwise. For small cells, we process them using the above binary search algorithm with the $L_2$ distance selection algorithm~\cite{ref:KatzAn97}; for large cells, we process them using the above parametric search techniques. This works out due to the following observation. On the one hand, the number of large cells is relatively small (at most $O(n^{1/4} \log^{3/4} n)$) and thus the number of steps using the parametric search is also small. On the other hand, each small cell contains relatively few points of $P$ (at most $O((n / \log n)^{3/4})$) and thus the total time we spend on the $L_2$ distance selection algorithm is not big. The threshold value $(n / \log n)^{3/4}$ is carefully chosen so that the total time for processing the two types of cells is minimized. In addition, instead of applying the $L_2$ distance selection algorithm~\cite{ref:KatzAn97} directly, we find that it suffices to use only a subroutine of that algorithm, which not only simplifies the algorithm but also reduces the total time by a logarithmic factor. All these efforts lead to an $O(n^{{5}/{4}} \log^{7/4} n)$ time algorithm to compute $r^*$.

\paragraph{The $L_2$ weighted case.}
Our algorithm for the $L_2$ weighted case also follows the parametric search scheme, by parameterizing the WX algorithm~\cite{ref:WangNe20} instead.
Like the unweighted case, we run the decision algorithm (i.e., the WX algorithm) with a parameter $r\in (r_1,
r_2]$ by simulating the decision algorithm on the unknown $r^*$. At each
step of the algorithm, we call the decision algorithm on certain critical values $r$ to compare $r$ and $r^*$, and the algorithm will proceed accordingly based on the result of the comparison. The
interval $(r_1,r_2]$ will also be shrunk after these comparisons but
is guaranteed to contain $r^*$ throughout the algorithm.
The algorithm terminates once the point $t$ is reached, at which moment we can prove that $r^*$ is equal to $r_2$ of the current interval $(r_1,r_2]$. The parametric search algorithm runs in $\Omega(n^2)$ time because $t$ may be reached after $\Theta(n)$ steps.
To further reduce the time, similarly to the $L_2$ unweighted case, we combine the strategies of the parametric search and the $L_2$ distance selection techniques~\cite{ref:KatzAn97}. The cells of the grid built in the algorithm are partitioned into large and small cells, but with a different threshold of $n^{3/4}\log^{3/2}n$.
With this approach, the runtime of the algorithm can be bounded by $O(n^{5/4} \log^{5/2} n)$.

\paragraph{The $L_1$ cases.} We use an approach similar to the $L_2$ distance
selection algorithm in~\cite{ref:KatzAn97}. Let $\Pi$ denote the set of all pairwise distances of all points of $P$.
In light of the observation that $r^*$ is in $\Pi$, each iteration
of our algorithm computes an interval $(a_j,b_j]$ (initially,
$a_0=-\infty$ and $b_0=\infty$) such that $r^*\in (a_j,b_j]$ and the
number of values of $\Pi$ in $(a_j,b_j]$ is a constant fraction of the
number of values of $\Pi$ in $(a_{j-1},b_{j-1}]$. In this way, $r^*$
can be found within $O(\log n)$ iterations. Each iteration will call
the corresponding decision algorithm~\cite{ref:ChanAl16, ref:WangAn21} to perform binary search on certain
values.
The total time of the algorithm for both the unweighted and weighted cases is $O(n\log^3 n)$.

A by-product of our technique is an $O(n\log^3 n)$ time
algorithm that can compute the $k$-th smallest $L_1$
distance among all pairs of points of $P$, for
any given $k$ with $1\leq k\leq \binom{n}{2}$. As mentioned before,
the $L_2$ version of the problem can be solved in $O(n^{4/3}\log^2 n)$
time~\cite{ref:KatzAn97}.

\paragraph{Outline.} The rest of the paper is organized as follows. Section~\ref{sec:pre} defines notation and reviews the CS algorithm. Our first algorithm for the $L_2$ unweighted case is presented in Section~\ref{sec:first} while the second one is described in Section~\ref{sec:second}. Section~\ref{sec:WeightedRSPalgorithm} solves the $L_2$ weighted RSP problem. Section~\ref{sec:L1RSP} gives our algorithm for the $L_1$ RSP problem for both the unweighted and weighted cases. Section~\ref{sec:con} concludes with remarks showing that our techniques can be readily extended to solve a more general ``single-source'' version of the RSP problem.

\section{Preliminaries}
\label{sec:pre}

Throughout the paper, we will use ``points of $P$'' and ``vertices of the graph $G_r(P)$'' interchangeably.
For any parameter $r$, let $d_r(p,q)$ denote the distance of two vertices $p$ and $q$ in $G_r(P)$. It is easy to see that $d_r(p,q)\leq d_{r'}(p,q)$ if $r\geq r'$.

For any two points $p$ and $q$ in the plane, let $\lVert p-q \rVert$ denote their Euclidean distance. For any subset $P'$ of $P$ and any region $R$ in the plane, we use $P'(R)$ or $P'\cap R$ to refer to the subset of points $P'$ contained in $R$. For any point $p$, let $x(p)$ and $y(p)$ denote its $x$- and $y$-coordinates, respectively.

We next review the CS algorithm~\cite{ref:ChanAl16}, which will help understand our RSP algorithms given later. Suppose we have a sorted list of $P$ by $x$-coordinate and another sorted list of $P$ by $y$-coordinate. Given a parameter $r$ and a source point $s\in P$, the CS algorithm can compute in $O(n)$ time the distances from $s$ to all other points of $P$ in $G_r(P)$.

The first step is to compute a grid $\Psi_r(P)$ of square cells whose side lengths are $r/\sqrt{2}$. A cell $C'$ of $\Psi_r(P)$ is a {\em neighbor} of another cell $C$ if the minimum distance between a point of $C$ and a point of $C'$ is at most $r$. Note that the number of neighbors of each cell of $\Psi_r(P)$ is $O(1)$ (e.g., see Fig.~\ref{fig:grid}) and the distance between any two points in each cell is at most $r$.

\begin{figure}[t]
\begin{minipage}[t]{\textwidth}
\begin{center}
\includegraphics[height=2.0in]{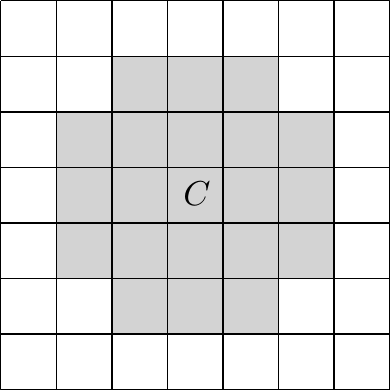}
\caption{\footnotesize The grey cells are all neighbor cells of $C$.}
\label{fig:grid}
\end{center}
\end{minipage}
\end{figure}

Next, starting from the point $s$, the algorithm runs BFS in $G_r(P)$ with the help of the grid $\Psi_r(P)$. Define $S_i$ as the subset of points of $P$ whose distances in $G_r(P)$ from $s$ are equal to $i$.
Initially, $S_0=\{s\}$. Given $S_{i-1}$, the $i$-th step of the BFS is to compute $S_{i}$ by using $S_{i-1}$ and the grid $\Psi_r(P)$, as follows. If a point $p$ is not in $\bigcup_{j=0}^{i-1}S_j$, we say that $p$ has not been {\em discovered} yet. For each cell $C$ that contains at least one point of $S_{i-1}$, we need to find points that are not discovered yet and at distances at most $r$ from the points of $S_{i-1}\cap C$ (i.e., the points of $S_{i-1}$ in $C$); clearly, these points are either in $C$ or in the neighbor cells of $C$. For points of $P(C)$, since every two points of $C$ are within distance $r$ from each other, we add all points of $P(C)$ that have not been discovered to $S_i$. For each neighbor cell $C'$ of $C$, we need to solve the following subproblem: find the points of $P(C')$ that are not discovered yet and within distance at most $r$ from the points of $S_{i-1}\cap C$. Since $C'$ and $C$ are separated by either a vertical line or a horizontal line, we essentially have the following subproblem.

\begin{subproblem}\label{subpro:10}
Given a set of $n_r$ red points below a horizontal line $\ell$ and a set of $n_b$ blue points above $\ell$, both sorted by $x$-coordinate, determine for each blue point whether there is a red point at distance at most $r$ from it.
\end{subproblem}

The subproblem can be solved in $O(n_r+n_b)$ time as follows. For each red point $p$, the circle of radius $r$ centered at $p$ has at most one arc above $\ell$ (we say that this arc is defined by $p$). Let $\Gamma$ be the set of these arcs defined by all red points.
Since all arcs of $\Gamma$ have the same radius and all red points are below $\ell$, every two arcs intersect at most once and the arcs above $\ell$ are $x$-monotone. Further, as all red points are sorted already by $x$-coordinate, the upper envelope of $\Gamma$, denoted by $\calU$, can be computed in $O(n_r)$ time by an algorithm similar in spirit to Graham's scan. Then, it suffices to determine whether each blue point is below $\calU$, which can be done in $O(n_r+n_b)$ time by a linear scan. More specifically, we can first sort the vertices of $\calU$ and all blue points. After that, for each blue point $p$, we know the arc of $\calU$ that spans $p$ (i.e., $x(p)$ is between the $x$-coordinates of the two endpoints of the arc), and thus we only need to check whether $p$ is below the arc. In summary, solving the subproblem involves three subroutines: (1) compute $\calU$; (2) sort all vertices of $\calU$ with all blue points; (3) for each blue point $p$, determine whether it is below the arc of $\calU$ that spans $p$.

The above computes the set $S_{i}$. Note that if $S_{i}=\emptyset$, then we can stop the algorithm because all points of $P$ that can be reached from $s$ in $G_r(P)$ have been computed. For the running time, notice that points of $P$ in each cell of the grid $\Psi_r(P)$ can be involved in at most two steps of the BFS. Further, since each grid cell has $O(1)$ neighbors, the total time of the BFS algorithm is $O(n)$.

In order to achieve $O(n)$ time for the overall algorithm, the grid $\Psi_r(P)$ must be implicitly constructed. The CS algorithm~\cite{ref:ChanAl16} does not provide any details about that. There are various ways to do so. Below we present our method, which will facilitate our algorithm in the next section.

The grid $\Psi_r(P)$ we are going to build is a rectangle that is partitioned into square cells of side lengths $r/\sqrt{2}$ by $O(n)$ horizontal and  vertical lines. These partition lines will be explicitly computed. Let $P'$ be the subset of points of $P$ located in $\Psi_r(P)$. $P'$ has the following property: for each $p\in P\setminus P'$, $p$ cannot be reached from $s$ in $G_r(P)$, i.e., the distances from $s$ to the points of $P\setminus P'$ in $G_r(P)$ are infinite. Let $\calC$ denote the set of cells of $\Psi_r(P)$ that contain at least one point of $P$. For each cell $C\in \calC$, let $N(C)$ denote the set of neighbors of $C$ in $\calC$.
The information computed in the following lemma suffices for implementing the above BFS algorithm in linear time.

\begin{lemma}\label{lem:grid}
Suppose we have a sorted list of $P$ by $x$-coordinate and another sorted list of $P$ by $y$-coordinate.
Both $P'$ and $\calC$, along with all vertical and horizontal partition lines of $\Psi_r(P)$, can be computed in $O(n)$ time. Further, with $O(n)$ time preprocessing, the following can be achieved:
\begin{enumerate}
  \item Given any point $p\in P'$, the cell of $\calC$ that contains $p$ can be obtained in $O(1)$ time.
  \item Given any cell $C\in \calC$, the neighbor set $N(C)$ can be obtained in $O(|N(C)|)$ time.
  \item Given any cell $C\in \calC$, the subset $P(C)$ of $P$ can be obtained in $O(|P(C)|)$ time.
\end{enumerate}
\end{lemma}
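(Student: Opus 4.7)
The plan is to construct the rectangle $\Psi_r(P)$ and the sets $P'$ and $\calC$ via two nested pruning scans, then set up per-cell linked structures that answer the three query types.

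\textbf{Stage 1 (rectangle and $P'$).} I would start from $s$ and walk outward in the given $x$-sorted list, extending in both directions as long as successive $x$-gaps are at most $r$; this will yield an interval $[a,b]$ such that any point with $x$-coordinate outside $[a,b]$ has $x$-distance exceeding $r$ from every point in $[a,b]$ and is therefore unreachable from $s$ in $G_r(P)$. Let $P_x$ be the surviving subset; obtaining its $y$-sorted order is one filter through the $y$-sorted list of $P$. A second outward walk (now on $y$) produces an analogous interval $[a',b']$, and I would set $P'=\{p\in P_x: y(p)\in[a',b']\}$. The rectangle $\Psi_r(P)$ is the smallest rectangle covering $[a,b]\times[a',b']$ whose side lengths are integer multiples of $r/\sqrt{2}$, with partition lines at $a+i\cdot r/\sqrt{2}$ and $a'+j\cdot r/\sqrt{2}$. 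Since $b-a\leq(|P_x|-1)r$ and similarly for $b'-a'$, there are $O(n)$ vertical and $O(n)$ horizontal partition lines.

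\textbf{Stage 2 ($\calC$ and per-point data).} For each $p\in P'$ I would compute its column index $c(p)=\lfloor(x(p)-a)/(r/\sqrt{2})\rfloor$ and its row index $\rho(p)$ analogously. A single left-to-right pass of the $x$-sorted $P'$ groups the points into non-empty columns in column-index order; a single pass of the $y$-sorted $P'$ then routes each point, via $c(p)$, into the correct column list, so that within each column the points appear in $y$-order. Finally, a linear scan per column splits the list into the cells of $\calC$ and records each $P(C)$ together with a back-pointer from each $p$ to its cell. Items (1) and (3) then follow directly in $O(1)$ and $O(|P(C)|)$ time, respectively.

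\textbf{Stage 3 (neighbor lists for item (2)).} For every pair of non-empty columns $(c,c')$ with $0\leq c'-c\leq 2$ I would walk in parallel down their row-sorted cell lists, matching each cell of column $c$ with those cells of column $c'$ whose row indices differ by at most $2$ and appending each matched cell to the other's neighbor list. Because the pointers advance monotonically, the work for each pair is $O(|\calC_c|+|\calC_{c'}|)$; since each column participates in at most five such pairs, the total preprocessing time is $O(|\calC|)=O(n)$. Each $N(C)$ then sits in an explicit constant-length list, giving (2) in $O(|N(C)|)$ time.

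The main obstacle will be the correctness of the sequential $x$-then-$y$ pruning. One might worry that a point with $y\notin[a',b']$ but $x\in[a,b]$ could still be reachable from $s$ via a longer path. The key observation is that every reachable point already lies in $P_x$, so any path from $s$ to a point outside $P'$ would, at some edge, have to cross the $y$-boundary $y=a'$ or $y=b'$ while staying inside $P_x$; but by construction the successive $y$-gaps of $P_x$ across that boundary exceed $r$, contradicting $|\Delta y|\leq r$ on an edge. The remaining details — implicit grid coordinates, back-pointers, and the monotone-sweep correctness — are routine.
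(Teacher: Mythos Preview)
Your proposal is correct and follows the same high-level plan as the paper: prune points unreachable from $s$ via gap-based outward walks, lay down $O(n)$ grid lines, assign grid coordinates, bucket the points into cells, and build neighbor lists by a linear scan. The implementation details differ in a few places. The paper prunes in $x$ and in $y$ independently on the full set $P$, which sidesteps the reachability argument you need for your sequential $x$-then-$y$ pruning on $P_x$ (your argument is sound, but the paper's version is simpler). The paper anchors the grid lines at $s$ rather than at your leftmost survivor $a$; this is immaterial for Lemma~\ref{lem:grid} itself but is exploited in Lemma~\ref{lem:20}, where the partition-line positions must depend on $r$ in the clean form $x(s)+j\cdot r/\sqrt{2}$. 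Finally, the paper orders the cells by an explicit radix sort on the grid coordinates and then computes all $N(C)$ in a single pass of that sorted list while maintaining one pointer in each of the five relevant rows, rather than your pairwise column merges; both yield $O(|\calC|)$ total work.
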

\begin{proof}
Let $P_1$ be the subset of $P$ to the right of $s$ including $s$. Let $s=p_1,p_2,\ldots,p_m$ be the list of $P_1$ sorted from left to right, with $m=|P_1|$. As the points of $P$ are given sorted, we can obtain the above sorted list in $O(n)$ time. During the algorithm, we will compute a subset $Q\subseteq P$. Initially, we set $Q=\emptyset$. After the algorithm finishes, we will have $P'=P\setminus Q$.

We find the smallest index $i\in [1,m-1]$ such that $x(p_{i+1})-x(p_i)>r$ (let $i=m$ if such index does not exist). It is easy to see for any point $p_j$ with $j\in [i+1,m]$, there is no path from $s$ to $p_j$ in $G_r(P)$. We add all points $p_{i+1},p_{i+2},\ldots, p_m$ to $Q$ and let $P_1'=\{p_1,\ldots,p_i\}$. Hence, $P_1'$ has the following property: $x(p_{j+1})-x(p_j)\leq r$ for any two adjacent points $p_j$ and $p_{j+1}$.
Next, we compute the vertical partition lines of $\Psi_r(P)$ to the right of $s$.
We first put a vertical line through $s$. Then, we keep adding a vertical line to the right with horizontal distance $r/\sqrt{2}$ from the previous vertical line until the current vertical line is to the right of $p_i$. Due to the above property of $P_1'$, the number of vertical lines thus produced is at most $2m$.

The above computes a set of vertical partition lines to the right of $s$ by considering the points of $P_1$ from left to right. Let $P_2=P\setminus P_1$; we also add $s$ to $P_2$. Symmetrically, we compute a set of vertical partition lines to the left of $s$ by considering the points of $P_2$ from right to left (also starting from $s$). Analogously, the algorithm will compute a subset $P_2'$ of $P_2$ and more points may be added to $Q$. Let $L_v$ be the set of all these vertical lines produced above for both $P_1$ and $P_2$. $L_v$ is the set of vertical partition lines of our grid $\Psi_r(P)$. Clearly, $|L_v|=O(n)$.

Similarly, by considering the points of $P$ in the list sorted by $y$-coordinate, we can compute a set $L_h$ of horizontal partition lines of $\Psi_r(P)$, with $|L_h|=O(n)$. Also, more points may be added to $Q$ in the process.

Let $\Psi_r(P)$ be the rectangle bounded by the rightmost and leftmost vertical lines of $L_v$ as well as the topmost and bottommost horizontal lines of $L_h$, along with the square cells inside and partitioned by the lines of $L_v\cup L_h$. Let $P'=P\setminus Q$. By our definition of $Q$, for each $p\in Q$, $p$ cannot be reached from $s$ in $G_r(P)$, and $P'$ is exactly the subset of points of $P$ located inside $\Psi_r(P)$.

For each cell $C$ of $\Psi_r(P)$, we define its {\em grid-coordinate} as $(i,j)$ if $C$ is in the $i$-th row  and $j$-th column of $\Psi_r(P)$; we say that $i$ is the {\em row-coordinate} and $j$ is the {\em column-coordinate}. For each cell, we consider its grid-coordinate as its ``ID''.

By scanning the points of $P'$ and the vertical lines of $L_v$ from left to right and then scanning $P'$ and the horizontal lines of $L_h$ from top to bottom, we can compute in $O(n)$ time for each point of $P'$ the (grid-coordinate of the) cell of $\Psi_r(P)$ that contains it (to resolve the boundary case, if a point $p$ is on a vertical edge shared by two cells, then we assume $p$ is contained in the right cell only, and if $p$ is on a horizontal edge shared by two cells, then we assume $p$ is contained in the top cell only). After that, given any point $p\in P'$, the  cell of $\Psi_r(P)$ that contains $p$ can be obtained in $O(1)$ time.

To compute the set $\calC$, we do the following. Initialize $\calC=\emptyset$. Then, for each point $p\in P'$, we add the cell that contains $p$ into $\calC$. Note that $\calC$ may be a multi-set. To remove the duplicates, we first sort all cells of $\calC$ by their grid-coordinates in lexicographical order (i.e., compare row-coordinates first and then column-coordinates). This sorting can be done in $O(n)$ time by radix sort~\cite{ref:CLRS09}, because both the row-coordinate and the column-coordinate of each cell are in the range $[1,O(n)]$. Now we can remove duplicates by simply scanning the sorted list of all cells, and the resulting set is $\calC$. Also, during the scanning process, we can obtain for each cell $C$ of $\calC$ the subset $P(C)$ of points of $P$ contained in $C$ (each occurrence of $C$ in the sorted list corresponds to a point of $P$ that is contained in $C$). All these can be done in $O(n)$ time. After that, given each cell $C$ of $\calC$, we can output $P(C)$ in $O(|P(C)|)$ time.

It remains to compute the neighbor set $N(C)$ for each cell $C\in \calC$. This can be done in $O(n)$ time by scanning the above sorted list of $\calC$ (after the duplicates are removed). Indeed, notice that scanning the sorted list is equivalent to scanning the non-empty cells of $\Psi_r(P)$ row by row and from left to right in each row. Recall that  the cells of $N(C)$ are in at most five rows of the grid (e.g., see Fig.~\ref{fig:grid}): the row containing $C$, two rows above it, and two rows below it; each such row contains at most fives cells of $N(C)$. Based on this observation, we scan the cells in the sorted list of $\calC$. For each cell $C$ under consideration during the scan, suppose its grid-coordinate is $(i,j)$. During the scan, we maintain a cell $(i',j')\in \calC$ in each row $i'$ for $i'\in \{i-2,i-1,i,i+1,i+2\}$ such that $j'$ is closest to $j$, i.e., $|j'-j|$ is minimized (e.g., for $i'=i$, we have $j'=j$). Using these cells, we can find $N(C)$ in $O(1)$ time (indeed, for each row $i'\in\{i-2,i-1,i,i+1,i+2\}$, the cells of $N(C)$ contained in row $i'$ are within five cells of $(i',j')$ in the sorted list of $\calC$). The scan can be implemented in $O(n)$ time. After that, $N(C)$ for all cells $C\in\calC$ are computed.
This proves the lemma.
\end{proof}

To make the description concise, in the following, whenever we say ``compute the grid $\Psi_r(P)$'' we mean ``compute the grid information of Lemma~\ref{lem:grid}''; similarly, by ``using the grid $\Psi_r(P)$'', we mean ``using the grid information computed by Lemma~\ref{lem:grid}''.

\section{The $L_2$ unweighted case -- the first algorithm}
\label{sec:first}

In this section, we present our $O(\lambda\cdot n\log n)$ time algorithm for the unweighted RSP problem. Given $\lambda$ and $s,t\in P$, our goal is to compute $r^*$, the optimal radius of the disks.

As discussed in Section~\ref{subsec:OurApproach}, our algorithm uses parametric search~\cite{ref:ColeSl87,ref:MegiddoAp83}. But different than the traditional parametric search where parallel algorithms are used, our decision algorithm (i.e., the CS algorithm for the shortest path problem~\cite{ref:ChanAl16}) is inherently sequential. We will run the CS algorithm with a parameter $r$ in an interval $(r_1,r_2]$ by simulating the algorithm on the unknown $r^*$; at each step of the algorithm, the decision algorithm will be invoked on certain {\em critical values} $r$ to compare $r$ and $r^*$, and the algorithm will proceed accordingly based on the results of the comparisons.
The interval $(r_1,r_2]$ always contains $r^*$ and will keep shrinking during the algorithm (note that ``shrinking'' includes the case that the interval does not change). Initially, we set $r_1=0$ and $r_2=\infty$. Clearly, $(r_1,r_2]$ contains $r^*$.

Recall that the CS algorithm has two major steps: build the grid and then run BFS with the help of the grid. Correspondingly, our algorithm also first builds a grid and then runs BFS accordingly using the grid.

\subsection{Building the grid}
\label{sec:grid}

The first step is to build a grid $\Psi(P)$. Our goal is to shrink $(r_1,r_2]$ so that it contains $r^*$ and if $r^*\neq r_2$ (and thus $r^*\in (r_1,r_2)$), then for any $r\in (r_1,r_2)$, $\Psi_r(P)$ has the same {\em combinatorial structure} as $\Psi_{r^*}(P)$, i.e.,
both grids have the same number of columns and the same number of rows, and
a point of $P$ is in the cell of the $i$-th row and $j$-th column of $\Psi_{r^*}(P)$ if and only if it is also in the cell of the $i$-th row and $j$-th column of $\Psi_r(P)$. To this end, we have the following lemma.

\begin{lemma}
    \label{lem:20}
    An interval $(r_1,r_2]$ containing $r^*$ can be computed in $O(n\log n)$ time so that if $r^*\neq r_2$, then  for any $r\in (r_1,r_2)$, the grid $\Psi_r(P)$ has the same combinatorial structure as $\Psi_{r^*}(P)$.
\end{lemma}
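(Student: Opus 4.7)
The plan is to apply parametric search to the grid-construction procedure of Lemma~\ref{lem:grid}, with the CS decision algorithm (which, after an $O(n\log n)$ presort of $P$ by $x$- and $y$-coordinate, answers each query in $O(n)$) playing the role of the oracle. Throughout, I maintain an interval $(r_1,r_2]$ containing $r^*$, starting at $(0,\infty)$, and shrink it by comparing $r^*$ against each critical value encountered while parametrizing the construction on $r$.

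Lemma~\ref{lem:grid} uses two kinds of $r$-dependent comparisons. The first kind are the ``large-gap'' tests ``is $r<x(p_{j+1})-x(p_j)$?'' (and the $y$-analogue) that determine $P_1',P_2',P_3',P_4'$. The $O(n)$ critical values here depend only on the input, so I sort them once, run a single binary search driven by the decision algorithm ($O(\log n)$ queries, $O(n)$ each), and cut the interval accordingly in $O(n\log n)$ time. After this phase, $P'$ is fixed throughout the current $(r_1,r_2]$ and so are the numbers of rows and columns of $\Psi_r(P)$.

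The second kind are the cell-placement tests ``is $x(p)<x(s)+k\cdot r/\sqrt{2}$?'' (and the $y$-analogue) that place each $p\in P'$ into its cell by merging the sorted points of $P'$ with the sorted grid-line positions. Their critical values are $\sqrt{2}(x(p)-x(s))/k$, and there are $O(n)$ such comparisons in all, but the specific $(p,k)$ probed by each comparison depends on the outcomes of earlier comparisons, so the critical values cannot simply be listed in advance. This is the main difficulty: handling the comparisons one by one would cost $\Omega(n^2)$ because each decision call is $\Theta(n)$. The fix is to exploit the merge-like structure: a merge of two sorted $n$-element sequences admits an $O(\log n)$-depth, $O(n)$-work parallel implementation (e.g., Batcher's bitonic merger), so Cole's parametric-search machinery resolves all $O(n)$ comparisons with only $O(\log n)$ calls to the decision algorithm, for $O(n\log n)$ time.

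Combining the two phases yields the desired $(r_1,r_2]$ in $O(n\log n)$ time. By construction, no critical value lies in the open interval $(r_1,r_2)$, so for every $r\in(r_1,r_2)$ every large-gap test and every point-versus-grid-line test returns the same answer as it does at $r=r^*$. In particular $\Psi_r(P)$ and $\Psi_{r^*}(P)$ have the same number of rows and columns, and each point of $P'$ is assigned to the same row and column in both, which is exactly the definition of having the same combinatorial structure.
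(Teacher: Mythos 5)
Your proof is correct in substance but reaches the $O(n\log n)$ bound by a genuinely different mechanism than the paper. The shared difficulty is that the cell-placement critical values $\sqrt{2}\,(x(p)-x(s))/k$ range over $\Theta(n^2)$ candidates, so they cannot all be enumerated and sorted. The paper's proof organizes them as an implicit $m\times 2m$ sorted matrix $M[i,j]=\sqrt{2}\,(x(p_i)-x(p_1))/j$ and locates $r^*$'s gap via the Frederickson--Johnson sorted-matrix searching technique, which needs only $O(\log n)$ calls to the CS decision algorithm plus linear overhead per round. You instead exploit the fact that cell placement is a merge of two sorted sequences and run Cole's parametric search on an oblivious $O(\log n)$-depth merging network, again getting $O(\log n)$ oracle calls. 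Both are valid; the matrix route is arguably more elementary (no sorting/merging networks needed) and is the same machinery the paper reuses elsewhere, while your route generalizes more readily to situations where the critical values do not form a sorted matrix. Two small slips in your write-up, neither fatal: Batcher's bitonic merger does $\Theta(n\log n)$ work, not $O(n)$ (the final bound is unaffected, since Cole's technique on a depth-$O(\log n)$ network with $O(n\log n)$ comparators and an $O(n)$-time oracle still gives $O(n\log n)$); and after your first phase only $P'$ is fixed on $(r_1,r_2)$ --- the numbers of rows and columns are not yet determined, since they depend on which cell the extreme points of $P'$ fall into, and thus are only pinned down by the second phase. On the plus side, you treat the large-gap tests $x(p_{j+1})-x(p_j)>r$ (which determine $P'$) as an explicit separate phase with their own $O(n)$ critical values, a point the paper's proof passes over more quickly.
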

\begin{proof}
Let $P_1$ be the subset of $P$ to the right of $s$ including $s$.
Let $s=p_1,p_2,\ldots,p_m$ be the list of $P_1$ sorted from left to right, with $m=|P_1|$.
Recall from the proof of Lemma~\ref{lem:grid} that $\Psi_{r^*}(P)$ has at most $2m$ vertical partition lines to the right of $s$, and there is a vertical partition line through $s$.

We first implicitly form a sorted matrix and then apply the sorted-matrix searching techniques of Frederickson and Johnson~\cite{ref:FredericksonGe84, ref:FredericksonFi83} to shrink $(r_1,r_2]$. Specifically, we define an $m\times 2m$ matrix $M$ with
$$M[i,j] = \sqrt{2} \cdot \frac{x(p_{i}) - x(p_1)}{j}$$ for all $1 \leq i \leq m$ and $1 \leq j \leq 2m$. It can be verified that $M[i,j] \geq  M[i,j+1]$ and $M[i+1,j] \geq  M[i,j]$ hold. Thus, $M$ is a sorted matrix.
Using the sorted-matrix searching techniques~\cite{ref:FredericksonGe84, ref:FredericksonFi83} with the CS algorithm as the decision algorithm, we can compute in $O(n\log n)$ time the largest value $r_1'$ of $M$ with $r_1'< r^*$ and the smallest value $r_2'$ of $M$ with $r^*\leq r_2'$. By definition, $(r_1',r_2']$ contains $r^*$ and $(r_1',r_2')$ does not contain any value of $M$. We update $r_1=\max\{r_1',r_1\}$ and $r_2=\min\{r_2',r_2\}$. Thus, the new interval $(r_1,r_2]$ shrinks but still contains $r^*$. As $(r_1,r_2)\subseteq(r_1',r_2')$, $(r_1,r_2)$ does not contain any value of $M$.

According to our algorithm of Lemma~\ref{lem:grid}, there is always a vertical partition line through $s$ in $\Psi_{r}(P)$ for any $r$. Let $\Psi^1_{r}(P)$ and $\Psi^2_{r}(P)$ refer to the half grids of $\Psi_{r}(P)$ to the right and left of $s$, respectively; assume that both half grids contain the vertical partition line through $s$. We claim that if $r^*\neq r_2$, then the following hold for any $r\in (r_1,r_2)$: (1) a point of $P_1$ is in the $j$-th column of $\Psi^1_{r^*}(P)$ if and only if it is also in the $j$-th column of $\Psi^1_r$(P); (2) the number of columns of $\Psi^1_{r^*}(P)$ is equal to the number of columns of $\Psi^1_{r}(P)$. We prove the claim below.

\begin{figure}[t]
    \centering
    \begin{minipage}{0.48\textwidth}
        \centering
        \includegraphics[height=2.0in]{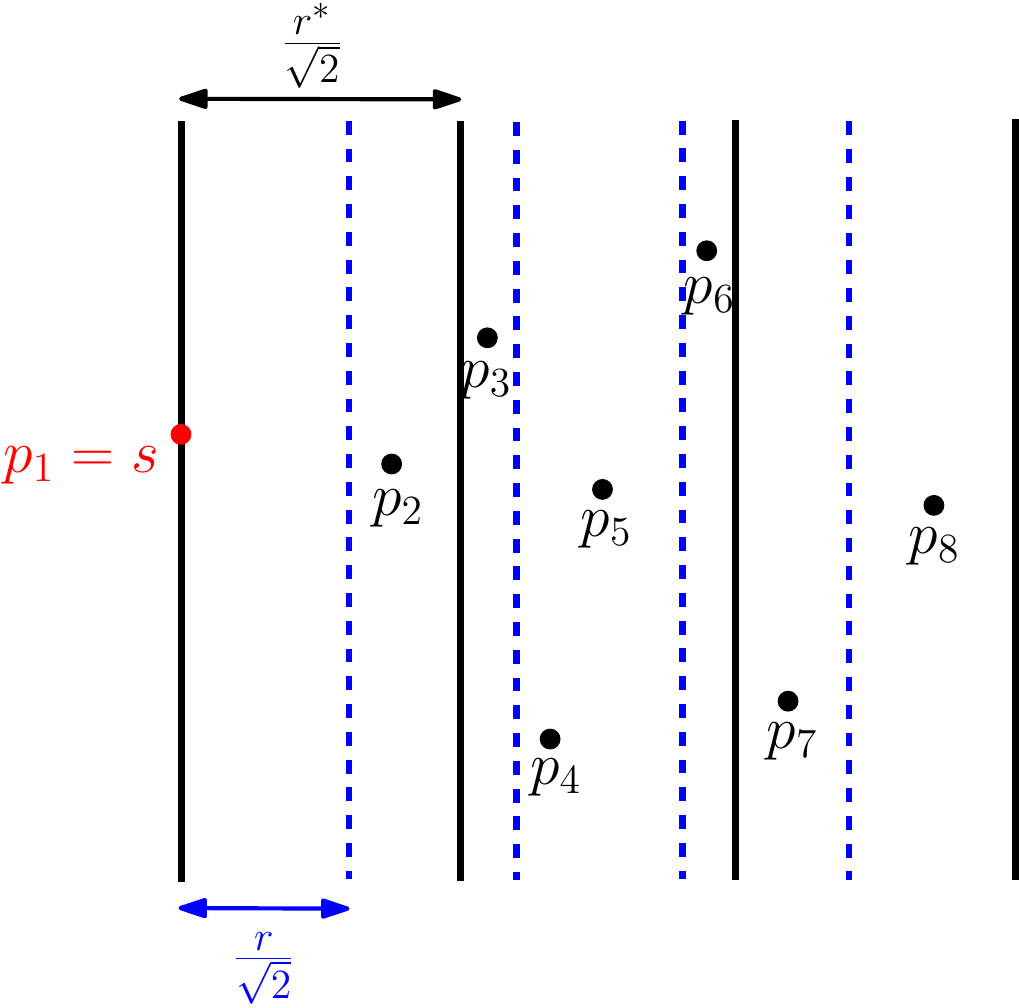}
        \caption{\footnotesize The point $p_7$ is in the 3rd column of $\Psi_{r^*}^1(P)$ while it is in the 4th column of $\Psi_{r}^1(P)$.}
        \label{fig:BuildColumns}
    \end{minipage}
    \hspace{0.05in}
    \begin{minipage}{0.49\textwidth}
        \centering
        \includegraphics[height=2.0in]{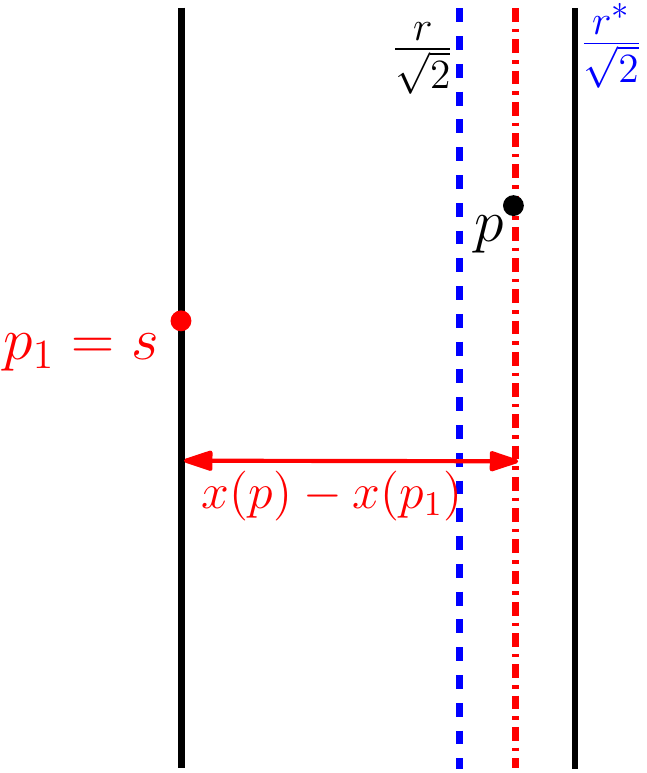}
        \caption{\footnotesize The rightmost line is $\ell$ when $r'=r^*$. When $r'$ decreases from $r^*$ to $r$, $\ell$ will move leftwards and cross $p$.}
        \label{fig:GridCriticalValue}
    \end{minipage}
\end{figure}

Suppose $r^*\neq r_2$. Then, $r^*\in (r_1,r_2)$.
Assume to the contrary that a point $p$ of $P_1$ is in the $j$-th column of $\Psi^1_{r^*}(P)$ for some $j\in [1,2m]$, but $p$ is not in the $j$-th column of $\Psi^1_r(P)$. Then, $p$ is either to the left or to the right of the $j$-th column of $\Psi^1_r(P)$. Without loss of generality, we assume that $p$ is to the right of the $j$-th column of $\Psi^1_r(P)$ (e.g., see Fig.~\ref{fig:BuildColumns}).
This implies that $r<r^*$. Further, if we decrease a value $r'$ gradually from $r^*$ to $r$, then the line $\ell$ will move monotonically leftwards and cross $p$ at some moment, where $\ell$ is the $(j+1)$-th vertical partition line of $\Psi^1_{r'}(P)$ (i.e., $\ell$ is the vertical bounding line of the $j$-th column of $\Psi^1_{r'}(P)$); e.g., see Fig.~\ref{fig:GridCriticalValue}. This further implies that $r/\sqrt{2}< (x(p)-x(p_1))/j < r^*/\sqrt{2}$, and thus, $r< \sqrt{2}\cdot (x(p)-x(p_1))/j < r^*$. On the other hand, since both $r$ and $r^*$ are in $(r_1,r_2)$, we obtain that $\sqrt{2}\cdot (x(p)-x(p_1))/j\in (r_1,r_2)$. Because the interval $(r_1,r_2)$ does not contain any values of $M$, we obtain contradiction as $\sqrt{2}\cdot (x(p)-x(p_1))/j$ is a value of $M$.

    Assume to the contrary that a point $p$ of $P_1$ is in the $j$-th column of $\Psi^1_{r}(P)$ for some $j\in [1,2m]$, but $p$ is not in the $j$-th column of $\Psi^1_{r^*}(P)$. Then, by similar analysis as above, we can obtain contradiction as well. This proves the first part of the claim.

The second part of the claim can actually be implied by the first part. Indeed, assume to the contrary that the number of columns of $\Psi^1_{r^*}(P)$, denoted by $m_{r^*}$, is not equal to the number of columns of $\Psi^1_{r}(P)$, denoted by $m_r$. Without loss of generality, we assume $m_{r^*}<m_r$.
By the algorithm of Lemma~\ref{lem:grid}, $P_1$ has a point $p$ in the last column of $\Psi^1_{r}(P)$, which is the $m_r$-th column. In light of the first part of the claim, $p$ is also in the $m_r$-th column of $\Psi^1_{r^*}(P)$. But this contradicts with that $\Psi^1_{r^*}(P)$ has only $m_{r^*}<m_r$ columns.

The claim is thus proved.

\medskip

The above processes the subset $P_1$ of $P$. Let $P_2=P\setminus P_1$; we add $s$ to $P_2$ as well.
Next, we use the same algorithm as above to process the points of $P_2$ and obtain a smaller interval $(r_1,r_2]$ containing $r^*$ such that if $r^*\neq r_2$, then the following hold for any $r\in (r_1,r_2)$: (1) a point of $P_2$ is in the $j$-th column of $\Psi^2_{r^*}(P)$ if and only if it is also in the $j$-th column of $\Psi^2_r$(P); (2) the number of columns of $\Psi^2_{r^*}(P)$ is equal to the number of columns of $\Psi^2_{r}(P)$. Combining the previous claim for $P_1$, we obtain that the interval $(r_1,r_2]$ contains $r^*$ and if $r^*\neq r_2$, then the following hold for any $r\in (r_1,r_2)$: (1) a point of $P$ is in the $j$-th column of $\Psi_{r^*}(P)$ if and only if it is also in the $j$-th column of $\Psi_r(P)$; (2) the number of columns of $\Psi_{r^*}(P)$ is equal to the number of columns of $\Psi_{r}(P)$.

The above processes the points of $P$ horizontally. We then process them in a vertical manner analogously and further shrink the interval $(r_1,r_2]$ such that it still contains $r^*$ and if $r^*\neq r_2$, then the following hold for any $r\in (r_1,r_2)$: (1) a point of $P$ is in the $i$-th row of $\Psi_{r^*}(P)$ if and only if it is also in the $i$-th row of $\Psi_r(P)$; (2) the number of rows of $\Psi_{r^*}(P)$ is equal to the number of rows of $\Psi_{r}(P)$.
As the interval $(r_1,r_2]$ is shrunk after processing $P$ vertically, we obtain that if $r^*\neq r_2$, then $\Psi_r(P)$ has the same combinatorial structure as $\Psi_{r^*}(P)$ for any $r\in (r_1,r_2)$. This proves the lemma.
\end{proof}

%

Let $(r_1,r_2]$ be the interval computed by Lemma~\ref{lem:20}.
We pick any value $r$ in $(r_1,r_2)$ and compute the grid $\Psi_r(P)$, i.e., compute the grid information of $\Psi_r(P)$ by Lemma~\ref{lem:grid}. By Lemma~\ref{lem:20}, these information is the same as that of $\Psi_{r^*}(P)$ if $r^*\neq r_2$. Below we will use $\Psi(P)$ to refer to the grid information computed above.

\subsection{Running BFS}
\label{subsec:bfs}

For a fixed parameter $r$, we use $S_i(r)$ to denote the set of points of $P$ whose distances from $s$ is equal to $i$ in $G_r(P)$, which is computed in the $i$-th step of the BFS algorithm if we run the CS algorithm with respect to $r$. Initially, we have $S_0(r)=\{s\}$. In the following, using the interval $(r_1,r_2]$ obtained in Lemma~\ref{lem:20}, we run the BFS algorithm as in the CS algorithm with a parameter $r\in (r_1,r_2)$, by simulating the algorithm for $r^*$. The algorithm maintains an invariant that the $i$-th step computes a subset $S_i\subseteq P$ and shrinks $(r_1,r_2]$ so that it contains $r^*$ and if $r^*\neq r_2$ (and thus $r^*\in (r_1,r_2)$), then $S_i=S_i(r)=S_i(r^*)$ for any $r\in (r_1,r_2)$. Initially, we set $S_0=\{s\}$ and thus the invariant holds as $S_0(r)=\{s\}$ for any $r$. As will be seen later, the algorithm stops within $\lambda$ steps and each step takes $O(n\log n)$ time.

Consider the $i$-th step. Assume that we have $S_{i-1}$ and $(r_1,r_2]$, and the invariant holds, i.e., $(r_1,r_2]$ contains $r^*$ and if $r^*\neq r_2$, then $S_{i-1}=S_{i-1}(r)=S_{i-1}(r^*)$ for any $r\in (r_1,r_2)$.
Using the grid $\Psi(P)$, we obtain the grid cells containing the points of $S_{i-1}$. For each such cell $C$, for points of $P$ in $C$, we have the following observation.

\begin{lemma}
Suppose $r^*\neq r_2$. Then, for each point $p\in P(C)$ that has not been discovered by the algorithm yet, i.e., $p\not\in \bigcup_{j=1}^{i-1} S_j$, $p$ is in $S_i(r)$ for any $r\in (r_1,r_2)$.
\end{lemma}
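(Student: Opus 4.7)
The plan is to show that for any $r\in(r_1,r_2)$, the BFS distance from $s$ to $p$ in $G_r(P)$ equals exactly $i$; this is equivalent to $p\in S_i(r)$. I will establish this with matching upper and lower bounds on $d_r(s,p)$, using the invariant maintained in previous steps together with the combinatorial-structure guarantee of Lemma~\ref{lem:20}.

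For the upper bound $d_r(s,p)\le i$: by the assumption on $C$, there is a point $q\in S_{i-1}\cap C$. Because the invariant held at the end of step $i-1$ and $r^*\neq r_2$, we have $S_{i-1}=S_{i-1}(r)$ for every $r\in(r_1,r_2)$, so $d_r(s,q)=i-1$. By Lemma~\ref{lem:20}, the grid $\Psi(P)$ shares its combinatorial structure with $\Psi_r(P)$ for each such $r$, so $p$ and $q$ also lie in a common cell of $\Psi_r(P)$; since the cells of $\Psi_r(P)$ are squares of side length $r/\sqrt{2}$, their diameter is exactly $r$, giving $\lVert p-q\rVert\le r$. Hence $\{p,q\}$ is an edge of $G_r(P)$, and $d_r(s,p)\le d_r(s,q)+1=i$.

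For the lower bound $d_r(s,p)\ge i$: since $p$ has not been discovered, $p\notin\bigcup_{j=0}^{i-1}S_j$. Applying the algorithm's invariant at each of the previous steps (using $r^*\neq r_2$), we obtain $S_j=S_j(r)$ for all $j\le i-1$ and all $r\in(r_1,r_2)$, so $p\notin\bigcup_{j=0}^{i-1}S_j(r)$, meaning $d_r(s,p)\ge i$. Combining the two bounds gives $d_r(s,p)=i$, i.e., $p\in S_i(r)$, as claimed.

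There is no real obstacle here; the statement is essentially a bookkeeping consequence of two facts already proved elsewhere, namely (i) the diameter of a cell in $\Psi_r(P)$ is $r$, and (ii) the combinatorial equivalence between $\Psi(P)$ and $\Psi_r(P)$ established in Lemma~\ref{lem:20}. The only point that requires mild care is invoking the invariant at \emph{all} previous steps (not just step $i-1$) in order to convert the condition ``$p$ has not been discovered'' by the simulated algorithm into the condition ``$p\notin S_j(r)$ for every $j<i$'' for the actual run on parameter $r$.
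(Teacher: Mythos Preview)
Your argument is correct and essentially identical to the paper's proof: pick $q\in S_{i-1}\cap C$, use Lemma~\ref{lem:20} to conclude $\lVert p-q\rVert\le r$ for any $r\in(r_1,r_2)$, and use the invariant at all earlier steps to conclude $p\notin\bigcup_{j\le i-1}S_j(r)$, whence $p\in S_i(r)$. The paper states this directly rather than phrasing it as matching upper and lower bounds on $d_r(s,p)$, but the content is the same.
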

\begin{proof}
Let $q$ be a point of $S_{i-1}$ in $C$. By our algorithm invariant, $(r_1,r_2]$ contains $r^*$. Since $r^*\neq r_2$, $r^*\in (r_1,r_2)$. Let $r$ be any value of $(r_1,r_2)$. In light of Lemma~\ref{lem:20}, both $p$ and $q$ are in the same cell of $\Psi_{r}(P)$, and thus $\lVert p-q \rVert \leq r$. By our algorithm invariant, $S_j=S_j(r)$ for all $0\leq j\leq i-1$. Since $p\not\in \bigcup_{j=1}^{i-1} S_j$, we have $p\not\in \bigcup_{j=1}^{i-1} S_j(r)$. Because $q\in S_{i-1}(r)$ and $\lVert p-q \rVert \leq r$, we obtain that $p\in S_i(r)$.
\end{proof}

Due to the preceding lemma, we add to $S_i$ the points of $P(C)$ that have not been discovered yet. Next, for each neighbor $C'$ of $C$, we need to solve Subproblem~\ref{subpro:10}; we use $\calI$ to denote the set of all instances of this subproblem in the $i$-th step of the BFS.
Consider one such instance. Recall that solving it for a fixed $r$ involves three subroutines. First, compute the upper envelope $\calU$ of the arcs of $\Gamma$ above $\ell$ of all red points. Second, sort all vertices of $\calU$ with all blue points. Third, for each blue point $p$, determine whether it is below the arc of $\calU$ that spans $p$. To solve our problem, we parameterize each subroutine with a parameter $r$ so that the behavior of the algorithm is consistent with that for $r=r^*$ if $r^*\neq r_2$.

\subsubsection{Computing the upper envelope}
\label{subsubsec:ComputingTheUpperEnvelope}

We use $\Gamma(r)$ to denote the set of arcs above $\ell$ defined by the red points with respect to the  radius $r$; similarly, define $\calU(r)$ as the upper envelope of $\Gamma(r)$.

\begin{figure}[t]
    \centering
    \subfloat[The upper envelope is comprised of three arcs centered at $p_1$, $p_2$ and $p_3$.]{
    \label{fig:DetermineUpperEnvelope-a}
    \begin{minipage}[t]{0.3\linewidth}
    \centering
    \includegraphics[scale=0.5]{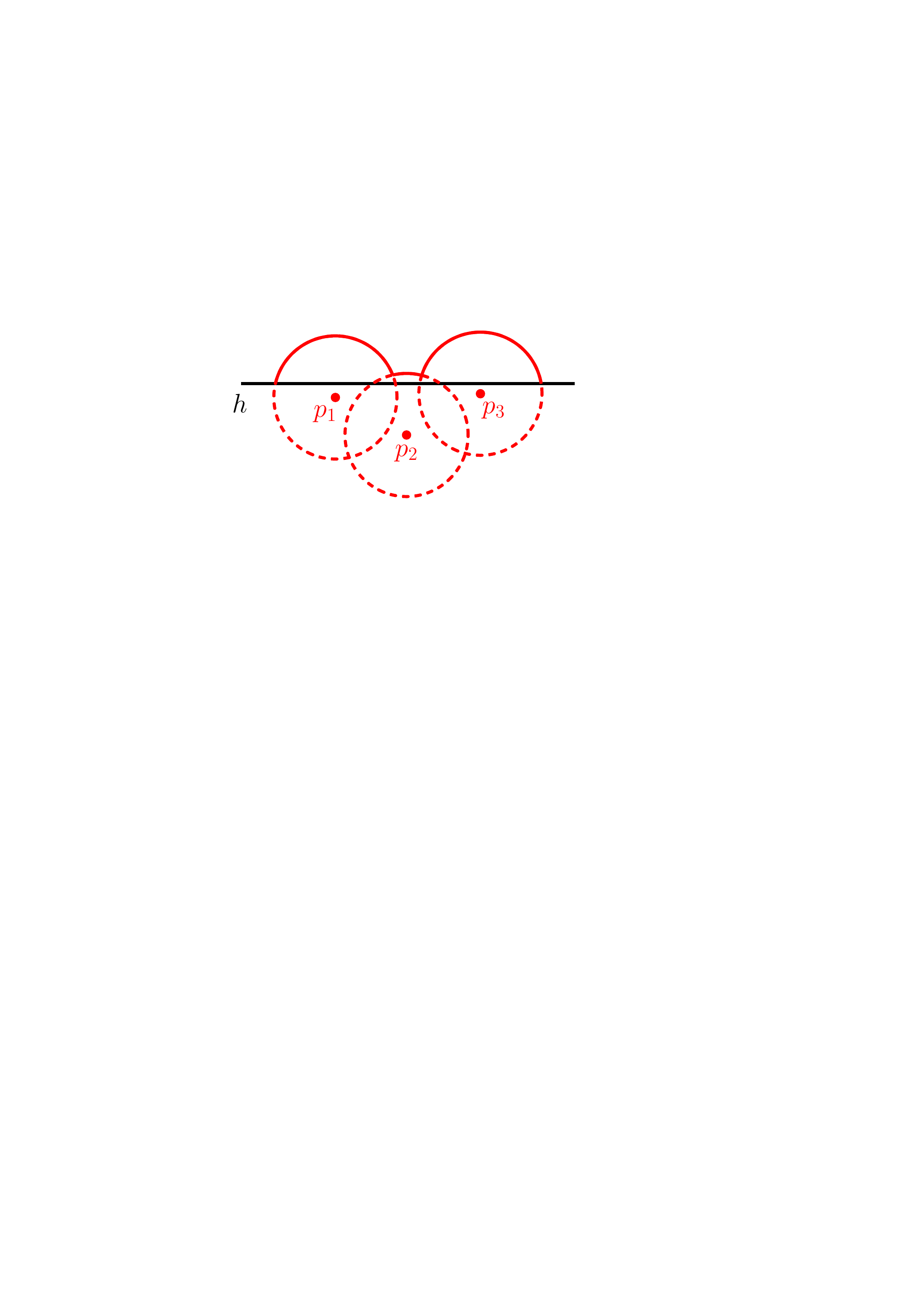}
    \end{minipage}%
    }
    \hspace{0.02\linewidth}
    \subfloat[The moment when the three arcs have a common intersection, which is a vertex of the upper envelope.]{
    \label{fig:DetermineUpperEnvelope-b}
    \begin{minipage}[t]{0.3\linewidth}
    \centering
    \includegraphics[scale=0.5]{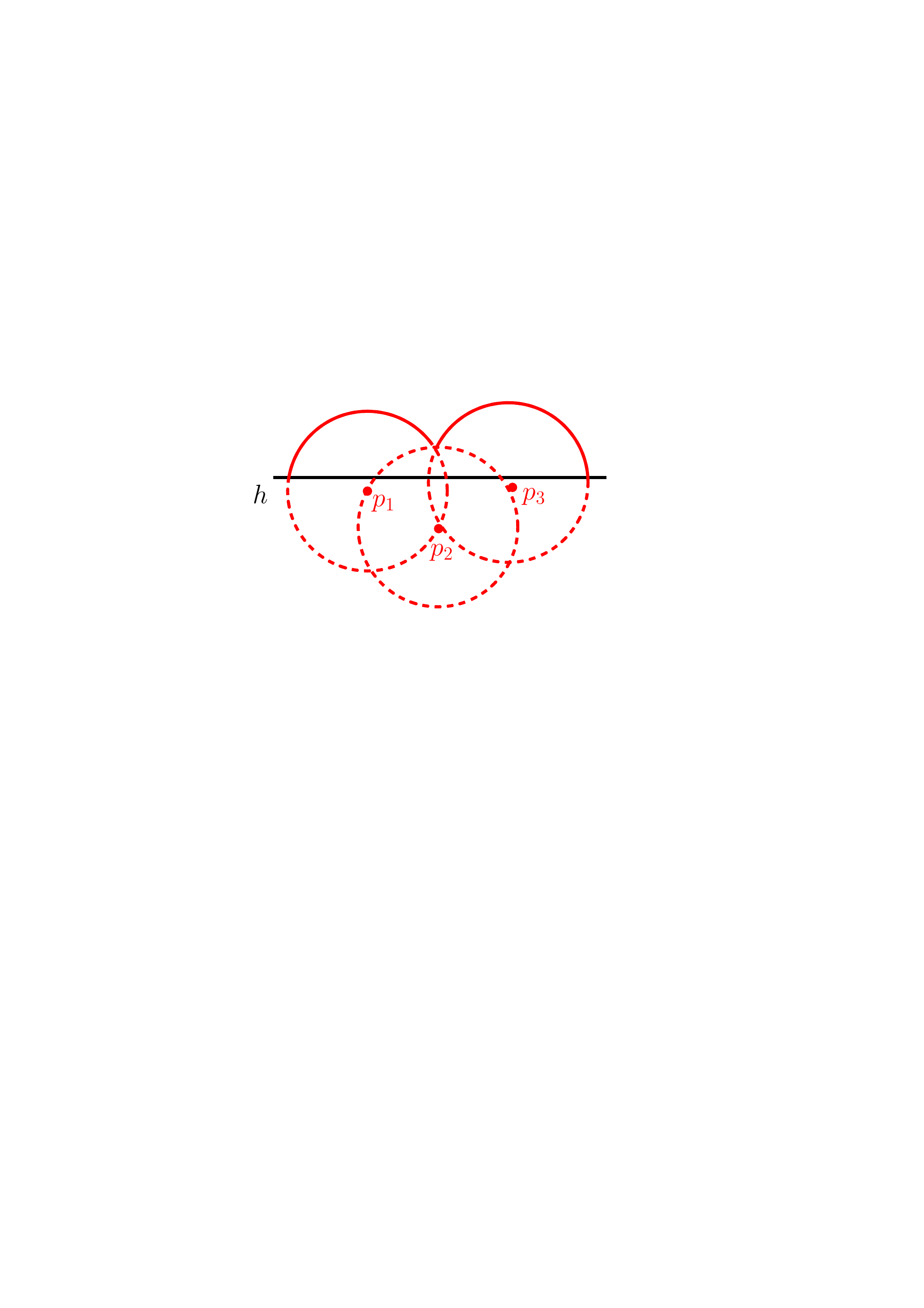}
    \end{minipage}%
    }%
    \hspace{0.02\linewidth}
    \subfloat[The middle arc centered at $p_2$ disappears from the upper envelope.]{
    \label{fig:DetermineUpperEnvelope-c}
    \begin{minipage}[t]{0.3\linewidth}
    \centering
    \includegraphics[scale=0.5]{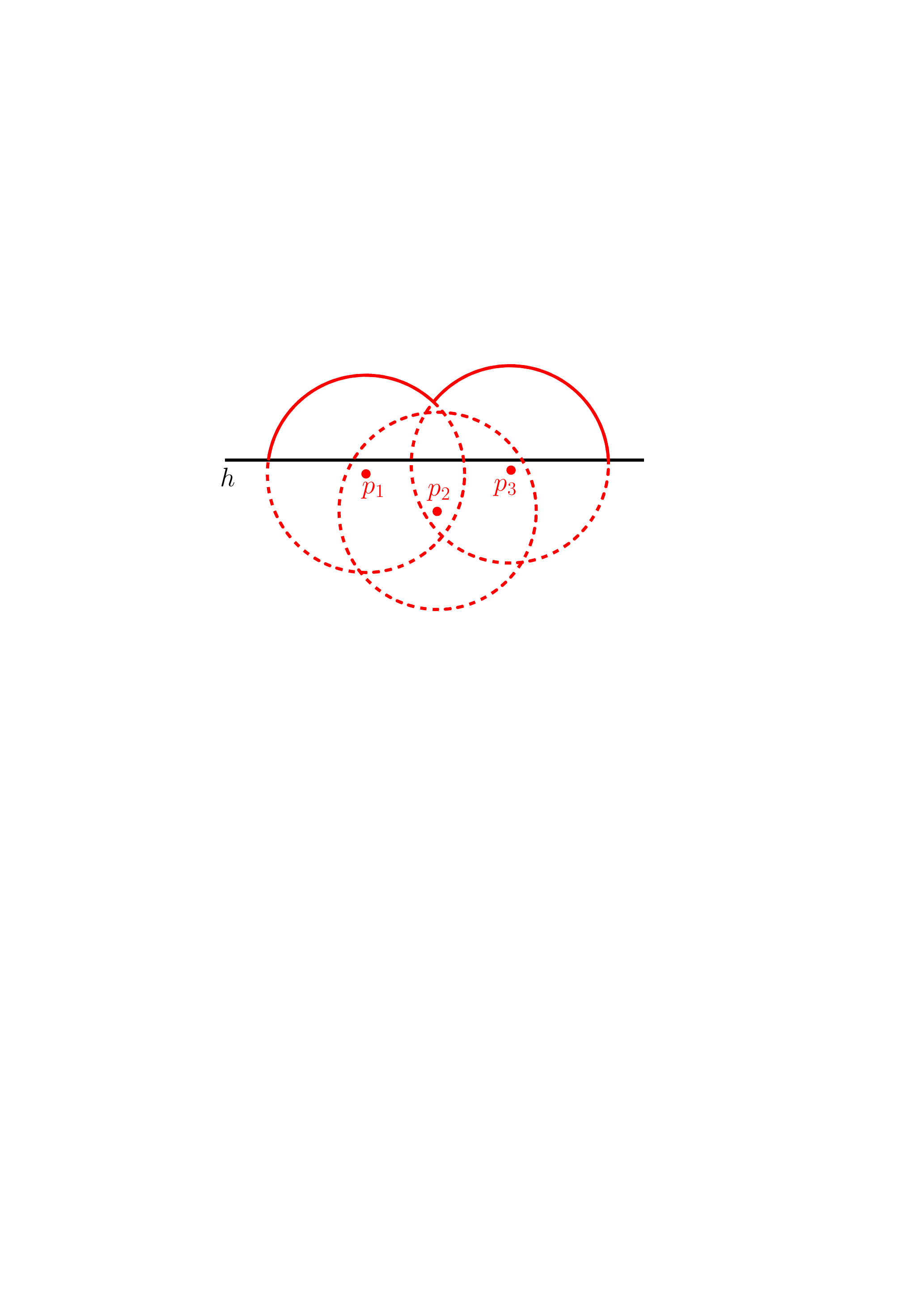}
    \end{minipage}
    }%
    \caption{\footnotesize The change of the combinatorial structure of the upper envelope $\calU(r)$ (the red solid arcs) as $r$ increases.}
    \label{fig:DetermineUpperEnvelope}
\end{figure}



The goal of the first subroutine is to shrink the interval $(r_1,r_2]$ such that it contains $r^*$ and if $r^*\neq r_2$, then $\calU(r^*)$ has the same combinatorial structure as $\calU(r)$ for any $r\in (r_1,r_2)$, i.e., the set of red points that define the arcs on $\calU(r)$ is exactly the set of red points that define the arcs on $\calU(r^*)$ with the same order. Note that the order of the arcs on $\calU(r)$ is consistent with the $x$-coordinate order of the red points defining these arcs~\cite{ref:ChanAl16}.

To this end, we have the following observation. Consider $\calU(r)$ for an arbitrary $r$. If $r$ changes, the combinatorial structure of $\calU(r)$ does not change until one arc (e.g., defined by a red point $p_2$) disappears from $\calU(r)$ (e.g., see Fig.~\ref{fig:DetermineUpperEnvelope}). Let $p_1$ and $p_3$ be the red points defining neighboring left and right arcs of the arc defined by $p_2$ on $\calU(r)$, respectively. Then, at the moment  when $p_2$ disappears from $\calU(r)$, the three arcs defined by $p_1$, $p_2$, and $p_3$ intersect at a common point $q$, which is equidistant to the three points. Further, since $q$ is currently on $\calU(r)$, there is no red point that is closer to $q$ than $p_i$ for $i=1,2,3$, and the distance from $q$ to each $p_i$, $i=1,2,3$, is equal to the current value of $r$. Hence, $q$ is a vertex of the Voronoi diagram of the red points. This implies that as $r$ changes, the combinatorial structure of $\calU(r)$ does not change until possibly when $r$ is equal to the distance $\lVert q-p \rVert$, where $q$ is a vertex of the Voronoi diagram of all red points and $p$ is a nearest red point of $q$.

Based on the above observation, our algorithm works as follows. We build the Voronoi diagram for all red points, which takes $O(n_r \log n_r)$ time~\cite{ref:FortuneA87,ref:ShamosCl75}. For each vertex $v$ of the diagram, we add $\lVert v-p \rVert$ to the set $\calQ$ (initially $\calQ=\emptyset$), where $p$ is a nearest red point of $v$ ($p$ is available from the diagram).
Note that $|\calQ|=O(n_r)$, and we refer to each value of $\calQ$ as a {\em critical value}. Next, we sort $\calQ$, and then do binary search on $\calQ$ using the decision algorithm to find the smallest value $r_2'$ of $\calQ$ with $r_2'\geq r^*$ as well as the largest value $r_1'$ of $\calQ$ smaller than $r^*$, which can be done in $O(n\log n_r)$ time (note that $n_r\leq n$). By definition, $(r_1',r_2']$ contains $r^*$ and $(r_1',r_2')$ does not contain any value of $\calQ$. According to the above observation, if $r^*\neq r_2'$, then the combinatorial structure of $\calU(r^*)$ is the same as that of $\calU(r)$ for any $r\in (r_1',r_2')$.

We analyze the running time of this subroutine for all instances of $\calI$. Clearly, the total time for all instances is bounded by $O(|\calI|\cdot n\log n)$, which is $O(n^2\log n)$ as $|\calI|=O(n)$. We can reduce the time to $O(n\log n)$ by considering the critical values of all instances of $\calI$ altogether. Specifically, let $\calQ$ now be the set of critical values of all instances of $\calI$. Then, $|\calQ|=O(n)$. We sort $\calQ$ and do binary search on $\calQ$ to find $r_1'$ and $r_2'$ as defined above with respect to the new $\calQ$. Now, for each instance of $\calI$, if $r^*\neq r_2'$, then the combinatorial structure of $\calU(r^*)$ is the same as that of $\calU(r)$ for any $r\in (r_1',r_2')$. The total time for all instances of $\calI$ is now bounded by $O(n\log n)$. Finally, we update $r_1=\max\{r_1,r_1'\}$ and $r_2=\min\{r_2,r_2'\}$. As $r^*\in (r_1',r_2']$, the new interval $(r_1,r_2]$ still contains $r^*$. Further, as $(r_1,r_2)\subseteq (r_1',r_2')$, for each instance of $\calI$, if $r^*\neq r_2$, then the combinatorial structure of $\calU(r^*)$ is the same as that of $\calU(r)$ for any $r\in (r_1,r_2)$.

\subsubsection{Sorting the upper envelope vertices and blue points}
\label{subsubsec:ConnectivitiesBetweenBRPoints}


The goal of the second subroutine is to shrink the interval $(r_1,r_2]$ such that it contains $r^*$ and if $r^*\neq r_2$, then the sorted list of all vertices of $\calU(r^*)$ and all blue points by their $x$-coordinates is the same as the sorted list of all vertices of $\calU(r)$ and all blue points for any $r\in (r_1,r_2)$.

Recall that after the first subroutine, the interval $(r_1,r_2]$ contains $r^*$, and if $r^*\neq r_2$, then the combinatorial structure of $\calU(r^*)$ is the same as that of $\calU(r)$ for any $r\in (r_1,r_2)$.

\begin{figure}[t]
    \centering
    \begin{minipage}{0.48\textwidth}
        \centering
        \includegraphics[height=1.2in]{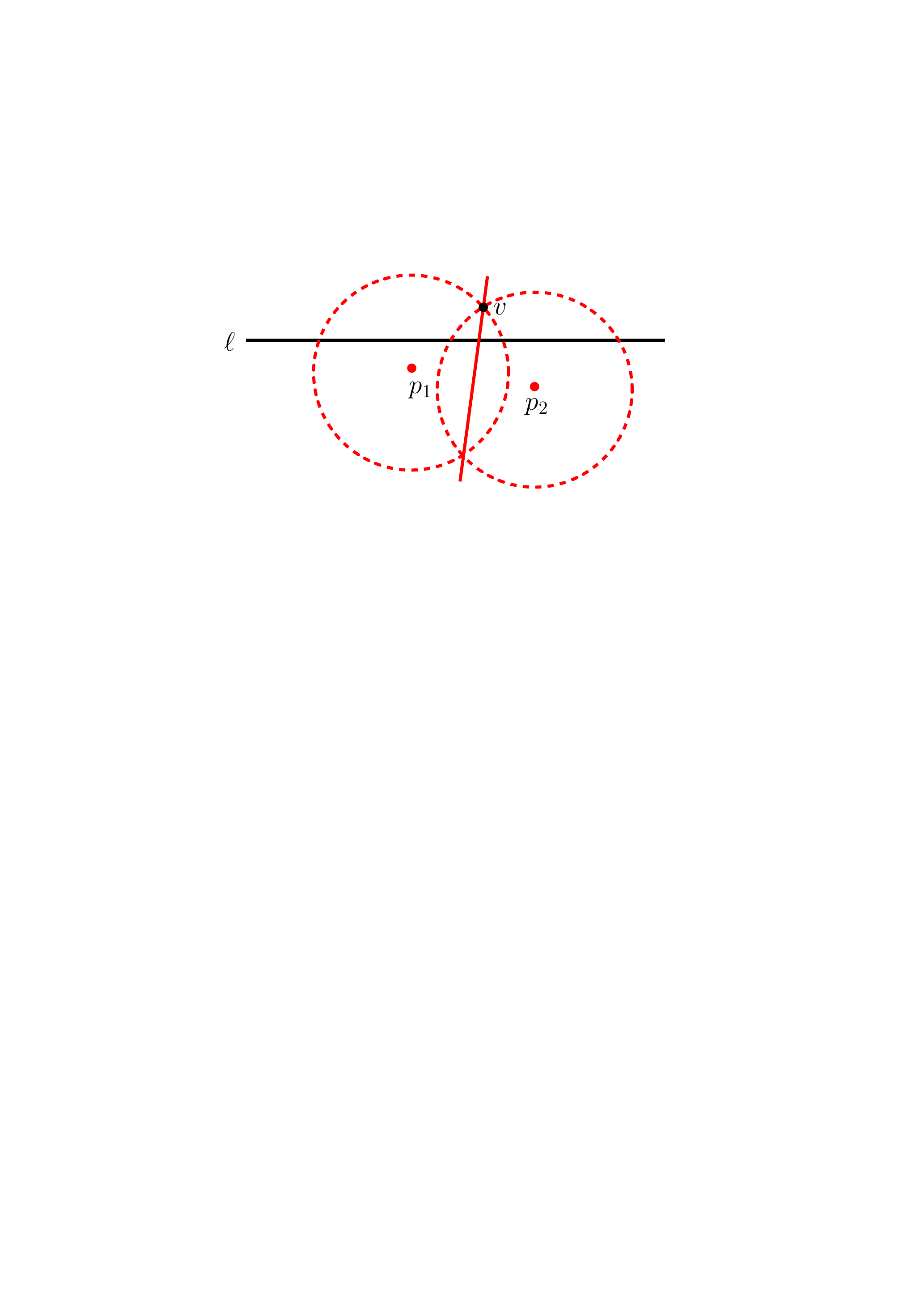}
        \caption{\footnotesize Illustrating a vertex $v$ of the upper envelope, which is defined by two red points $p_1$ and $p_2$. The red solid segment is the bisector of $p_1$ and $p_2$.}
        \label{fig:SortingRedBluePoints}
    \end{minipage}
    \hspace{0.05in}
    \begin{minipage}{0.49\textwidth}
        \centering
        \includegraphics[height=1.3in]{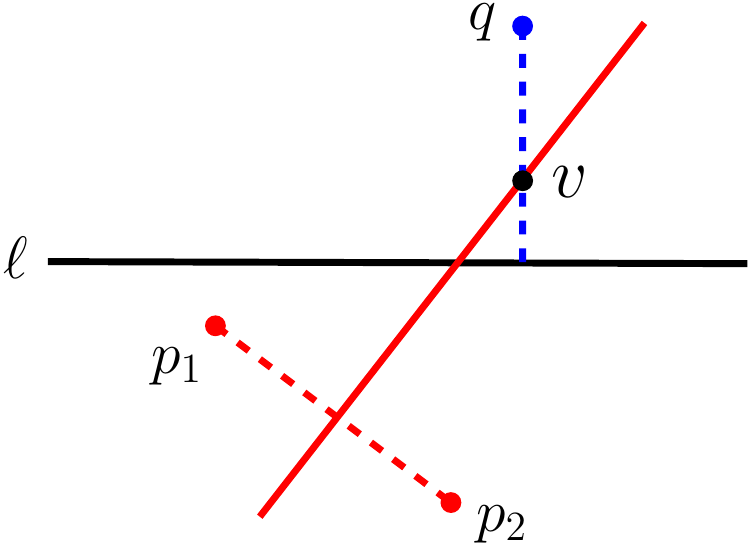}
        \caption{\footnotesize Illustrating the scenario where $x(q)=x(v)$, where $v$ is on the bisector (the red solid segment) of $p_1$ and $p_2$.}
        \label{fig:ResolvingComparisons}
    \end{minipage}
\end{figure}

To sort all vertices of $\calU(r^*)$ and all blue points, we apply Cole's parametric search~\cite{ref:ColeSl87} with AKS sorting network \cite{ref:AjtaiPr83}, using the CS algorithm as the decision algorithm; the running time is bounded by $O(n\log n)$ as the number of vertices of $\calU(r^*)$ is $O(n_r)$ and the number of blue points is $O(n_b)$ (and $n_r+n_b=O(n)$). To see why this works, it suffices to argue that the ``root'' of each comparison involved in the sorting can be obtained in $O(1)$ time (more specifically, the root refers to the value of $r\in (r_1,r_2)$ at which the two operands involved in the comparison are equal). Indeed, the comparisons can be divided into three types based on their operands: (1) a comparison between the $x$-coordinates of two blue points; (2) a comparison between the $x$-coordinates of two vertices of $\calU(r^*)$; (3) a comparison between the $x$-coordinates of a blue point and a vertex of $\calU(r^*)$. For the first type, as blue points are fixed, independent of the parameter $r$, it is trivial to handle. For the second type, as the combinatorial structure of $\calU(r)$ does not change for all $r\in (r_1,r_2)$, each such comparison can be resolved by taking any value of $r\in (r_1,r_2)$ and then comparing the two vertices under $r$. The third type is a little more involved. Consider the comparison of the $x$-coordinates of a blue point $q$ and a vertex $v$ of $\calU(r^*)$. Note that $v$ is the intersection of arcs of two circles of radius $r$ and centered at two red points, say $p_1$ and $p_2$, respectively. Observe that $v$ is on the bisector of $p_1$ and $p_2$ (e.g., see Fig.~\ref{fig:SortingRedBluePoints}). Furthermore, when $r$ changes, $v$ moves on the bisector of $p_1$ and $p_2$, while the position of the blue point $q$ does not change. Hence, the root of the comparison, i.e., the value $r$ (if exists) in $(r_1,r_2)$ such that $x(q)=x(v)$ can be obtained in constant time by elementary geometry (e.g., see Fig.~\ref{fig:ResolvingComparisons}). Note that if such $r$ does not exist in $(r_1,r_2)$, then either $x(q)<x(v)$ holds for all $r\in (r_1,r_2)$ or $x(q)>x(v)$ holds for all $r\in (r_1,r_2)$, which can be easily determined.
As such, with Cole's parametric search~\cite{ref:ColeSl87} and the linear time decision algorithm (i.e., the CS algorithm), we can obtain a sorted list of the upper envelope vertices and the blue points by their $x$-coordinates; the algorithm shrinks the interval $(r_1,r_2]$ so that the new interval $(r_1,r_2]$ contains $r^*$ and if $r^*\neq r_2$, then the above sorted list is fixed for all $r\in (r_1,r_2)$.

Since the running time of the above sorting algorithm is $O(n\log n)$, as before for the first subroutine, the sorting for all problem instances of $\calI$ takes $O(n^2\log n)$ time. To reduce the time, as before, we sort all elements in all instances of $\calI$ altogether, which takes $O(n\log n)$ time in total. Specifically, in each problem instance, we need to sort a set of blue points and vertices of upper envelopes of a set of red points. We put all blue points and the upper envelopes of all red points of all problem instances of $\calI$ in one coordinate system and apply the sorting algorithm as above. One difference is that we now have a new type of comparisons: compare the $x$-coordinate of a vertex $v_1$ of the upper envelope from one problem instance with the $x$-coordinate of a vertex $v_2$ of the upper envelope from another problem instance. In this case, when $r$ changes, both $v_1$ and $v_2$ moves on the bisectors of their defining red points. But we can still find in constant time a root $r$ (if exists) in $(r_1,r_2)$ for the comparison by elementary geometry. As such, we can complete the sorting for all problem instances of $\calI$ in $O(n\log n)$ time in total, for the total number of all blue points and red points in all problem instances of $\calI$ is $O(n)$. Again, the interval $(r_1,r_2]$ will be shrunk. This finishes the second subroutine.

\subsubsection{Deciding whether each blue point is below the upper envelope}
\label{subsubsec:DecidingWhetherEachBluePoint}

We now have an interval $(r_1,r_2]$ containing $r^*$ such that if $r^*\neq r_2$, then each blue point $q$ is spanned by an arc $\alpha_q(r)$ of $\calU(r)$ defined by the same red point for all $r\in (r_1,r_2)$ (note that the arc $\alpha_q(r)$ moves as $r$ changes, for $r$ is the radius of the arc). Each blue point $q$ is below the upper envelope $\calU(r)$ if and only if $q$ is below the arc $\alpha_q(r)$. The goal of the third subroutine is to shrink the interval $(r_1,r_2]$ so that the new interval $(r_1,r_2]$ still contains $r^*$ and if $r^*\neq r_2$, then for each blue point $q$, the relative position of $q$ with respect to $\alpha_q(r)$ (i.e., whether $q$ is above or below $\alpha_q(r)$) is fixed for all $r\in (r_1,r_2)$. To this end, we proceed as follows.

As $r$ changes in $(r_1,r_2)$, $\alpha_q(r)$ changes while $q$ does not. For each blue point $q$, we compute in constant time a critical value $r$ (if exists) in $(r_1,r_2)$ such that $q$ is on $\alpha_q$, and we add $r$ to the set $\calQ$ ($\calQ=\emptyset$ initially). Note that if such value $r$ does not exist in $(r_1,r_2)$, then either $q$ is above $\alpha_q(r)$ for all $r\in (r_1,r_2)$ or $q$ is below $\alpha_q(r)$ for all $r\in (r_1,r_2)$, which can be easily determined. The size of $\calQ$ is at most $n_b$. Then, we sort $\calQ$, and do binary search on $\calQ$ with our decision algorithm to find the smallest value $r_2'$ of $\calQ$ with $r_2'\geq r^*$ and the largest value $r_1'$ of $\calQ$ with $r_1'<r^*$. We then update $r_1=\max\{r_1,r_1'\}$ and $r_2=\min\{r_2,r_2'\}$. The new interval $(r_1,r_2]$ still contains $r^*$ and $(r_1,r_2)$ does not contain any value of $\calQ$. Hence, if $r^*\neq r_2$, then for each blue point $q$, the relative position of $q$ with respect to $\alpha_q(r)$ is fixed for all $r\in (r_1,r_2)$. As such, the new interval $(r_1,r_2]$ satisfies the goal of the third subroutine as mentioned above.

Finally, we pick an arbitrary $r\in (r_1,r_2)$, and for each blue point $q$, if $q$ is below the arc $\alpha_q(r)$, then we add $q$ to the set $S_{i}$.

The running time of the above algorithm is $O(n\log n_b)$. Thus the total time of the third subroutine is $O(n^2\log n)$ for all problem instances of $\calI$. To reduce the time, we again consider the subroutine of all instances of $\calI$ altogether. More specifically, we put all critical values $r$ in all problem instances of $\calI$ in $\calQ$. Thus, the size of $\calQ$ is $O(n)$. We then run the same algorithm as above using the new set $\calQ$. The total time is bounded by $O(n\log n)$.

\subsubsection{Terminating the algorithm}

This finishes the $i$-th step of the BFS, which computes a set $S_i$ along with an interval $(r_1,r_2]$. According to the above discussion, $(r_1,r_2]$ contains $r^*$ and if $r^*\neq r_2$ (and thus $r^*\in (r_1,r_2)$), then $S_i=S_i(r^*)=S_i(r)$ for all $r\in (r_1,r_2)$.

If the point $t$ is in $S_i$ and $i\leq \lambda$, then we stop the algorithm. In this case, we have the following lemma.

\begin{lemma}\label{lem:30}
If $t\in S_i$ and $i\leq \lambda$, then $r^*=r_2$.
\end{lemma}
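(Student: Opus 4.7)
The plan is a proof by contradiction that leverages the BFS invariant maintained by the algorithm together with the monotonicity of the distance function $d_r(s,t)$ in $r$. Recall that $r^*$ is the smallest $r$ for which $d_r(s,t) \leq \lambda$; equivalently, for every $r < r^*$ we have $d_r(s,t) > \lambda$ (using that $d_r(s,t)$ is non-increasing in $r$, noted at the start of Section~\ref{sec:pre}).

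Suppose for contradiction that $r^* \neq r_2$. By the invariant maintained throughout the BFS, $(r_1,r_2]$ contains $r^*$, so $r^* \in (r_1,r_2)$; in particular $r_1 < r^*$ and the open interval $(r_1, r^*)$ is non-empty. The invariant further guarantees that $S_i = S_i(r)$ for every $r \in (r_1,r_2)$. Pick any $r' \in (r_1, r^*)$. Then $S_i = S_i(r')$, and since $t \in S_i$ by hypothesis, $t \in S_i(r')$. This means the BFS distance from $s$ to $t$ in $G_{r'}(P)$ is exactly $i$, so $d_{r'}(s,t) = i \leq \lambda$.

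However, $r' < r^*$, so by the definition of $r^*$ we must have $d_{r'}(s,t) > \lambda$, a contradiction. Thus $r^* = r_2$, as claimed.

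I do not anticipate any real obstacle here: the lemma is essentially a direct consequence of the carefully maintained invariant combined with the monotonicity property of $d_r(s,t)$. The only small care needed is to verify that $(r_1, r^*)$ is non-empty, which is immediate from $r^* \in (r_1,r_2)$, and to recall that $d_r(s,t)$ is a non-increasing function of $r$, so that $r^*$ being the smallest admissible radius rules out any $r' < r^*$ satisfying $d_{r'}(s,t) \leq \lambda$.
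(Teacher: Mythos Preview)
Your proof is correct and follows essentially the same argument as the paper: assume $r^*\neq r_2$, deduce $r^*\in(r_1,r_2)$, pick some $r'\in(r_1,r^*)$ (the paper chooses $r'=(r_1+r^*)/2$), use the invariant to conclude $t\in S_i(r')$ and hence $d_{r'}(s,t)=i\leq\lambda$, contradicting the minimality of $r^*$. The only cosmetic difference is that you pick an arbitrary $r'$ in the interval rather than the midpoint.
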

\begin{proof}
Assume to the contrary that $r^*\neq r_2$. Then, since $r^*\in (r_1,r_2]$, we have $r^*\in (r_1,r_2)$. Let $r'=(r_1+r^*)/2$. Clearly, $r'\in (r_1,r_2)$ and $r'<r^*$. As $r'\in (r_1,r_2)$, $S_i=S_i(r')$ by our algorithm invariant. Since $t\in S_i(r')$, we obtain that $d_{r'}(s,t)=i \leq \lambda$. This incurs contradiction as $r'<r^*$ and  $r^*$ is the minimum value $r$ with $d_{r}(s,t)\leq \lambda$.
\end{proof}

If $t\not\in S_i$ and $i=\lambda$, then we also stop the algorithm.   In this case, we have the following lemma.

\begin{lemma}\label{lem:35}
If $t\not\in S_i$ and $i=\lambda$, then $r^*=r_2$.
\end{lemma}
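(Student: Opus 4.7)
The plan is to mirror the contradiction argument used for Lemma~\ref{lem:30}. Suppose for contradiction that $r^* \neq r_2$. Since the algorithm invariant guarantees $r^* \in (r_1, r_2]$, this forces $r^* \in (r_1, r_2)$. Applying the invariant in its strong form, for every $0 \leq j \leq i = \lambda$ we have $S_j = S_j(r^*) = S_j(r)$ for all $r \in (r_1, r_2)$.

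Next I would argue that $t \notin S_j$ for every $0 \leq j \leq \lambda$, not merely for $j = i$. This is essentially a book-keeping observation: the algorithm only reaches step $i$ because the previous termination test in Lemma~\ref{lem:30} failed at each earlier step. More precisely, for any $j < i$ we have $j < \lambda$, so had $t$ been in $S_j$ the algorithm would have stopped at step $j$ and never produced the current $S_i$. Combined with the hypothesis $t \notin S_i$, this yields $t \notin \bigcup_{j=0}^{\lambda} S_j$.

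Using $S_j = S_j(r^*)$ for all $j \leq \lambda$, this means $t \notin S_j(r^*)$ for every $0 \leq j \leq \lambda$, i.e., $d_{r^*}(s,t) > \lambda$. But by definition $r^*$ is the smallest value of $r$ for which $d_r(s,t) \leq \lambda$, so we must have $d_{r^*}(s,t) \leq \lambda$, a contradiction. Hence $r^* = r_2$, as claimed.

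I do not anticipate a serious obstacle; the only subtlety is the implicit fact that the hypothesis $t \notin S_i$ at $i = \lambda$ comes with the unstated but necessary consequence $t \notin S_j$ for all $j < i$, which is immediate from the earlier termination rule. Everything else is a direct invocation of the invariant established throughout Section~\ref{subsec:bfs} together with the minimality property of $r^*$.
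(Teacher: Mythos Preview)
Your proof is correct and follows essentially the same argument as the paper: assume $r^*\neq r_2$, use the invariant to get $S_j=S_j(r^*)$ for all $j\leq i$, observe that $t\notin\bigcup_{j=0}^{i}S_j$ (the paper justifies this with ``according to our algorithm,'' which is exactly your more explicit observation about the earlier termination test), deduce $d_{r^*}(s,t)>\lambda$, and contradict the definition of $r^*$.
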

\begin{proof}
Assume to the contrary that $r^*\neq r_2$. Then, $r^*\in (r_1,r_2)$, for $r^*\in (r_1,r_2]$. By our algorithm invariant, $S_j=S_j(r)$ for all $r\in (r_1,r_2)$ and for all $j\leq i$. Hence, $S_j=S_j(r^*)$ for all $j\leq i$. As $t\not\in S_i$, according to our algorithm, $t\not\in \bigcup_{j=0}^i S_j$. Therefore, $t\not\in \bigcup_{j=0}^i S_j(r^*)$, implying that $d_{r^*}(s,t)>i=\lambda$. However, by the definition of $r^*$, $d_{r^*}(s,t)\leq \lambda$ holds. We thus obtain contradiction.
\end{proof}

Since initially $i=0$ and $S_0=\{s\}$, the above implies that the BFS algorithm will stop in at most $\lambda$ steps.
As each step takes $O(n\log n)$ time, the value $r^*$ can be computed in $O(\lambda\cdot n\log n)$ time.

\begin{theorem}\label{theorem:DsppAlgorithm}
    The reverse shortest path problem for $L_2$ unweighted unit-disk graphs can be solved in $O(\lfloor \lambda \rfloor \cdot n \log n)$ time.
\end{theorem}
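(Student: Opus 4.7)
The plan is to combine the grid-building step from Lemma~\ref{lem:20} with a parametric simulation of the BFS from the CS algorithm, maintaining throughout a shrinking interval $(r_1, r_2]$ that always contains $r^*$. First I would reduce to integer $\lambda$ using the observation that path lengths in the unweighted case are integers, initialize $(r_1, r_2] = (0, \infty]$, and apply Lemma~\ref{lem:20} to shrink this interval so that (unless $r^* = r_2$) the grid $\Psi_r(P)$ has the same combinatorial structure as $\Psi_{r^*}(P)$ for every $r \in (r_1, r_2)$. This step takes $O(n \log n)$ time, after which I would pick any $r \in (r_1, r_2)$ and invoke Lemma~\ref{lem:grid} to build the grid $\Psi(P)$ on which the BFS will run.

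Next I would carry out the BFS for at most $\lambda$ rounds, maintaining as an invariant that $(r_1, r_2] \ni r^*$ and, whenever $r^* \neq r_2$, that the sets $S_0, S_1, \dots, S_i$ computed so far agree with $S_j(r) = S_j(r^*)$ for every $r \in (r_1, r_2)$. The base case $S_0 = \{s\}$ is immediate. For the inductive step, for each cell $C$ meeting $S_{i-1}$ I would add all undiscovered points of $P(C)$ to $S_i$ directly (justified by the in-cell observation that such points are at distance at most $r$ from a point of $S_{i-1} \cap C$), and for each neighbor $C'$ I would form an instance of Subproblem~\ref{subpro:10} and process it through the three parameterized subroutines already developed: (i) stabilize the combinatorial structure of the upper envelope $\calU(r)$ using Voronoi-diagram critical values; (ii) parametrically sort the envelope vertices against the blue points via Cole's parametric search with the AKS sorting network; and (iii) fix each blue point's above/below status relative to its spanning arc via critical-value binary search.

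The main obstacle — and what drives the $O(n \log n)$ per-step bound rather than a naive $O(n^2 \log n)$ — is the batching of all $\Theta(n)$ subproblem instances in $\calI$ within a single step. For each of the three subroutines I would pool the critical values (or the comparisons to be resolved) across all instances into a single set of size $O(n)$, then run one binary search or one parametric sort using the linear-time CS algorithm as the decision oracle. The delicate point is Cole's parametric sort across instances: I would need to show that each cross-instance comparison (in particular comparisons between envelope vertices originating in different instances) has its root in $(r_1, r_2)$ computable in $O(1)$ time by elementary geometry, since both vertices slide along bisectors of their defining red point pairs as $r$ varies. With batching, each step shrinks $(r_1, r_2]$ while preserving the invariant and runs in $O(n \log n)$ time.

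Finally I would appeal to Lemmas~\ref{lem:30} and~\ref{lem:35} for termination: as soon as either $t \in S_i$ with $i \leq \lambda$, or $t \notin S_i$ with $i = \lambda$, the current $r_2$ equals $r^*$ and can be returned. Since $i$ grows by one per round and one of these conditions must trigger by round $\lambda$ at the latest, the BFS stops within $\lambda$ iterations. Adding the $O(n \log n)$ grid-building cost to the $\lambda$ rounds of $O(n \log n)$ each yields the claimed $O(\lfloor \lambda \rfloor \cdot n \log n)$ bound.
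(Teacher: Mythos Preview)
Your proposal is correct and follows essentially the same approach as the paper: Lemma~\ref{lem:20} to fix the grid, a parametric BFS maintaining the invariant on $(r_1,r_2]$, the three subroutines (Voronoi-based envelope stabilization, Cole/AKS parametric sorting, above/below resolution) batched across all instances $\calI$ in each step for an $O(n\log n)$ per-step cost, and termination via Lemmas~\ref{lem:30} and~\ref{lem:35}. You even flag the cross-instance envelope-vertex comparison in the batched sort, which the paper handles identically by observing that both vertices move along bisectors and the root is computable in $O(1)$ time.
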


\section{The $L_2$ unweighted case -- the second algorithm}
\label{sec:second}

In this section, we present our second algorithm for the $L_2$ unweighted RSP problem. As discussed in Section~\ref{subsec:OurApproach}, the main idea is to somehow combine the strategies of the first unweighted RSP algorithm in Section~\ref{sec:first} and the naive binary search algorithm using the $L_2$ distance selection algorithm~\cite{ref:KatzAn97}.

First of all, we still build in $O(n\log n)$ time the grid $\Psi(P)$ as in Section~\ref{sec:grid}, and thus the information of Lemma~\ref{lem:20} is available for the grid. More specifically, we obtain an interval $(r_1,r_2]$ such that if $r^*\neq r_2$, then the combinatorial data structure of $\Psi_r(P)$ is fixed for all $r\in (r_1,r_2)$, implying that $\calC$, $P'$, $N(C)$ and $P(C)$ for each $C\in \calC$ are fixed for all $r\in (r_1,r_2)$. Next, we will run the BFS algorithm, but in a different way than before.

We partition the cells of $\calC$ into {\em large cells} and {\em small cells}: a cell $C$ is a large cell if $|P(C)| \geq (n / \log n)^{3/4}$ and is a small cell otherwise. Thus the number of large cells is at most $n^{1/4} \log^{3/4} n$. For all pairs of cells $(C,C')$ with $C\in \calC$ and $C'\in N(C)$, we call $(C,C')$ a {\em small-cell pair} if both $C$ and $C'$ are small cells and a {\em large-cell pair} otherwise (i.e., at least one cell is a large cell). As $|N(C)|=O(1)$ for each cell $C$ and the number of large cells is at most $n^{1/4} \log^{3/4} n$, the total number of large-cell pairs is $O(n^{1/4} \log^{3/4} n)$.

Recall that each step of the BFS algorithm of our first algorithm in Section~\ref{subsec:bfs} boils down to solving instances of Subproblem~\ref{subpro:10}, and each such instance involves a cell pair $(C,C')$ with $C\in \calC$ and $C'\in N(C)$. If $(C,C')$ is a large-cell pair, we will run the same algorithm as in Section~\ref{subsec:bfs}. Otherwise, we will use the original CS algorithm to solve it, which takes only linear time.
For this, with the help of the $L_2$ distance selection algorithm~\cite{ref:KatzAn97}, we preprocess all these small-cell pairs before starting the BFS algorithm by the following lemma.

\begin{lemma}\label{lem:40}
An interval $(r_1',r_2']$ containing $r^*$ can be computed in $O(n^{5/4}\log^{7/4} n)$ time with the following property: if $r^* \neq r_2'$, then for any $r \in (r_1', r_2')$, for any small-cell pair $(C,C')$ with $C\in \calC$ and $C'\in N(C)$, an edge connects a point $p \in P(C)$ and a point $p' \in P(C')$ in $G_r(P)$ if and only if an edge connects $p$ and $p'$ in $G_{r^*}(P)$.
\end{lemma}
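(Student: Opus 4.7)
The plan is to binary search over the set
\[
D = \bigl\{\|p - q\| : p \in P(C),\ q \in P(C'),\ (C,C')\ \text{is a small-cell pair with}\ C' \in N(C)\bigr\}
\]
of ``critical'' distances across small-cell pairs, and to shrink $(r_1, r_2]$ to the tightest subinterval $(r_1', r_2']$ such that no value of $D$ lies in the open interval $(r_1', r_2')$. By construction, for any two points $p \in P(C)$ and $q \in P(C')$ belonging to a small-cell pair, the Boolean predicate $\|p - q\| \leq r$ is constant on $(r_1', r_2')$, which is exactly what the lemma demands. Since each small cell has at most $m = (n/\log n)^{3/4}$ points and each cell has $O(1)$ neighbors, we have $|D| = O(nm) = O(n^{7/4}/\log^{3/4} n)$.

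First I would perform binary search on the sorted order of $D$, using the CS decision algorithm (running in $O(n\log n)$ per call) to compare a candidate threshold with $r^*$. This requires $O(\log |D|) = O(\log n)$ decision calls, contributing $O(n \log^2 n)$ in total. The nontrivial part is producing the median of the current candidate subset of $D$ at each step without enumerating $D$ explicitly, since $|D|$ already exceeds the target time budget.

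To find the median, I would borrow the cutting-based counting subroutine from the Katz--Sharir distance selection algorithm~\cite{ref:KatzAn97}: for a given threshold $r$, count the number of elements of $D$ that are at most $r$. In the original algorithm this counting is performed over all $\binom{n}{2}$ interpoint distances; restricted to pairs whose first endpoint lies in a small cell (so each such cell contributes at most $m$ points and is involved with $O(1)$ neighbor cells), the counting can be carried out per small cell, with the threshold $m = (n/\log n)^{3/4}$ chosen precisely so that summing over all small cells gives a per-query cost of roughly $O(n^{5/4}\log^{3/4} n)$. Plugged into a standard selection-by-counting wrapper across $O(\log n)$ stages, this yields the claimed $O(n^{5/4}\log^{7/4} n)$ total time. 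Using only this counting subroutine, rather than invoking the full Katz--Sharir selection procedure, is what saves the extra logarithmic factor mentioned in the overview.

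The main obstacle will be adapting the Katz--Sharir counting machinery so that its cost depends on the small-cell threshold $m$ rather than on $n$: the original algorithm assumes an unrestricted point set, so I would need to reinterpret its cutting argument as a bipartite computation where at least one endpoint of each counted pair lies in a small cell, and verify that the $O(1)$-neighbor structure of the grid allows the bipartite counts to be aggregated without incurring extra logarithmic factors. Once the counting is in place, the selection and the final binary search that determines $(r_1', r_2']$ are routine.
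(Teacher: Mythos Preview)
Your high-level plan is exactly the paper's: reduce to finding the predecessor and successor of $r^*$ in the set $D$ of cross-pair distances for small-cell pairs, so that the open interval $(r_1',r_2')$ contains no element of $D$. The correctness argument you give for why this suffices is fine.

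The gap is in how you locate those two neighbors of $r^*$ inside $D$. You describe the key Katz--Sharir subroutine as a \emph{counting} procedure---given $r$, return $|\{d\in D: d\le r\}|$---and then appeal to a ``standard selection-by-counting wrapper.'' But rank-counting alone does not give you elements of $D$ to test: you still need a mechanism that, in each of your $O(\log n)$ stages, produces a sparse set of \emph{candidate values} drawn from $D$ on which to run the decision algorithm. Without that, there is no binary search to plug the counting into. (Randomly sampling from $D$ would work but is randomized; parametric search on the counting procedure is a different and heavier argument than what you sketch.)

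What the paper actually borrows from Katz--Sharir is not counting but \emph{candidate generation via expanders}. In each round, for every small-cell pair $P_i$ it computes in $O(n_i^{4/3}\log n_i)$ time a compact representation of the pairs of $P_i$ whose distances lie in the current interval as a union of complete bipartite graphs, replaces each bipartite graph by a constant-degree expander, and takes the distances along expander edges as the candidate set $\calL$. One then binary-searches on $\calL$ (using linear-time selection, so no sorting) with the $O(n)$-time decision algorithm. The expander property from~\cite{ref:KatzAn97} guarantees that the surviving interval contains only a constant fraction of the distances it did before, so $O(\log n)$ rounds suffice. Summing $n_i^{4/3}\log n_i$ over all small-cell pairs with $n_i\le 2(n/\log n)^{3/4}$ and $\sum_i n_i=O(n)$ gives $O(n^{5/4}\log^{3/4}n)$ per round and $O(n^{5/4}\log^{7/4}n)$ overall. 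The log you correctly anticipate saving comes from doing \emph{one} round of compact-representation per pair (cost $O(n_i^{4/3}\log n_i)$) rather than invoking the full distance-selection algorithm per pair (cost $O(n_i^{4/3}\log^2 n_i)$)---not from replacing selection by counting.
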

\begin{proof}
Let $\Pi$ denote the set of all small-cell pairs $(C,C')$ with $C\in \calC$ and $C'\in N(C)$. We use $(C_i,C_i')$ to denote the $i$-th pair of $\Pi$; let $P_i$ denote the set of points of $P$ in the two cells $C_i$ and $C_i'$, and let $n_i=|P_i|$. Let $m=|\Pi|$. Note that $m=O(n)$. By the definition of small cells, we have $n_i\leq 2\cdot (n / \log n)^{3/4}$. Since $|N(C)|=O(1)$ for each cell $C$, it holds that $\sum_{i=1}^m n_i=O(n)$.
For each $P_i$, let $D_i$ denote the set of distances of all pairs of points of $P_i$. Hence, $|D_i|=n_i(n_i-1)/2$. Define $\calD=\bigcup_{i=1}^m D_i$.

Let $r_2'$ be the smallest value of $\calD$ with $r_2'\geq r^*$ and let $r_1'$ be the largest value of $\calD$ smaller than $r^*$. By definition, $(r_1',r_2']$ contains $r^*$ and the open interval $(r_1',r_2')$ does not contain any value of $\calD$ and thus any value of $D_i$ for each $i$. Therefore, for any two points $p$ and $p'$ of $P_i$, either $\lVert p-p' \rVert <r$ holds for all $r\in (r_1',r_2')$ or $\lVert p-p' \rVert >r$ holds for all $r\in (r_1',r_2')$.
Thus, $(r_1',r_2']$ satisfies the lemma statement.
In the following, we only describe the algorithm for finding $r_2'$ since the algorithm for finding $r_1'$ is similar.

For convenience, for any $r$, we say that $r$ is {\em feasible} if $r\geq r^*$ and {\em infeasible} otherwise. Note that
if $r$ is a feasible value, then $r'$ is also feasible for any $r'>r$; symmetrically, if $r$ is infeasible, then $r'$ is also infeasible for any $r'<r$.
Recall that given any $r$, we can decide whether $r\geq r^*$ in linear time using the decision algorithm (i.e., the CS algorithm).

For each $P_i$, we use the $L_2$ distance selection algorithm~\cite{ref:KatzAn97} to compute the median distance of $D_i$, denoted by $d_i$, which takes $O(n_i^{4/3}\log^2 n_i)$ time. Then, we sort all these medians $d_i$, $1\leq i\leq m$, and do binary search on the sorted list using the decision algorithm. In $O(n\log n)$ time, we can determine whether each $d_i$ is feasible. Among all these medians, we keep the smallest feasible value, denoted by $d^1$. This finishes the first round of the algorithm.

In the second round, for each $d_i$, if it is feasible, then any value of $D_i$ larger than $d_i$ is also feasible; in this case, we compute the $(|D_i|/4)$-th smallest value of $D_i$, denoted by $d_i'$. If $d_i$ is infeasible, then any value of $D_i$ smaller than $d_i$ is also infeasible; in this case, we compute the $(3|D_i|/4)$-th smallest value of $D_i$, denoted by $d_i'$. Next, we determine whether the values $d_i'$ are feasible for all $1\leq i\leq m$ in the same way as above (i.e., doing binary search using the decision algorithm); we keep the smallest feasible value, denoted by $d^2$.

We then continue the next round in a similar way as above. After $O(\log n)$ rounds, the values of all sets $D_i$ are processed and we obtain a set of $O(\log n)$ feasible values $d^1$, $d^2$, \ldots; among all these values, the smallest one is $r_2'$.


For the time analysis, the algorithm has $O(\log n)$ rounds and each round takes $O(n\log n + \sum_{i=1}^mn_i^{4/3}\log^2 n_i)$ time. Since $n_i\leq 2\cdot (n / \log n)^{3/4}$ for each $1\leq i\leq m$, and $\sum_{i=1}^{m}n_i=O(n)$, the sum $\sum_{i=1}^mn_i^{4/3}$ achieves maximum when each $n_i$ is equal to $2\cdot (n / \log n)^{3/4}$ (and thus $m=O(n^{1/4} \log^{3/4} n)$). Hence, $\sum_{i=1}^mn_i^{4/3}=O(n^{5/4} / \log^{1/4} n)$. Therefore, each round of the algorithm takes $O(n^{5/4}\log^{7/4} n)$ time, which is dominated by the $L_2$ distance selection algorithm~\cite{ref:KatzAn97}.
The total time of the algorithm is thus $O(n^{5/4}\log^{11/4} n)$.

In what follows, we reduce the runtime of the algorithm by a logarithmic factor. The new algorithm still has $O(\log n)$ rounds. The difference is that instead of applying the $L_2$ distance selection algorithm~\cite{ref:KatzAn97} directly, we only use a subroutine of that algorithm. This also simplifies the overall algorithm. To avoid the lengthy background discussion, we use concepts from~\cite{ref:KatzAn97} without further explanation (refer to the initial version of the algorithm in Section~4~\cite{ref:KatzAn97} for the details).

Initially, we set $I_0=(0,\infty]$; we also add $\infty$ to $\calD$. Given an interval $I_{j-1}=(a_{j-1},b_{j-1}]$ that contains $r^*$ with $b_{j-1}\in \calD$, the $j$-th round of the algorithm produces an interval $I_j=(a_{j},b_{j}]$ that also contains $r^*$ with $b_j\in \calD$ such that $I_j\subseteq I_{j-1}$ and the number of values of $\calD$ contained in $I_j$ is only a constant fraction of the number of values of $\calD$ contained in $I_{j-1}$. Thus, after $O(\log n)$ rounds, we are left with a sufficiently small number of distances of $\calD$, from which it is trivial to find $r_2'$.

The $j$-th round of the algorithm works as follows. For each set $P_i$, we compute a compact representation of all pairs of points of $P_i$ whose distances lie in $I_{i-1}$, which can be done in $O(n_i^{4/3}\log n_i)$ time~\cite{ref:KatzAn97}. Such a compact representation is a collection of $O(n_i^{4/3})$ complete bipartite graphs $\{Q_k \times W_k\}_k$, where both $\sum_{k} |Q_k|$ and  $\sum_k |W_k|$ are bounded by $O(n_i^{4/3} \log n_i)$.
For each $k$, the distance between any point in $Q_k$ and any point of $W_k$ is in $I_{i-1}$. Next, we replace each complete bipartite graph $Q_k \times W_k$ by a set $E_k$ of expander graphs whose total number of edges is $O(|Q_k| + |W_k|)$. Then the total number of edges of all sets of expander graphs $\{E_k\}_k$ is $\sum_k O(|Q_k| + |W_k|) = O(n_i^{4/3} \log n_i)$. Each edge of an expander graph is associated with a distance of two points corresponding to the two nodes of the graph it connects. Let $L_{i}$ denote the set of distances of all edges in all expander graphs of $\{E_k\}_k$; the size of $L_{i}$ is $O(n_i^{4/3} \log n_i)$. Let $\calL$ denote the union of all such $L_i$'s. Then, $|\calL|=\sum_{i=1}^{m}n_i^{4/3}\log n_i$, which is bounded by $O(n^{5/4}\log^{3/4} n)$ as discussed above.
By doing binary search with the decision algorithm on $\calL$, we can compute the smallest feasible value $b_j$ and the largest infeasible value $a_j$ of $\calL$. Hence, $(a_j,b_j]$ contains $r^*$ and $(a_j,b_j)$ does not contain any value of $\calL$. Note that when doing binary search on $\calL$, we do not need to sort it first; instead we use the linear time selection algorithm~\cite{ref:BlumTi73}. As such, finding $a_j$ and $b_j$ can be done in $O(n^{5/4}\log^{3/4} n)$ time, which is also the total time of this round. Let $I_j=(a_j,b_j]$.
The analysis of~\cite{ref:KatzAn97} shows that the total number of values of $\calD$ in $I_j$ is a constant fraction of the total number of values of $\calD$ in $I_{j-1}$.

As the algorithm has $O(\log n)$ rounds and each round runs in $O(n^{5/4}\log^{3/4} n)$ time, the overall time of the algorithm is $O(n^{5/4}\log^{7/4} n)$.
\end{proof}

With the interval $(r_1',r_2']$ computed by the above lemma, we update $r_1=\max\{r_1,r_1'\}$ and $r_2=\min\{r_2,r_2'\}$. By definition, $r^*\in (r_1,r_2]\subseteq (r'_1,r_2']$. Hence, the interval $(r_1,r_2]$ also has the same property as $(r_1',r_2']$ in Lemma~\ref{lem:40}.

Next, we run the BFS algorithm as in Section~\ref{subsec:bfs}. To solve each instance of Subproblem~\ref{subpro:10}, if one of the two involved cells is a large cell (we refer to this case as the {\em large-cell instance}), then we use the same algorithm as before, i.e.,  parametric search; otherwise (i.e., both involved cells are small cells; we refer to this case as {\em small-cell instance}), due to the preprocessing of Lemma~\ref{lem:40}, we can solve the subproblem directly using the original CS algorithm by picking an arbitrary value $r\in (r_1,r_2)$. In this way, the time for solving all small-cell instances in the entire BFS algorithm is $O(n)$. For each large-cell instance, it can be solved in $O(n\log n)$ time as discussed in Section~\ref{subsec:bfs}. As the number of large cells of $\calC$ is at most $n^{1/4} \log^{3/4} n$ and $|N(C)|=O(1)$ for each cell $C\in \calC$, the total number of large-cell instances of Subproblem~\ref{subpro:10} is at most $O(n^{1/4} \log^{3/4} n)$. Hence, the total time for solving the large-cell instances in the entire BFS algorithm is $O(n^{5/4}\log^{7/4} n)$.
The proof of the following lemma presents the details of the new BFS algorithm sketched above.

\begin{lemma}
The BFS algorithm, which computes $r^*$, can be implemented in $O(n^{5/4}\log^{7/4} n)$ time.
\end{lemma}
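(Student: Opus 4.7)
The plan is to adapt the BFS of Section~\ref{subsec:bfs} by branching on the cell type. At each BFS step I use the grid $\Psi(P)$ (already built) to find the cells containing points of $S_{i-1}$; inside each such cell $C$, every undiscovered point of $P(C)$ is added to $S_i$ exactly as in the CS algorithm. For every neighbor pair $(C,C')$ we then face one instance of Subproblem~\ref{subpro:10}, but we treat small-cell pairs and large-cell pairs differently.

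For a small-cell pair, Lemma~\ref{lem:40} (whose interval has already been intersected into $(r_1,r_2]$) guarantees that the edge relation $\lVert p-p'\rVert \le r$ for $p\in P(C)$ and $p'\in P(C')$ is identical for every $r\in(r_1,r_2)$ and, provided $r^*\ne r_2$, matches the relation at $r^*$. So I simply pick any $r\in(r_1,r_2)$ and run the original linear-time CS subroutine for Subproblem~\ref{subpro:10} at that $r$; no critical values are generated and no shrinking of $(r_1,r_2]$ is needed. For a large-cell pair I reuse the three parameterized subroutines of Sections~\ref{subsubsec:ComputingTheUpperEnvelope}--\ref{subsubsec:DecidingWhetherEachBluePoint}, which compute the upper envelope, the merged sorted list of envelope vertices and blue points, and the below-arc tests, each time shrinking $(r_1,r_2]$ so as to preserve the invariant $S_i=S_i(r^*)=S_i(r)$ for every $r\in(r_1,r_2)$ whenever $r^*\ne r_2$. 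Termination via Lemmas~\ref{lem:30} and~\ref{lem:35} then yields $r^*=r_2$ when $t$ is either discovered at a step $i\le\lambda$ or missing after step $\lambda$.

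For the running time I would account separately for three contributions. Preprocessing by Lemma~\ref{lem:40} costs $O(n^{5/4}\log^{7/4}n)$. The work on small-cell pairs, summed over the entire BFS, is $O(n)$: this is precisely the charging argument that gives the CS algorithm its linear-time bound, since each point of $P$ participates as a red or blue point in only $O(1)$ instances of Subproblem~\ref{subpro:10} across the whole BFS. For large-cell pairs, each of the three parameterized subroutines runs in $O(n\log n)$ time per instance (the decision oracle alone costs $O(n)$ per call and parametric search makes $O(\log n)$ calls); since at most $n^{1/4}\log^{3/4}n$ cells of $\calC$ are large, each has $O(1)$ neighbors in $\calC$, and each cell acts as a ``source'' in $O(1)$ BFS steps, the total number of large-cell instances across the entire BFS is $O(n^{1/4}\log^{3/4}n)$, giving $O(n^{5/4}\log^{7/4}n)$ overall. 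Summing the three contributions yields the claimed $O(n^{5/4}\log^{7/4}n)$ bound.

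The main obstacle I anticipate is verifying that the parametric-search subroutines, which in Section~\ref{subsec:bfs} were batched over all instances of $\calI$ in a single BFS step in order to save a logarithmic factor, remain correct when applied only to the large-cell instances: once an arbitrary $r\in(r_1,r_2)$ is fixed, the small-cell pairs produce combinatorially fixed answers (by Lemma~\ref{lem:40}) while the large-cell pairs still produce critical values that have to be sorted and binary-searched via the decision oracle. This carries over verbatim because restricting the batching to a subset of instances only decreases the set of critical values and can only tighten the resulting interval $(r_1,r_2]$; the invariant that $(r_1,r_2]$ contains $r^*$ and that the BFS behavior at every $r\in(r_1,r_2)$ matches the one at $r^*$ is preserved.
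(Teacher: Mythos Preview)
Your proposal is correct and follows essentially the same approach as the paper: handle small-cell pairs directly via the CS subroutine (using Lemma~\ref{lem:40} to fix the edge relation), handle each large-cell instance individually with the three parametric-search subroutines of Section~\ref{subsec:bfs} at $O(n\log n)$ cost apiece, and bound the number of large-cell instances over the whole BFS by $O(n^{1/4}\log^{3/4}n)$. The paper in fact dispenses with batching entirely for the large-cell instances (since there are so few), so your worry in the last paragraph is moot; otherwise the arguments coincide.
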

\begin{proof}
We define $S_i$ and $S_i(r)$ in the same way as in Section~\ref{subsec:bfs}. Initially, we set $S_0=\{s\}$. Before the $i$-step starts, we have an interval $(r_1,r_2]$.
Again, the algorithm maintains an invariant that the $i$-th step shrinks $(r_1,r_2]$ so that it contains $r^*$ and if $r^*\neq r_2$, then $S_i=S_i(r^*)=S_i(r)$ for any $r\in (r_1,r_2)$. Initially, the invariant trivially holds for $S_0$.

Consider the $i$-th step. Assume that the invariant holds for $S_{i-1}$, i.e., we have an interval $(r_1,r_2]$ containing $r^*$ such that if $r^*\neq r_2$, then $S_{i-1}=S_{i-1}(r)=S_{i-1}(r^*)$ for any $r\in (r_1,r_2)$, and $S_{i-1}$ is available to us. Using the grid information of $\Psi(P)$, we obtain the grid cells containing the points of $S_{i-1}$. For each such cell $C$, as before in Section~\ref{subsec:bfs}, we add  to $S_i$ the points of $P\cap C$ that have not been discovered yet. Then, for each neighbor $C'$ of $C$, we need to solve Subproblem~\ref{subpro:10}; we use $\calI$ to denote the set of instances of this subproblem in this step.

Consider two cells $C$ and $C'$ involved in an instance of $\calI$. If one of them is a large cell, then we run the same parametric search algorithm as in Section~\ref{subsec:bfs}, i.e., the three subroutines. As before, the time of the algorithm is bounded by $O(n\log n)$ and the algorithm shrinks the interval $(r_1,r_2]$ so that the algorithm invariant is maintained. Recall that in Section~\ref{subsec:bfs} we solve all problem instances in each step of the BFS algorithm altogether. Here instead it suffices to solve each problem instance individually. As the number of large cells is at most $O(n^{1/4} \log^{3/4} n)$, the total number of large-cell instances in the entire BFS algorithm is $O(n^{1/4} \log^{3/4} n)$. Hence, the total time for solving the large-cell instances of Subproblem~\ref{subpro:10} in the entire BFS is $O(n^{5/4}\log^{7/4} n)$.

We now consider the small-cell instance where both $C$ and $C'$ are small cells.
Note that in each instance of Subproblem~\ref{subpro:10}, all red points are in one cell, say, $C$, and all blue points are in the other cell $C'$.
Let $P_R$ be the set of red points in $C$ and $P_B$ be the set of blue points in $C'$.
According to Lemma~\ref{lem:40}, if $r^*\neq r_2$ (and thus $r^*\in (r_1,r_2)$), then for any point $p\in P_R$ and any point $p'\in P_B$, either $\lVert p-p' \rVert < r$ holds for all $r\in (r_1,r_2)$ or $\lVert p-p' \rVert > r$ holds for all $r\in (r_1,r_2)$, implying that $\lVert p-p' \rVert > r^*$ if and only if $\lVert p-p' \rVert > r$ for any $r\in (r_1,r_2)$.
Therefore, we can solve the subproblem in the following way. We first take any $r\in (r_1,r_2)$. Then we run the CS algorithm to solve the subproblem with $r$ as the radius, which takes $O(n_r+n_b)$ time. Note that the interval $(r_1,r_2]$ will not be changed in this case. Due to the preprocessing in  Lemma~\ref{lem:40}, the algorithm invariant still holds (i.e., $(r_1,r_2]$ contains $r^*$ and if $r^*\neq r_2$, then $S_i=S_i(r^*)=S_i(r)$ for any $r\in (r_1,r_2)$). The total time for solving the small-cell instances in the entire BFS is $O(n)$ because as in the CS algorithm each cell will be involved in at most $O(1)$ instances of the subproblem in the entire BFS algorithm.

After the $i$-th step, as before, we obtain the set $S_i$ and an interval $(r_1,r_2]$ containing $r^*$ such that if $r^*\neq r_2$, then $S_i=S_i(r^*)=S_i(r)$ for any $r\in (r_1,r_2)$. If $t\in S_i$ and $i\leq \lambda$, then we can stop the algorithm; by Lemma~\ref{lem:30}, we have $r^*=r_2$. If $t\not\in S_i$ and $i=\lambda$, we also stop the algorithm; by Lemma~\ref{lem:35}, we have $r^*=r_2$.

In summary, the overall time of the BFS algorithm is $O(n^{5/4}\log^{7/4} n)$.
\end{proof}

Combining with the algorithm of Lemma~\ref{lem:40}, the overall time of the algorithm for computing $r^*$ is $O(n^{5/4}\log^{7/4} n)$. We thus obtain the following theorem.

\begin{theorem}
The reverse shortest path problem for $L_2$ unweighted unit-disk graphs can be solved in $O(n^{5/4} \log^{7/4} n)$ time.
\end{theorem}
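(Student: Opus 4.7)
The plan is to combine the parametric-search BFS from Section~\ref{sec:first} with a one-shot preprocessing step that eliminates most of the expensive parametric-search work. First, I would build the grid $\Psi(P)$ exactly as in Section~\ref{sec:grid}, obtaining in $O(n\log n)$ time an initial interval $(r_1,r_2]$ containing $r^*$ such that (if $r^*\neq r_2$) the combinatorial structure of $\Psi_r(P)$ is fixed throughout $(r_1,r_2)$. In particular, the set $\calC$ of non-empty cells, the neighbor sets $N(C)$, and the subsets $P(C)$ are all well-defined independently of the choice of $r\in (r_1,r_2)$.

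Next, I would classify each $C\in\calC$ as \emph{large} if $|P(C)|\geq (n/\log n)^{3/4}$ and \emph{small} otherwise. This threshold is chosen to balance the two cost contributions; it yields at most $n^{1/4}\log^{3/4} n$ large cells and hence $O(n^{1/4}\log^{3/4} n)$ large-cell pairs (pairs $(C,C')$ with $C'\in N(C)$ in which at least one side is large), since $|N(C)|=O(1)$. The key preprocessing step is then to invoke Lemma~\ref{lem:40}, which in $O(n^{5/4}\log^{7/4}n)$ time shrinks $(r_1,r_2]$ so that, for every small-cell pair $(C,C')$ and every $p\in P(C)$, $p'\in P(C')$, the truth value of $\lVert p-p'\rVert\leq r$ is the same for all $r\in (r_1,r_2)$ as for $r=r^*$.

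With this preprocessing in hand, I would then run the BFS of Section~\ref{subsec:bfs}, treating the two types of instances of Subproblem~\ref{subpro:10} differently. For each large-cell instance, I reuse the three-subroutine parametric-search procedure from Section~\ref{subsec:bfs} individually, which costs $O(n\log n)$ per instance and preserves the invariant that $(r_1,r_2]$ contains $r^*$ and $S_j=S_j(r)=S_j(r^*)$ for all $r\in (r_1,r_2)$ and all $j$ processed so far. Summed over at most $O(n^{1/4}\log^{3/4} n)$ such instances across the entire BFS, the large-cell cost is $O(n^{5/4}\log^{7/4} n)$. For each small-cell instance I instead pick any $r\in(r_1,r_2)$ and solve the subproblem directly in linear time using the CS algorithm; correctness of doing this without calling the decision oracle is exactly what Lemma~\ref{lem:40} guarantees, and since each cell participates in $O(1)$ subproblem instances throughout the BFS, the total small-cell cost is $O(n)$.

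Finally, termination and correctness follow verbatim from Lemmas~\ref{lem:30} and~\ref{lem:35}: the BFS stops either once $t\in S_i$ for some $i\leq\lambda$ or once $i=\lambda$ with $t\notin S_i$, and in either case $r^*=r_2$. Adding the costs — $O(n\log n)$ for building the grid, $O(n^{5/4}\log^{7/4} n)$ for the small-cell preprocessing, $O(n^{5/4}\log^{7/4} n)$ for the large-cell BFS work, and $O(n)$ for small-cell BFS work — yields the claimed bound. The main obstacle is really Lemma~\ref{lem:40}, which we may assume; everything else is bookkeeping over the already-developed machinery of Section~\ref{sec:first}, together with the observation that $(n/\log n)^{3/4}$ is exactly the threshold at which the distance-selection cost $\sum_i n_i^{4/3}\log n_i$ over small cells matches the $O(n^{1/4}\log^{3/4}n)\cdot O(n\log n)$ parametric-search cost over large cells.
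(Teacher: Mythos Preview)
Your proposal is correct and follows essentially the same approach as the paper: build the grid, split cells at the threshold $(n/\log n)^{3/4}$, preprocess all small-cell pairs via Lemma~\ref{lem:40}, and then run the BFS of Section~\ref{subsec:bfs} using the three-subroutine parametric search only on the $O(n^{1/4}\log^{3/4}n)$ large-cell instances while handling small-cell instances directly with the CS algorithm. The cost accounting, the termination argument via Lemmas~\ref{lem:30} and~\ref{lem:35}, and the rationale for the threshold all match the paper.
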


\section{The $L_2$ weighted case}
\label{sec:WeightedRSPalgorithm}

We follow the notation introduced in Section~\ref{sec:introduction} and Section~\ref{sec:pre}, e.g., $P$, $G_r(P)$, $d_r(s,t)$, and $r^*$, but now defined for weighted unit-disk graphs. Our goal is to compute $r^*$. As discussed in Section~\ref{subsec:OurApproach}, our algorithm utilizes parametric search by parameterizing the WX algorithm~\cite{ref:WangNe20}. We begin with a review of the WX algorithm.



\subsection{A review of the WX algorithm}
\label{subsec:WX}

Given $P$, $r$, and a source point $s\in P$, the WX algorithm can compute shortest paths from $s$ to all points of $P$ in the weighted unit-disk graph $G_r(P)$, and the algorithm runs in $O(n\log^2 n)$ time.

For any point $p$ in the plane, let $\bigodot_p$ denote the disk centered at $p$ with radius $r$.

The first step is to implicitly build a grid $\Psi_r(P)$ of square cells whose side lengths are $r/\sqrt{2}$. For simplicity of discussion, we assume that every point of $P$ lies in the interior of a cell of $\Psi_r(P)$. A \emph{patch} of $\Psi_r(P)$ refers to a square area consisting of $5 \times 5$ cells. For a point $p \in P$, we use $\square_p$ to denote the cell of $\Psi_r(P)$ containing $p$ and use $\boxplus_p$ to denote the patch whose central cell is $\square_p$ (e.g., see Fig.~\ref{fig:patch}).
We refer to cells of $\boxplus_p\setminus \square_p$ as the {\em neighboring cells} of $\square_p$.
As the side length of each cell of $\Psi_r(P)$ is $r/\sqrt{2}$, any two points of $P$ in a single cell of $\Psi_r(P)$ must be connected by an edge in $G_r(P)$. Moreover, if an edge connects two points $p$ and $q$ in $G_r(P)$, then $q$ must lie in $\boxplus_p$ and vice versa.
For any subset $Q \subseteq P$ and a cell $\square$ (resp.,a patch $\boxplus$) of $\Psi_r(P)$, define $Q_{\square} = Q \cap \square$ (resp., $Q_{\boxplus} = Q \cap \boxplus$). The step of implicitly building the grid actually computes the subset $P_\square$ for each cell $\square$ of $\Psi_r(P)$ that contains at least one point of $P$ as well as associate pointers to each point $p \in P$ so that given any $p\in P$, the list of points of $P_{\square_p}$ (resp., $P_{\boxplus_p}$) can be accessed immediately. Building $\Psi_r(P)$ implicitly as above can be done in $O(n \log n)$ time, e.g., by the algorithm of Lemma~\ref{lem:grid}. 

\begin{figure}[t]
    \centering
    \includegraphics[width=2.0in]{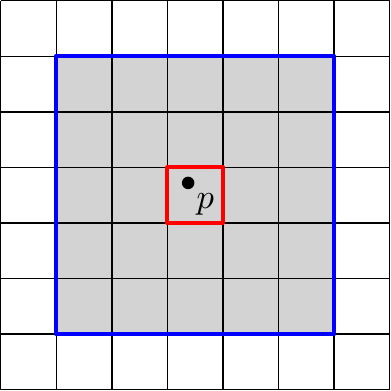}
    \caption{The red cell that contains the point $p$ is $\square_p$ and the square area bounded by blue segments is the patch $\boxplus_p$. All adjacent vertices of $p$ in $G_r(P)$ must lie in the grey region.}
    \label{fig:patch}
\end{figure}

The WX algorithm follows the basic idea of Dijkstra's algorithm and computes an array $dist[\cdot]$ for each point $p \in P$, where $dist[p]$ will be equal to $d_r(s,p)$ when the algorithm terminates.
Different from Dijkstra's shortest path algorithm, which picks a single vertex in each iteration to update the shortest path information of other adjacent vertices, the WX algorithm aims to update in each iteration the shortest path information for all points within one single cell of $\Psi_r(P)$ and pass on the shortest path information to vertices lying in the neighboring cells.

A key subroutine used in the WX algorithm is \textsc{Update}($U$, $V$), which updates the shortest path information for a subset $V \subseteq P$ of points by using the shortest path information of another subset $U \subseteq P$ of points. Specifically, the subroutine finds, for each $v \in V$, $q_v = \arg \min_{u \in U \cap \bigodot_v} \{dist[u] + \lVert u - v \rVert\}$ and update $dist[v] = \min \{dist[v], dist[q_v] + \lVert q_v - v \rVert\}$.

With the subroutine \textsc{Update}($U$, $V$) in hand, the WX algorithm works as follows (refer to Algorithm~\ref{algorithm:WX} for the pseudocode).



\begin{algorithm}[htbp]
    \DontPrintSemicolon
    \caption{The WX Algorithm~\cite{ref:WangNe20}}
    \label{algorithm:WX}

    \SetKwFunction{FSSSP}{WX}

    \SetKwProg{Fn}{Function}{:}{end}

    \Fn{\FSSSP{$P$, $s$}}
    {
        \For{each $p \in P$}
        {
            $dist[p] = \infty$ \;
        }
        $dist[s] = 0$ \;
        $Q = P$ \;
        \While{$Q \neq \emptyset$}
        {
            $z = \arg \min_{p \in Q} \{dist[p]\}$ \; 
            \label{line:findmin}
            \textsc{Update}$(Q_{\boxplus_{z}}, Q_{\square_{z}})$ \tcp{first update}
            \textsc{Update}$(Q_{\square_{z}}, Q_{\boxplus_{z}})$ \tcp{second update}

            $Q = Q \setminus Q_{\square_{z}}$ \;
            \label{line:remove}
        }

        \Return $dist[\cdot]$ \;
    }

\end{algorithm}



Initially, we set $dist[s] = 0$, $dist[p] = \infty$ for all other points $p \in P\setminus\{s\}$, and $Q=P$. Then we enter the main (while) loop. In each iteration, we find a point $z$ with minimum $dist$-value from $Q$, and then execute two update subroutines \textsc{Update}($Q_{\boxplus_z}$, $Q_{\square_z}$) and \textsc{Update}($Q_{\square_z}$, $Q_{\boxplus_z}$). Next, points of $Q_{\square_z}$ are removed from $Q$, because it can be shown that $dist[p]$ for all points $p\in Q_{\square_z}$ have been correctly computed~\cite{ref:WangNe20}. The algorithm stops once $Q$ becomes $\emptyset$.

The efficiency of the algorithm hinges on the implementation of the two update subroutines. We give some details below, which are needed in our RSP algorithm as well.

\subsubsection{The first update}
\label{subsubsec:TheFirstUpdate}

For the first update \textsc{Update}($Q_{\boxplus_z}$, $Q_{\square_z}$), the crucial step is finding a point $q_v \in Q_{\boxplus_z} \cap \bigodot_v$ for each point $v \in Q_{\square_z}$ such that $dist[q_v] + \lVert q_v - v \rVert$ is minimized. If we assign $dist[q]$ as a weight to each point $q \in Q_{\boxplus_z}$, then the problem is equivalent to finding the additively-weighted nearest neighbor $q_v$ from $Q_{\boxplus_z} \cap \bigodot_v$ for each $v \in Q_{\square_z}$. To this end, Wang and Xue~\cite{ref:WangNe20} proved a {\em key observation} that any point $q \in Q_{\boxplus_z}$ that minimizes $dist[q] + \lVert q - v \rVert$ must lie in $\bigodot_v$. This implies that for each point $v \in Q_{\square_z}$, its additively-weighted nearest neighbor in $Q_{\boxplus_z}$ is also its additively-weighted nearest neighbor in $Q_{\boxplus_z} \cap \bigodot_v$.
As such, $q_v$ for all $v \in Q_{\square_z}$ can be found by first building an additively-weighted Voronoi Diagram on points of $Q_{\boxplus_z}$~\cite{ref:FortuneA87} and then performing point locations for all $v \in Q_{\square_z}$~\cite{ref:EdelsbrunnerOp86,ref:KirkpatrickOp83,ref:SarnakPl86}. In this way, since $\sum_{z_i}|P_{\boxplus_{z_i}}|=O(n)$, where $z_i$ refers to the point $z$ in the $i$-th iteration of the main loop, the first updates for all iterations of the main loop can be done in $O(n \log n)$ time in total~\cite{ref:WangNe20}.

\subsubsection{The second update}
\label{subsubsec:TheSecondUpdate}

The second update \textsc{Update}($Q_{\square_z}$, $Q_{\boxplus_z}$) is more challenging because the above key observation no longer holds. Since $Q_{\boxplus_z}$ has $O(1)$ cells of $\Psi_r(P)$, it suffices to perform \textsc{Update}$(Q_{\square_{z}}, Q_{\square})$ for all cells $\square\in \boxplus_z$.

If $\square$ is $\square_z$, then $Q_{\square_z} = Q_{\square}$. Since the distance between any two points in $\square_z$ is at most $r$, we can easily implement \textsc{Update}$(Q_{\square_{z}}, Q_{\square})$ in $O(|Q_{\square_z}| \log |Q_{\square_z}|)$ time, by first building a additively-weighted Voronoi diagram on points of $Q_{\square_z}$ (each point $q\in Q_{\square_z}$ is assigned a weight equal to $dist[q]$), and then using it to find the additively-weighted nearest neighbor $q_v$ for each point $v\in Q_{\square_z}$.

If $\square$ is not $\square_z$, a useful property is that $\square$ and $\square_z$ are separated by an axis-parallel line. The WX algorithm implements \textsc{Update}$(Q_{\square_{z}}, Q_{\square})$ with the following three steps. Let $U=Q_{\square_{z}}$ and $V=Q_{\square}$.

\begin{enumerate}
    \item Sort points of $U$ as $\{u_1, u_2, ..., u_{|U|}\}$ such that $dist[u_1] \leq dist[u_2] \leq ... \leq dist[u_{|U|}]$.
    \item Compute $|U|$ disjoint subsets $\{V_1, V_2, ..., V_{|U|}\}$ with $V_i = \{v \in V\ |\ v \in \bigodot_{u_i} \; \text{and} \; v \notin \bigodot_{u_j} \; \text{for all} \; 1 \leq j < i \}$. Equivalently, for each point $v \in V$, $v$ is in $V_{i_v}$, where $i_v$ is the smallest index $i$ (if exists) such that $\bigodot_{u_i}$ contains $v$.

    \item Initialize $U' = \emptyset$. Proceed with $|U|$ iterations for $i = |U|, |U| - 1, ..., 1$ sequentially and do the following in each iteration for $i$: (1) Add $u_i$ to $U'$;
        (2) for each point $v \in V_i$,
        compute $q_v = \arg \min_{u \in U'} \{dist[u] + \lVert u - v \rVert\}$;
         (3) update $dist[v] = \min\{dist[v], dist[q_v] + \lVert q_v - v \rVert\}$.
\end{enumerate}

By the definition of $V_i$, $U \cap \bigodot_v \subseteq U' = \{u_{|U|}, u_{|U| - 1}, ..., u_i\}$ for each $v \in V_i$ in the iteration for $i$ of Step 3. Wang and Xue~\cite{ref:WangNe20} proved that $q_v$ found for each $v \in V_i$ in Step 3 must lie in $\bigodot_{v}$. They gave a method to implement Step 2 in $O(k \log k)$ time by making use of the property that $U$ and $V$ are separated by an axis-parallel line,
where $k = |U| + |V|$.
Step 3 can be considered as an offline insertion-only additively-weighted nearest neighbor searching problem and the WX algorithm solves the problem in $O(k\log^2 k)$ time using the standard logarithmic method~\cite{ref:BentleyDe79}, with $k = |U| + |V|$.

As such, the second updates for all iterations in the WX algorithm takes $O(n \log^2 n)$ time in total~\cite{ref:WangNe20}, which dominates the entire algorithm (other parts of the algorithm together takes $O(n\log n)$ time).


\subsection{The RSP algorithm}
\label{subsec:WeightedRSPalgorithm}

We now tackle the RSP problem, i.e., given $\lambda$ and $s,t\in P$, compute $r^*$. We will ``parameterize'' the WX algorithm reviewed above.

Recall that the decision problem is to decide whether $r^*\leq r$ for a given $r$. Notice that $r^*\leq r$ holds if and only if $d_r(s,t)\leq \lambda$. The decision problem can be solved in $O(n\log^2 n)$ time by running the WX algorithm on $r$. In the following, we refer to the WX algorithm as the {\em decision algorithm}. We say that $r$ is a {\em feasible value} if $r^*\leq r$ and an {\em infeasible value} otherwise.

As discussed in Section~\ref{subsec:OurApproach}, to find $r^*$, we run the decision algorithm with a parameter $r$ in an interval $(r_1, r_2]$ by simulating the algorithm on the unknown $r^*$.
The interval always contains $r^*$ but will be shrunk during course of the algorithm (for simplicity, when we say $(r_1, r_2]$ is shrunk, this also include the case that $(r_1, r_2]$ does not change).
Initially, we set $r_1=0$ and $r_2=\infty$.



The first step is to build a grid for $P$. The goal is to shrink $(r_1, r_2]$ so that it contains $r^*$ and if $r^*\neq r_2$ (and thus $r^*\in (r_1,r_2)$), for any $r\in (r_1,r_2)$, the grid $\Psi_r(P)$ has the same combinatorial structure as $\Psi_{r^*}(P)$ in the following sense: (1) Both grids have the same number of rows and columns; (2) for any point $p \in P$, $p$ lies in the $i$-th row and $j$-th column of $\Psi_{r}(P)$ if and only if $p$ lies in the $i$-th row and $j$-th column of $\Psi_{r^*}(P)$. This can be done by applying the algorithm in Lemma~\ref{lem:20} but replacing the CS algorithm with the WX algorithm as the decision algorithm. The runtime becomes $O(n\log^3n)$ because the WX algorithm runs in $O(n\log^2 n)$ time. 

Let $(r_1, r_2]$ denote the interval after building the grid. We pick any $r \in (r_1, r_2)$ and compute the grid information of $\Psi_r(P)$, which has the same combinatorial structure as $\Psi_{r^*}(P)$ if $r^*\neq r_2$.  Below, we will simply use $\Psi(P)$ to refer to the grid information computed above, meaning that it does not change with respect to $r\in (r_1,r_2)$.



We use $dist_r[\cdot]$, $Q(r)$, $z(r)$ respectively to refer to $dist[\cdot]$, $Q$, $z$ in the WX algorithm running on a parameter $r$.
We start with setting $dist_r[s] = 0$, $dist_r[p] = \infty$ for all $p \in P\setminus\{s\}$, and $Q(r)=P$. 

Next we enter the main loop. As long as $Q(r)\neq \emptyset$, in each iteration, we will find a point $z(r)$ with the minimum $dist_r$-value from $Q(r)$ and update $dist_r$-values for points in $Q(r)_{\square_{z(r)}} \cup Q(r)_{\boxplus_{z(r)}}$. Points in $Q(r)_{\square_{z(r)}}$ are then removed from $Q(r)$. Each iteration will shrink $(r_1,r_2]$ such that the following algorithm invariant is maintained: $(r_1,r_2]$ contains $r^*$ and if $r^*\neq r_2$, the following holds for all $r\in (r_1,r_2)$: $z(r)=z(r^*)$, $Q(r)=Q(r^*)$, and $dist_r[p] = dist_{r^*}[p]$ for all $p\in P$.

Consider an iteration of the main loop. We assume that the invariant holds before the iteration on the interval $(r_1,r_2]$, which is true before the first iteration. In the following, we describe our algorithm for the iteration and we will show that the invariant holds after the iteration. We assume that $r^*\neq r_2$. According to our invariant, for any $r\in (r_1,r_2)$, we have $z(r)=z(r^*)$, $Q(r)=Q(r^*)$, and $dist_r[p] = dist_{r^*}[p]$ for all $p\in P$.

We first find a point $z(r) \in Q(r)$ with the minimum $dist_r$-value. Since the invariant holds before the iteration, we have $z(r) = \arg \min_{p \in Q(r)} dist_r[p] = \arg \min_{p \in Q(r^*)} dist_{r^*}[p] = z(r^*)$.\footnote{When picking $z(r)$, we break ties following the same way as the WX algorithm. This guarantees $z(r)=z(r^*)$ even if ties happen.} Hence, no ``parameterization'' is needed in this step, i.e., all involved values in the computation of this step are independent of $r$.

Next, we perform the first update \textsc{Update}($Q(r)_{\boxplus_{z(r)}}$, $Q(r)_{\square_{z(r)}}$). This step also does not need parameterization. Indeed, for each point $p\in Q(r)_{\boxplus_{z(r)}}$, we assign $dist_r[p]$ to $p$ as a weight, and then construct the additively-weighted Voronoi diagram on $Q(r)_{\boxplus_{z(r)}}$. For each point $v \in Q(r)_{\square_{z(r)}}$, we use the diagram to find its additively-weighted nearest neighbor $q_v(r) \in Q(r)_{\boxplus_{z(r)}}$ and update $dist_r[v] = \min \{dist_r[v], dist_r[q_v(r)] + \lVert q_v(r) - v \rVert\}$. Since $z(r)=z(r^*)$, and $Q(r)=Q(r^*)$, we have $Q(r)_{\boxplus_{z(r)}} = Q(r^*)_{\boxplus_{z(r^*)}}$ and $Q(r)_{\square_{z(r)}} = Q(r^*)_{\square_{z(r^*)}}$. Further, since $dist_r[p] = dist_{r^*}[p]$ for all $p \in P$,
for each point $v \in Q(r)_{\square_{z(r)}}$, $q_v(r)=q_v(r^*)$ and
each updated $dist_r[v]$ in our algorithm is equal to the corresponding updated $dist_{r^*}[v]$ in the same iteration of the WX algorithm running on $r^*$. As such, the invariant still holds after the first update.

Implementing the second update \textsc{Update}($Q(r)_{\square_{z(r)}}$, $Q(r)_{\boxplus_{z(r)}}$) is more challenging and parameterization is necessary. It suffices to implement \textsc{Update}($Q(r)_{\square_{z(r)}}$, $Q(r)_{\square}$) for all cells $\square\in \boxplus_{z(r)}$.

If $\square$ is $\square_{z(r)}$, then $Q(r)_{\square_{z(r)}} = Q(r)_{\square}$. In this case, again no parameterization is needed.
Since the distance between any two points in $\square_{z(r)}$ is at most $r$, we can easily implement \textsc{Update}$(Q(r)_{\square_{z(r)}}, Q(r)_{\square})$ in $O(|Q(r)_{\square_z(r)}| \log |Q(r)_{\square_z(r)}|)$ time, by first building a additively-weighted Voronoi diagram on points of $Q(r)_{\square_{z(r)}}$ (each point $p\in Q(r)_{\square_{z(r)}}$ is assigned a weight equal to $dist_r[p]$), and then using it to find the additively-weighted nearest neighbor $q_v(r)$ for each point $v\in Q(r)_{\square_z}$.
By an analysis similar to the above first update, the invariant still holds.

We now consider the case where $\square$ is not $\square_{z(r)}$. In this case, $\square$ and $\square_{z(r)}$ are separated by an axis-parallel line $\ell$. Without loss of generality, we assume that $\ell$ is horizontal and $\square_{z(r)}$ is below $\ell$.
Since $z(r)=z(r^*)$ and $Q(r)=Q(r^*)$ for all $r\in (r_1,r_2)$, we let $U=Q(r)_{\square_{z(r)}}$ and $V=Q(r)_{\square}$, meaning that both $U$ and $V$ are independent of $r\in (r_1,r_2)$.
Recall that there are three steps in the second update of the decision algorithm. Our algorithm needs to simulate all three steps. As will be seen later, only the second step needs parameterization.

The first step is to sort points in $U$ by their $dist_{r}$-values. Since $dist_r[p] = dist_{r^*}[p]$ for all $p \in P$, the sorted list $\{u_1, u_2, ..., u_{|U|}\}$ of $U$ obtained in our algorithm is the same as the sorted list obtained in the decision algorithm running on $r^*$.

For any $r$, we use $\bigodot_p(r)$ to denote the disk centered at a point $p$ with radius $r$.

The second step is to compute $|U|$ disjoint subsets $\{V_1(r), V_2(r), ..., V_{|U|}(r)\}$ of $V$ such that $V_i(r) = \{v\ |\ i_v(r) = i, v\in V\}$, where $i_v(r)$ is the smallest index such that $\bigodot_{u_{i_v(r)}}(r)$ contains point $v$.
This step needs parameterization. We will shrink the interval $(r_1, r_2]$ so that it still contains $r^*$ and if $r^* \neq r_2$, then for any $r \in (r_1, r_2)$, $V_i(r) = V_i(r^*)$ holds for all $1 \leq i \leq |U|$ (it suffices to ensure $i_v(r) = i_v(r^*)$ for all $v \in V$).
Our algorithm relies on the following observation, which is based on the definition of $i_v(r)$.
\begin{observation}\label{obser:10}
For any point $v \in V$, if $\bigodot_{u_j}(r)$ contains $v$ with $1 \leq j \leq |U|$, then $i_v(r) \leq j$.
\end{observation}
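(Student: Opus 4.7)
The plan is to observe that this statement is an immediate consequence of the definition of $i_v(r)$, so essentially no work is required beyond unpacking that definition. Recall that $i_v(r)$ is defined as the smallest index $i$ in $\{1, 2, \ldots, |U|\}$ such that $\bigodot_{u_i}(r)$ contains $v$. Let $I_v(r) = \{i : 1 \leq i \leq |U|, \ v \in \bigodot_{u_i}(r)\}$; then $i_v(r) = \min I_v(r)$. The hypothesis of the observation says $v \in \bigodot_{u_j}(r)$ for the specific index $j$, which precisely asserts $j \in I_v(r)$. Therefore $i_v(r) = \min I_v(r) \leq j$, as claimed.

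There is no genuine obstacle, and my proposed proof is a single line. The only minor subtlety worth noting is well-definedness: $i_v(r)$ requires $I_v(r)$ to be nonempty, and the hypothesis that some index $j$ satisfies $v \in \bigodot_{u_j}(r)$ is exactly what guarantees this, so the minimum exists and the inequality is meaningful. The significance of the observation for the RSP algorithm is not in its depth but in its use: it tells us that to show $i_v(r) = i_v(r^*)$ throughout an interval $(r_1, r_2)$ it suffices to control, for each pair $(u_i, v)$, whether $v \in \bigodot_{u_i}(r)$ changes as $r$ varies, which in turn reduces to controlling the critical values $r = \lVert u_i - v \rVert$. I expect the subsequent development to exploit Observation~\ref{obser:10} in exactly this way when shrinking $(r_1,r_2]$ during the second update.
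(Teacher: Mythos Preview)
Your proof is correct and matches the paper's treatment: the paper likewise states that the observation is ``based on the definition of $i_v(r)$'' and gives no further argument. Your forward-looking remark about how the observation will be used is not quite how the paper proceeds---rather than controlling all pairwise critical values $\lVert u_i - v\rVert$ directly, the paper uses Observation~\ref{obser:10} to drive a divide-and-conquer on the sorted list $U$ (Lemma~\ref{lem:partition})---but this does not affect the validity of your proof.
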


For a subset $P'\subseteq P$, let $\calF_r(P')$ denote the union of
the disks centered at points of $P'$ with radius $r$.
We first solve a subproblem in the following lemma.


\begin{lemma}
    \label{lem:subproblem}
    Suppose $(r_1,r_2]$ contains $r^*$ such that if $r^*\neq r_2$,
	then for all $r\in (r_1,r_2)$, $dist_r[p]=dist_{r^*}[p]$ for all
	points $p\in P$. For a subset $U'\subseteq U$ and a subset
	$V'\subseteq V$, in $O(n \log^2 n \cdot
	\log (|U'| + |V'|))$ time we can shrink $(r_1,r_2]$ so that it
	still contains $r^*$ and if $r^*\neq
	r_2$, then for all $r\in (r_1,r_2)$, for any $v\in V'$, $v$ is contained in
	$\calF_{r}(U')$ if and only if $v$ is contained in
	$\calF_{r^*}(U')$.
\end{lemma}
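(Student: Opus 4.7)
The plan is to reduce the problem to a binary search over a set of $O(|V'|)$ critical radii. For each $v \in V'$, define $c_v = \min_{u \in U'} \lVert u - v \rVert$, the distance from $v$ to its nearest neighbor in $U'$. Observe that $v \in \calF_r(U')$ if and only if $c_v \leq r$. Consequently, if the open interval $(r_1, r_2)$ contains no value of $\{c_v : v \in V'\}$, then for every $r \in (r_1, r_2)$ and every $v \in V'$, whether $v$ lies in $\calF_r(U')$ is determined solely by which side of $(r_1, r_2)$ the value $c_v$ falls on; since $r^* \in (r_1, r_2)$ as well (under the assumption $r^* \neq r_2$), the conclusion of the lemma follows. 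Hence it suffices to shrink $(r_1, r_2]$ so that its interior contains no $c_v$.

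First, I would compute all critical values $c_v$ for $v \in V'$ using a standard nearest-neighbor structure on $U'$: build the Voronoi diagram of $U'$ in $O(|U'| \log |U'|)$ time and equip it with an optimal point-location data structure; then for each $v \in V'$, a point location query returns its nearest neighbor in $U'$ and therefore $c_v$ in $O(\log |U'|)$ time. The total cost of this preprocessing is $O((|U'| + |V'|) \log(|U'| + |V'|))$.

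Next, I would binary-search the set $\calQ = \{c_v : v \in V'\}$ to find the largest infeasible value $c^-$ and smallest feasible value $c^+$ lying in $(r_1, r_2]$, using the WX algorithm as the decision oracle (each call costs $O(n \log^2 n)$ time). Rather than sorting $\calQ$ up front, I would apply the linear-time selection algorithm~\cite{ref:BlumTi73} at each round: pick the median of the currently active candidates, invoke the decision algorithm on it, and discard the half known to lie entirely on the wrong side of $r^*$. After $O(\log |V'|)$ rounds only $O(1)$ candidates remain, from which $c^-$ and $c^+$ are read off. Finally I update $r_1 \leftarrow \max\{r_1, c^-\}$ and $r_2 \leftarrow \min\{r_2, c^+\}$; by construction the new interval still contains $r^*$ and its interior contains no value of $\calQ$, which is precisely what is needed.

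The main obstacle I anticipate is keeping the logarithmic factors tight: one must avoid sorting $\calQ$ explicitly (which would add an $O(|V'| \log |V'|)$ term that, while within budget, is wasteful) and must be careful that the binary search only ever considers candidates inside the current interval $(r_1, r_2]$, since values outside cannot further constrain the bounds. A second minor point is to verify that the invariant $r^* \in (r_1, r_2]$ is preserved: this holds because the decision oracle is monotone and $c^- < r^* \leq c^+$ by the definitions of infeasible and feasible. Summing the costs yields $O((|U'| + |V'|)\log(|U'| + |V'|) + n \log^2 n \cdot \log |V'|) = O(n \log^2 n \cdot \log(|U'| + |V'|))$, matching the claimed bound.
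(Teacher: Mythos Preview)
Your proof is correct, and it takes a genuinely different route from the paper's.

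The paper treats the task as an instance of Subproblem~\ref{subpro:10} and reuses the three-subroutine machinery from Section~\ref{subsec:bfs}: (i) fix the combinatorial structure of the upper envelope $\calU_r(U')$ via Voronoi-vertex critical values, (ii) sort the envelope vertices together with the points of $V'$ using Cole's parametric search with an AKS sorting network, and (iii) resolve, for each $v\in V'$, whether it lies below the arc of $\calU_r(U')$ that spans it. All three subroutines call the WX decision algorithm $O(\log(|U'|+|V'|))$ times.

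Your argument bypasses all of this by observing that $v\in\calF_r(U')$ is equivalent to $c_v\le r$ with $c_v=\min_{u\in U'}\lVert u-v\rVert$, so the only critical radii that matter are the $|V'|$ nearest-neighbor distances, computable once and for all from an ordinary (non-parametric) Voronoi diagram of $U'$. A single binary search over $\{c_v\}$ with the WX oracle then suffices. This is strictly more elementary: no upper-envelope tracking, no Cole/AKS parametric sort, and the same $O(n\log^2 n\cdot\log(|U'|+|V'|))$ bound. The paper's approach, on the other hand, is a direct carry-over of the unweighted framework and dovetails with how Lemma~\ref{lem:partition} later batches the three subroutines across a level of the recursion tree; your critical-value set $\{c_v\}$ can be batched in exactly the same way, so nothing is lost downstream.
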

\begin{proof}
    \label{proof:lem-subproblem}
    Recall that all points of $U$ are below $\ell$ and all points of $V$ are above $\ell$. 
    For any $r$, the problem to determine whether $v$ is contained in $\calF_r(U')$ for each $v\in V'$ is an instance of Subproblem~\ref{subpro:10} (i.e., consider the points of $U'$ as red points and the points of $V'$ as blue points). Recall that solving Subproblem~\ref{subpro:10} for a fixed $r$ involves three subroutines and we also give a parameterized algorithm for solving it on the unknown $r^*$ in Section~\ref{subsec:bfs} for the unweighted case. Here, to achieve the lemma, we can essentially apply the same algorithm as in Section~\ref{subsec:bfs} but instead use the WX algorithm as the decision algorithm. We sketch it below.
    
    Let $\calU_r(U')$ denote the upper envelope of the portions of the disks $\bigodot_u(r)$ above $\ell$ for all $u\in U'$.
	A point $v\in V'$ is in $\calF_{r^*}(U')$ if and only if
	$v$ is below $\calU_r(U')$.
    The algorithm has three subroutines. 
     The first subroutine is to shrink $(r_1,r_2]$ so that it still	contains $r^*$ and if $r^*\neq r_2$, then for all $r\in (r_1,r_2)$, $\calU_r(U')$ has the same combinatorial structure as $\calU_{r^*}(U')$.
     This can be done by applying the algorithm of Section~\ref{subsubsec:ComputingTheUpperEnvelope} but using the WX algorithm as the decision algorithm.      
The second subroutine is to shrink $(r_1, r_2]$ such that it still contains $r^*$ and if $r^* \neq r_2$, then for all $r \in (r_1, r_2)$, the sorted list of the vertices of $\calU_r(U')$ and all points of $V'$ is the same as the sorted list of the vertices of $\calU_{r^*}(U')$ and all points of $V'$. This can be done by applying the algorithm of Section~\ref{subsubsec:ConnectivitiesBetweenBRPoints} but using the WX algorithm as the decision algorithm.
     The third subroutine is to shrink $(r_1, r_2]$ so that $(r_1, r_2]$ contains $r^*$ and if $r^* \neq r_2$, then for any $r \in (r_1, r_2)$, for any $v\in V'$, $v$ is below the arc spanning it in $\calU_r(U')$ if and only if $v$ is below the arc spanning it in $\calU_{r^*}(U')$. This can be done by applying the algorithm of Section~\ref{subsubsec:DecidingWhetherEachBluePoint} but using the WX algorithm as the decision algorithm.
     Following the analysis of Sections~\ref{subsubsec:ComputingTheUpperEnvelope},~\ref{subsubsec:ConnectivitiesBetweenBRPoints}, and \ref{subsubsec:DecidingWhetherEachBluePoint},
     the total time of the algorithm is bounded by $O(n \log^2 n \cdot \log (|U'| + |V'|))$ because the decision algorithm runs in $O(n\log^2 n)$ time (and both $|U'|$ and $|V'|$ are no more than $n$).
\end{proof}

Recall that we have an interval $(r_1, r_2]$. Our goal is to shrink it so that it still contains $r^*$ and if $r^* \neq r_2$, then for any $r \in (r_1, r_2)$, $V_i(r) = V_i(r^*)$ holds for all $1 \leq i \leq |U|$.
Based on Observation~\ref{obser:10} and using Lemma~\ref{lem:subproblem}, we have the following lemma.

\begin{lemma}
\label{lem:partition}
We can shrink the interval $(r_1, r_2]$ in $O(n \log^4 n)$ time so that it still contains $r^*$ and if $r^* \neq r_2$, then for any $r \in (r_1, r_2)$, $V_i(r) = V_i(r^*)$ holds for all $1 \leq i \leq |U|$.
\end{lemma}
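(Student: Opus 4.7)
The plan is to organize the search for $i_v(r)$ across all $v\in V$ using a balanced binary tree $T$ over the index range $\{1,2,\ldots,|U|\}$, and to process $T$ level by level, applying a \emph{batched} version of Lemma~\ref{lem:subproblem} once per level to handle all subproblems at that level simultaneously. Each node $x$ of $T$ represents an index range $[a_x,b_x]$ (so the root represents $[1,|U|]$). I will maintain the invariant that, at each visited node $x$, an associated set $V_x\subseteq V$ contains exactly those blue points $v$ for which it has been determined, stably on $r\in(r_1,r_2)$, that $i_v(r)\in[a_x,b_x]$. At the root, $V_{\mathrm{root}}=V$.

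To process one level of $T$, let $x_1,\ldots,x_p$ be the non-empty nodes at that level and set $m_{x_j}=\lfloor(a_{x_j}+b_{x_j})/2\rfloor$. For each $v\in V_{x_j}$, by Observation~\ref{obser:10} and the induction hypothesis $i_v(r)\ge a_{x_j}$, the predicate ``$i_v(r)\le m_{x_j}$'' is equivalent to ``$v\in\calF_r(U'_{x_j})$'' where $U'_{x_j}:=\{u_{a_{x_j}},\ldots,u_{m_{x_j}}\}$. So I form the collection of disjoint pairs $\{(U'_{x_j},V_{x_j})\}_j$, each an instance of Subproblem~\ref{subpro:10} with separating line $\ell$, and run Lemma~\ref{lem:subproblem} on the entire collection in one batched invocation; this shrinks $(r_1,r_2]$ so that each test's outcome becomes stable. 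Then I route $v\in V_{x_j}$ to the left or right child of $x_j$ according to the outcome.

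The crucial cost-analysis step is the batching itself: the three subroutines that prove Lemma~\ref{lem:subproblem} (computing upper envelopes in Section~\ref{subsubsec:ComputingTheUpperEnvelope}, sorting envelope vertices with blue points in Section~\ref{subsubsec:ConnectivitiesBetweenBRPoints}, and testing blue-point positions in Section~\ref{subsubsec:DecidingWhetherEachBluePoint}) are each already written in the excerpt to \emph{combine critical values, sorting tasks, and point-in-region tests across multiple disjoint Subproblem~\ref{subpro:10} instances}, so that the dominant cost is $O(\log N)$ invocations of the WX decision algorithm rather than per-instance overhead, where $N$ is the total size of the instances. At a fixed level, the $U'_{x_j}$'s are disjoint subsets of $U$ (each is the left half of its node's range) and the $V_{x_j}$'s partition a subset of $V$, so $N=\sum_j(|U'_{x_j}|+|V_{x_j}|)\le |U|+|V|=O(n)$. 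Consequently, one level costs $O(n\log^2 n\cdot\log n)=O(n\log^3 n)$, and with $O(\log|U|)=O(\log n)$ levels the total time is $O(n\log^4 n)$.

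After all levels have been processed, each $v$ sits at a leaf whose range is the single index $i_v(r)$, and this assignment is stable for all $r\in(r_1,r_2)$; hence $V_i(r)=V_i(r^*)$ for every $i$. The main obstacle I expect is verifying rigorously that the batched application inherits the $O(n\log^2 n\cdot\log N)$ cost of Lemma~\ref{lem:subproblem}: this requires checking that (a) the union of the ``critical value'' sets from the separate upper-envelope computations still has size $O(N)$, (b) the sorted-list/Cole-parametric-search step merges the vertex sets from different instances without asymptotic blowup, and (c) the final ``is $v$ below its spanning arc'' check can be batched across all instances with a single $O(\log N)$ binary search using the decision algorithm; each of these inherits directly from the arguments already in Sections~\ref{subsubsec:ComputingTheUpperEnvelope}--\ref{subsubsec:DecidingWhetherEachBluePoint}, now carried out with the $O(n\log^2 n)$-time WX algorithm instead of the CS algorithm.
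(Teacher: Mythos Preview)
Your proposal is correct and is essentially the same as the paper's proof: both build a balanced binary recursion over the index range of $U$, at each node use Lemma~\ref{lem:subproblem} (via Observation~\ref{obser:10}) to split $V'$ according to membership in $\calF_r$ of the left half of $U'$, and crucially batch all subproblems at a given level into a single application of the three subroutines so that each of the $O(\log|U|)$ levels costs $O(n\log^3 n)$. The one small detail you omit and the paper spells out is the base case at the rightmost leaf, where a point $v\in V$ may fail to lie in any $\bigodot_{u_i}(r^*)$ (so $i_v$ is undefined and $v$ belongs to no $V_i$); the paper resolves this with one extra call to Lemma~\ref{lem:subproblem} on $(\{u_{|U|}\},V')$, which does not affect the asymptotic bound.
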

\begin{proof}
\label{proof:lem-partition}
To have $V_i(r) = V_i(r^*)$ for all $1 \leq i \leq |U|$, it
suffices to ensure $i_v(r) = i_v(r^*)$ for all points $v \in V$.
Let $M=|U|$ and $N=|V|$. Note that $M\leq n$ and
$N\leq n$.

As defined in the proof of Lemma~\ref{lem:subproblem}, for any subset
$U' \subseteq U$ and any $r$, we use $\calU_r(U')$ to denote the upper
envelope of the portions of $\bigodot_u(r)$ above $\ell$ for all $u\in U'$.

In light of Observation~\ref{obser:10}, we use the divide and conquer approach.
Recall that $U=\{u_1,u_2,\ldots,u_M\}$. Consider the following subproblem on $(U,V)$: shrink
$(r_1, r_2]$ so that it still contains $r^*$ and if $r^* \neq r_2$, then for any $r \in (r_1,
r_2)$, for any $v\in V$, $v$ is below $\calU_r(U_1)$ if and only if $v$ is below $\calU_{r^*}(U_1)$, where
$U_{1}$ is the first half of $U$, i.e., $U_1= \{u_1, u_2,..., u_{\lfloor\frac{M}{2}\rfloor}\}$.
The subproblem can be solved in $O(n\log^3 n)$ time by applying Lemma~\ref{lem:subproblem}. Next, we pick any $r\in (r_1,r_2)$ and compute $\calU_r(U_1)$ and find the subset $V_1$ of the points of
$V$ that are below $\calU_r(U_1)$ (e.g., see
Fig.~\ref{fig:DivideAndConquer_1}). By Observation~\ref{obser:10}, for
each point $v\in V$, $i_v(r)\leq \lfloor\frac{M}{2}\rfloor$ if $v\in
V_1$ and $i_v(r)> \lfloor\frac{M}{2}\rfloor$ otherwise. By the above
property of $(r_1,r_2]$, for each point $v\in V$, we also have
$i_v(r^*)\leq \lfloor\frac{M}{2}\rfloor$ if $v\in V_1$ and $i_v(r^*)>
\lfloor\frac{M}{2}\rfloor$ otherwise.

\begin{figure}[t]
\begin{minipage}[t]{\linewidth}
\begin{center}
\includegraphics[totalheight=1.5in]{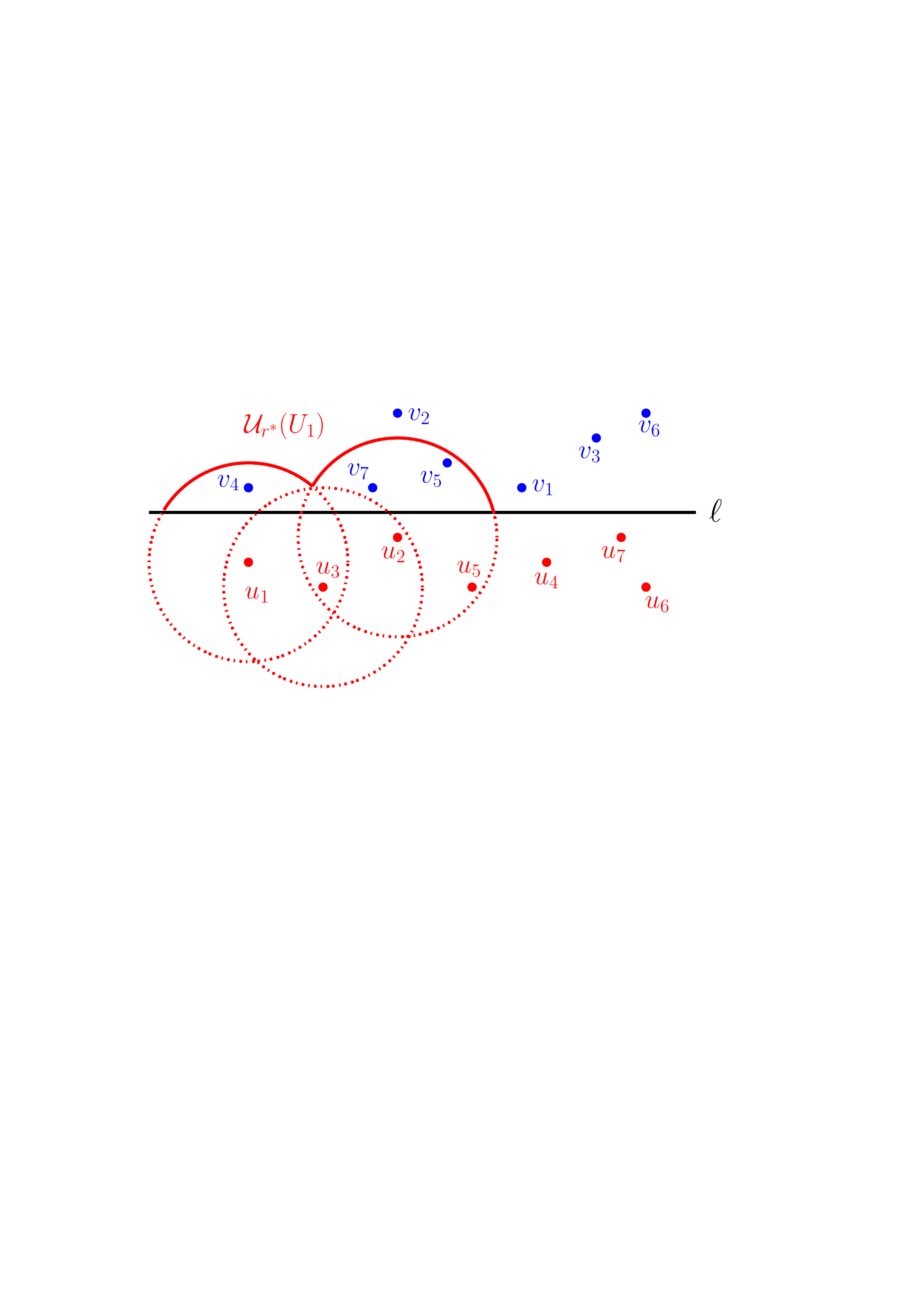}
\caption{Illustrating $U_1$ and $V_1$, where $U_1=\{u_1,u_2,u_3\}$ and $V_1=\{v_4,v_5,v_7\}$. The solid arcs are on $\calU_{r^*}(U_1)$. }\label{fig:DivideAndConquer_1}
\end{center}
\end{minipage}
\end{figure}

Next, we solve two subproblems recursively: one on $(U_1,V_1)$ and the other on $(U\setminus U_1, V\setminus V_1)$. Both subproblems use $(r_1,r_2]$ as their ``input intervals'' and solving each subproblem will produce a shrunk ``output interval'' $(r_1,r_2]$. Consider a subproblem on $(U',V')$ with $U'\subseteq U$ and $V'\subseteq V$. If $|U'|=1$, then we solve the problem ``directly'' (i.e., this is the base case) as follows. Assume that $r^*\neq r_2$ and let $r$ be any value in $(r_1,r_2)$. Let $u_j$ be the only point of $U'$. If $j<M$, according to our algorithm and based on Observation~\ref{obser:10}, $i_v(r)=i_v(r^*)=j$ holds for all points $v\in V'$. If $j=M$, however, for each point $v\in V'$, it is possible that $v$ is not contained in $\bigodot_{u}(r^*)$ for any point $u\in U$, in which case $v$ is not below $\calU_{r^*}(U)$ and thus is not below $\calU_{r^*}(U')$. On the other hand, if $v$ is below $\calU_{r^*}(U')$, then $i_v(r^*)=M$.
To solve the problem, we can simply apply Lemma~\ref{lem:subproblem} on $U'$ and $V'$, after which we obtain an interval $(r_1,r_2]$. Then, we pick any $r\in (r_1,r_2)$ and for any $v\in V'$ with $v$ contained in $\bigodot_{u_M}(r)$, $i_v(r)=i_v(r^*)=M$ holds if $r^*\neq r_2$.

The above divide-and-conquer algorithm can be viewed as a binary tree structure $T$ in which each node represents a subproblem. Clearly, the height of $T$ is $O(\log M)$ and $T$ has $\Theta(M)$ nodes.
If we solve each subproblem individually by Lemma~\ref{lem:subproblem} as described above, then the algorithm would take $\Omega(Mn)$ time because there are $\Omega(M)$ subproblems and solving each subproblem by Lemma~\ref{lem:subproblem} takes $\Omega(n)$ time, which would result in an $\Omega(n^2)$ time algorithm in the worst case. To reduce the runtime, instead, we solve subproblems at the same level of $T$ simultaneously (or ``in parallel'') by applying the algorithm of Lemma~\ref{lem:subproblem}, as follows.

Consider all subproblems in the same level of $T$; let $S$ denote the set of all these subproblems. There is an input interval $(r_1,r_2]$ for all subproblems of $S$, which is true initially at the root for $(U,V)$. After solving all subproblems in this level, our algorithm will produce a single shrunk interval $(r_1,r_2]$, which will be used as the input interval for all subproblems in the next level of $T$.

Recall that the algorithm of Lemma~\ref{lem:subproblem} has three subroutines (which follow the algorithm in Section~\ref{subsec:bfs}), each of which involves computing a set of critical values and then performing binary search on them using the decision algorithm to shrink the interval $(r_1,r_2]$. To solve all subproblems of $S$ simultaneously using the algorithm of Lemma~\ref{lem:subproblem}, our idea is that in each of the three subroutines, we perform binary search on the critical values of all subproblems of $S$ (this again follows the same way as in Section~\ref{subsec:bfs}, where critical values of all instances of $\calI$ are considered altogether), i.e., we solve all these subproblems ``in parallel''. In this way, solving all subproblems of $S$ together only needs to call the decision algorithm $O(\log n)$ times. The details are given below.

For the first subroutine, the goal is to determine the combinatorial structure of the upper envelope. For each subproblem on $(U',V')$, we compute the Voronoi diagram for $U'$ and then find the critical values. Notice that the subsets $U'$ (resp., $V'$) for all subproblems of $S$ form a partition of $U$ (resp., $V$), and thus the total time for building the diagram and computing the critical values for all subproblems of $S$ takes $O((M+N)\log (M+N))$ time in total. Also, the total number of critical values is $O(N)$. Performing the binary search on these critical values as before can be done in $O(n\log^2 n \cdot \log N)$ time, after which we obtain a shrunk interval $(r_1,r_2]$. This finishes the first subroutine for all subproblems of $S$, which takes $O(n\log^3 n)$ time (since $M\leq n$ and $N\leq n$).

The second subroutine is to sort all points of $V'$ in each subproblem on $(U',V')$ along with the vertices of the upper envelope $\calU_{r^*}(U')$. We now put all involved points of all subproblems of $S$ in one coordinate system and sort them altogether (in the same way as in Section~\ref{subsubsec:ConnectivitiesBetweenBRPoints}). Since the subsets $V'$ (resp., $U'$) of all subproblems of $S$ form a partition of $V$ (resp., $U$), the total number of points in the subsets $V'$ in all subproblems of $S$ is $N$. Also, the number of vertices of $\calU_{r^*}(U')$ is proportional to $|U'|$. Hence, the total number of vertices of the upper envelopes $\calU_{r^*}(U')$ in all subproblems of $S$ is $O(M)$. As such, the total number of points we need to sort is $O(M+N)$. We apply the same algorithm as before to sort them, i.e., Cole's parametric search~\cite{ref:ColeSl87} with AKS sorting network~\cite{ref:AjtaiPr83} and our decision algorithm. 
Sorting all involved points can be done in $O(n\log^2 n \cdot \log (M+N))$ time, after which a shrunk interval $(r_1,r_2]$ is obtained. This finishes the second subroutine for all subproblems of $S$, which takes $O(n\log^3 n)$ time.


For the third subroutine, we collect the critical values in each subproblem of $S$ in the same way as before. The total number of critical values for all subproblems is $N$. We perform binary search on these critical values in the same way as before, after which a shrunk interval $(r_1,r_2]$ is obtained. The total time is $O(n\log^2 n \cdot \log N)$. This finishes the third subroutine for all subproblems, which takes $O(n\log^3 n)$ time. The final interval $(r_1,r_2]$ will be used as the input interval for all subproblems in the next level of $T$.

In summary, solving all subproblems in the same level of $T$ can be done in $O(n\log^3n)$ time. As $T$ has $O(\log M)$ levels, the total time of the overall algorithm is $O(n\log^4n)$.
\end{proof}

With Lemma~\ref{lem:partition}, we obtain subsets $\{V_1(r), V_2(r), ..., V_{|U|}(r)\}$ and an interval $(r_1, r_2]$ containing $r^*$ such that if $r^* \neq r_2$, for any $r \in (r_1, r_2)$, $V_i(r) = V_i(r^*)$ holds for all $1\leq i\leq |U|$. Note that neither the array $dist_r[\cdot]$ nor $Q(r)$ is modified during the algorithm of Lemma~\ref{lem:partition}. Hence, if $r^*\neq r_2$, for all $r\in (r_1,r_2]$, we still have $Q(r)=Q(r^*)$ and $dist_r[p]=dist_{r^*}[p]$ for all points $p\in P$. Thus, our algorithm invariant still holds. This finishes the second step of the second update.

The third step of the second update is to solve the offline insertion-only additively-weighted nearest neighbor searching problem. This step does not need parameterization. Similar to the first update, we pick any $r \in (r_1, r_2)$ and apply the WX algorithm directly. Indeed, the algorithm on $r^*$ only relies on the following information: $U$ and its sorted list by $dist_{r^*}[\cdot]$ values and the subsets $V_1(r^*),\ldots,V_{|U|}(r^*)$. Recall that if $r^*\neq r_2$, then for all $r\in (r_1,r_2)$, $dist_r[p]=dist_{r^*}[p]$ for all $p\in P$, and $V_i(r)=V_i(r^*)$ for all $1\leq i\leq |U|$. As such, if we pick any $r \in (r_1, r_2)$ and apply the WX algorithm directly,  $dist_r[v]=dist_{r^*}[v]$ holds for all points $v\in V$ after this step. Therefore, as in the WX algorithm, this step can be done in $O(k\log^2 k)$ time, where $k=|U|+|V|$.

This finishes the second update of the algorithm. As discussed above, the algorithm invariant holds for the interval $(r_1,r_2]$.


The final step of the iteration is to remove points in $Q(r)_{\square_{z(r)}}$ from $Q(r)$. Since if $r^* \neq r_2$, for all $r \in (r_1, r_2)$, $Q(r)=Q(r^*)$, $z(r) = z(r^*)$, and $Q(r)_{\square_{z(r)}}=Q(r^*)_{\square_{z(r^*)}}$,
$Q(r)=Q(r^*)$ still holds after this point removal operation. Therefore, our algorithm invariant holds after the iteration.

In summary, each iteration of our algorithm takes $O(n \log^4 n)$ time. If the point $t$ is contained in $\square_{z(r)}$ (i.e., $t$ is reached) in the current iteration, then we terminate the algorithm. The following lemma shows that we can simply return $r_2$ as $r^*$.

\begin{lemma}
\label{lem:correctness}
     Suppose that $t$ is contained in $\square_{z(r)}$ in an iteration of our algorithm and $(r_1,r_2]$ is the interval after the iteration. Then $r^* = r_2$.
\end{lemma}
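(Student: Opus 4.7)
The plan is to mirror the contradiction argument used in Lemma~\ref{lem:30} (and Lemma~\ref{lem:35}) from the unweighted case. Suppose, for contradiction, that $r^* \neq r_2$. Since the algorithm invariant is preserved across iterations, the output interval $(r_1, r_2]$ still contains $r^*$, so in fact $r^* \in (r_1, r_2)$. In particular $r_1 < r^*$, so I may pick any value $r' \in (r_1, r^*)$, which also lies in $(r_1, r_2)$.

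The invariant guarantees that for every $r \in (r_1, r_2)$, the run of the WX decision algorithm on radius $r$ and on radius $r^*$ agree step-by-step through the end of the current iteration: namely $z(r) = z(r^*)$, $Q(r) = Q(r^*)$, and $dist_r[p] = dist_{r^*}[p]$ for every $p \in P$. Applying this at $r = r'$ yields $dist_{r'}[t] = dist_{r^*}[t]$ after the iteration. Moreover, the hypothesis $t \in \square_{z(r)}$, together with the fact that $\Psi_r(P)$ and $\Psi_{r^*}(P)$ share the same combinatorial structure and $z(r) = z(r^*)$, implies $t \in \square_{z(r^*)}$ in the WX run on $r^*$ as well.

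By correctness of the WX algorithm~\cite{ref:WangNe20}, once $t$ lies in $\square_{z}$ at a given iteration, after the two \textsc{Update} calls of that iteration the value $dist[t]$ is finalized and equals $d_r(s,t)$. Hence $dist_{r^*}[t] = d_{r^*}(s,t)$ and $dist_{r'}[t] = d_{r'}(s,t)$. Chaining the equalities,
\[
d_{r'}(s,t) \;=\; dist_{r'}[t] \;=\; dist_{r^*}[t] \;=\; d_{r^*}(s,t) \;\le\; \lambda,
\]
where the final inequality uses that $r^*$ is feasible. Thus $r'$ is also feasible, i.e.\ $r' \ge r^*$, which contradicts $r' < r^*$. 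Therefore $r^* = r_2$.

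The main obstacle is verifying the two links carefully: first, that the invariant is maintained through the current iteration (so that the $dist$-equality actually holds at the end, not just the beginning, of the iteration); and second, that the WX finalization property applies at precisely the moment $t$ is placed in $\square_{z}$. The first follows from the per-step analysis already established for each phase of the iteration (point selection, first update, the three substeps of the second update, and the removal of $Q_{\square_z}$), and the second is a direct citation of the WX correctness argument in~\cite{ref:WangNe20}, so no new machinery beyond the invariant is required.
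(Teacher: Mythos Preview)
Your proof is correct and follows essentially the same contradiction argument as the paper: assume $r^*\neq r_2$, pick $r'\in(r_1,r^*)$, use the invariant to equate $dist_{r'}[t]=dist_{r^*}[t]$, invoke WX correctness to convert both to $d_{r'}(s,t)=d_{r^*}(s,t)\le\lambda$, and derive the contradiction with the minimality of $r^*$. The paper chooses $r'=(r_1+r^*)/2$ explicitly and is terser about why $dist_{r'}[t]=d_{r'}(s,t)$, but the substance is identical.
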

\begin{proof}
\label{proof:lem-correctness}
Assume to the contrary that $r^* \neq r_2$. Then we have $r^* \in (r_1, r_2)$ since $r^* \in (r_1, r_2]$. Let $r' = (r_1 + r^*) / 2$, and thus $r' \in (r_1, r_2)$ and $r' < r^*$. By our algorithm invariant and the correctness of the WX algorithm ($dist_{r}[p] = d_{r}(s, p)$ for all points $p \in P_{\square_{z(r)}}$ after the iteration), we have $d_{r'}(s, t) = dist_{r'}[t] = dist_{r^*}[t] = d_{r^*}(s, t)$. By the definition of $r^*$, $d_{r^*}(s, t) \leq \lambda$. Therefore, $d_{r'}(s, t) \leq \lambda$. But this contradicts with the definition of $r^*$ since $r^*=\arg \min_{r} \{ d_{r}(s, t) \leq \lambda \}$. The lemma thus holds.
\end{proof}

The algorithm may take $\Omega(n^2)$ time because $t$ may be reached in $\Omega(n)$ iterations. A further improvement is discussed in the next subsection.

\subsection{A further improvement}

To further reduce the runtime of the algorithm, we borrow a technique from~Section~\ref{sec:second} to partition the cells of the grid into large and small cells.

As before, we first compute the grid information $\Psi(P)$ and obtain an interval $(r_1,r_2]$.
Let $\calC$ denote the set of all non-empty cells of $\Psi(P)$ (i.e., cells that contain at least one point of $P$). For each cell $C \in \calC$, let $N(C)$ denote the set of non-empty neighboring cells of $C$ in $\calC$ and $P(C)$ the set of points of $P$ contained in cell $C$. We have $|N(C)| = O(1)$ and $|\calC| = O(n)$. A cell $C$ of $\calC$ is a \emph{large cell} if it contains at least $n^{3/4}\log^{3/2}n$ points of $P$, i.e., $|P(C)| \geq n^{3/4}\log^{3/2}n$, and a \emph{small cell} otherwise. Clearly, $\calC$ has at most $n^{1/4}/\log^{3/2}n$ large cells. For all pairs of non-empty neighboring cells $(C, C')$, with $C\in \calC$ and $C' \in N(C)$, $(C, C')$ is a \emph{small-cell pair} if both $C$ and $C'$ are small cells, and a \emph{large-cell pair} otherwise, i.e., at least one cell is a large cell. Since $N(C)=O(1)$ for each cell $C\in \calC$, there are $O(n^{1/4}/\log^{3/2}n)$ large-cell pairs.

We follow the algorithmic framework in Section~\ref{sec:second}.
Notice that in each iteration of the main loop in our previous algorithm, only the second step of the second update parameterizes the WX algorithm (i.e., the decision algorithm is called on certain critical values); in that step, we need to process $O(1)$ pairs of cells $(C,C')$ with $C\in \calC$ and $C'\in N(C)$. No matter how many points of $P$ are contained in the two cells, we need $O(n\log^4 n)$ time to perform the parametric search due to Lemma~\ref{lem:partition}. To reduce the time, we preprocess all small-cell pairs so that the algorithm only needs to perform the parametric search for large-cell pairs. Since there are only $O(n^{1/4}/\log^{3/2}n)$ large-cell pairs, the total time we spend on parametric search can be reduced to $O(n^{5/4}\log^{5/2}n)$.
For those small-cell pairs, the preprocessing provides sufficient information to allow us to simply run the original WX algorithm without parametric search. Specifically, before we enter the main loop of the algorithm (and after the grid information $\Psi(P)$ is computed, along with an interval $(r_1,r_2]$), we preprocess all small-cell pairs using the following lemma.

\begin{lemma}
\label{lem:preprocess}
In $O(n^{5/4} \log^{5/2} n)$ time we can shrink the interval $(r_1, r_2]$
so that it still contains $r^*$ and if $r^* \neq r_2$, then for any $r \in (r_1, r_2)$, for any small-cell pair $(C, C')$ with $C \in \calC$ and $C' \in N(C)$, an edge connects a point $p \in P(C)$ and a point $p' \in P(C')$ in $G_r(P)$ if and only if an edge connects $p$ and $p'$ in $G_{r^*}(P)$.
\end{lemma}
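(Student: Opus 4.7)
The plan is to mirror the proof of Lemma~\ref{lem:40} from the unweighted case, adapted for the weighted setting by using the WX algorithm (rather than the CS algorithm) as the decision algorithm, and by choosing the large/small threshold as $n^{3/4}\log^{3/2}n$ so that the round cost balances out against the $O(n\log^2 n)$ cost of each decision call.

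Let $\Pi$ be the set of small-cell pairs; enumerate them as $(C_i,C_i')$, $1\le i\le m$, let $P_i=P(C_i)\cup P(C_i')$, and set $n_i=|P_i|$. Because $C_i$ and $C_i'$ are small cells, $n_i\le 2n^{3/4}\log^{3/2}n$; because $|N(C)|=O(1)$, we also have $\sum_i n_i=O(n)$. Let $D_i$ be the multiset of Euclidean distances between pairs of points of $P_i$, and let $\calD=\bigcup_i D_i$. The lemma's conclusion is equivalent to shrinking $(r_1,r_2]$ to an interval whose interior contains no value of $\calD$: if $(r_1,r_2)\cap \calD=\emptyset$, then for every small-cell pair and every pair $p\in P(C_i),\ p'\in P(C_i')$, the predicate $\|p-p'\|\le r$ has the same truth value for all $r\in(r_1,r_2)$, and in particular coincides with the predicate at $r^*$ whenever $r^*\in(r_1,r_2)$. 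Thus it suffices to compute $r_2'=\min\{d\in\calD:d\ge r^*\}$ and $r_1'=\max\{d\in\calD:d<r^*\}$ and then update $r_1\leftarrow\max\{r_1,r_1'\}$, $r_2\leftarrow\min\{r_2,r_2'\}$.

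The plan is to compute $r_1'$ and $r_2'$ by the same Katz--Sharir style binary-search-by-rounds procedure used in the second half of the proof of Lemma~\ref{lem:40}. Start with $I_0=(0,\infty]$. In round $j$, given an interval $I_{j-1}$ that contains $r^*$, do the following. For each $P_i$, build in $O(n_i^{4/3}\log n_i)$ time the compact representation from~\cite{ref:KatzAn97} of the pairs of $P_i$ whose interpoint distance lies in $I_{j-1}$, namely a family of complete bipartite graphs whose vertex sets sum to $O(n_i^{4/3}\log n_i)$, and replace each such bipartite graph by constant-degree expanders whose edge set has size $O(n_i^{4/3}\log n_i)$ in total. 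Let $L_i$ be the set of distances realized by the chosen expander edges, and let $\calL=\bigcup_i L_i$. The analysis in \cite{ref:KatzAn97} guarantees that $|\calL\cap I_{j-1}|$ is at least a constant fraction of $|\calD\cap I_{j-1}|$. Use linear-time selection on $\calL$ together with the WX decision algorithm to perform binary search for the smallest feasible and largest infeasible values of $\calL$, which yields a shrunk interval $I_j\subseteq I_{j-1}$ whose interior misses $\calL$ and such that $|\calD\cap I_j|\le c\,|\calD\cap I_{j-1}|$ for some constant $c<1$. After $O(\log n)$ rounds the remaining portion of $\calD$ is small enough that $r_1'$ and $r_2'$ can be extracted by one additional direct scan.

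For the running-time balance, the key calculation is that, subject to $\sum_i n_i=O(n)$ and $n_i\le 2n^{3/4}\log^{3/2}n$, the sum $\sum_i n_i^{4/3}$ is maximized when each $n_i$ equals the threshold, giving $\sum_i n_i^{4/3}=O((n^{1/4}/\log^{3/2}n)\cdot n\log^2 n)=O(n^{5/4}\log^{1/2}n)$, and hence $|\calL|=\sum_i O(n_i^{4/3}\log n_i)=O(n^{5/4}\log^{3/2}n)$. Building all compact representations and expander approximations in a single round therefore takes $O(n^{5/4}\log^{3/2}n)$ time; the binary search on $\calL$ invokes the WX decision algorithm $O(\log n)$ times at $O(n\log^2 n)$ each, contributing $O(n\log^3 n)$, which is dominated. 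Summing over $O(\log n)$ rounds gives the claimed $O(n^{5/4}\log^{5/2}n)$ bound.

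The step I expect to be the main obstacle is verifying that the choice of threshold $n^{3/4}\log^{3/2}n$ does in fact make the two sources of cost, the per-round compact-representation work $\sum_i n_i^{4/3}\log n_i$ and the parametric-search work for large-cell pairs (handled elsewhere in the main algorithm), balance against the $O(n\log^2 n)$ decision cost. Everything else is essentially bookkeeping transplanted from Lemma~\ref{lem:40}.
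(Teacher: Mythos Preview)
Your proposal is correct and follows essentially the same approach as the paper: mirror the proof of Lemma~\ref{lem:40}, replace the CS decision algorithm by the WX algorithm, plug in the threshold $n^{3/4}\log^{3/2}n$, and verify that each of the $O(\log n)$ rounds costs $O(n^{5/4}\log^{3/2}n)$. One small wording issue: the sentence ``$|\calL\cap I_{j-1}|$ is at least a constant fraction of $|\calD\cap I_{j-1}|$'' is not quite the right statement of the Katz--Sharir expander property (the actual guarantee is the one you state next, that excluding $\calL$ from the interval shrinks $|\calD\cap I_j|$ by a constant factor), but your conclusion is correct.
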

\begin{proof}
\label{proof:lem-preprocess}
Lemma~\ref{lem:40} essentially solves the same problem for the unweighted case. 
Here we follow the same algorithm as in Lemma~\ref{lem:40} but replace their decision algorithm
by our decision algorithm for the weighted case.
The algorithm has $O(\log n)$ iterations, and following the same analysis as in Lemma~\ref{lem:40} and using the new threshold $n^{3/4}\log^{3/2}n$ for defining large cells, one can show that each iteration takes $O(n^{5/4}\log^{3/2} n)$ time. More specifically, if we use the same notation as in the proof of Lemma~\ref{lem:40}, then we have $n_i \leq 2 \cdot n^{3/4}\log^{3/2}n$, and thus $|\calL|=\sum_{i = 1}^{m} n_i^{4/3} \log n_i$ is bounded by $O(n^{5/4} \log^{3/2} n)$. 
Therefore, the total running time of the algorithm is $O(n^{5/4}\log^{5/2} n)$.
\end{proof}

Let $(r_1, r_2]$ denote the interval obtained after the preprocessing for all small-cell pairs in Lemma~\ref{lem:preprocess}. Lemma~\ref{lem:preprocess} essentially guarantees that if $r^* \neq r_2$, then for any $r \in (r_1, r_2)$, the adjacency relation of points in any small-cell pair in $G_r(P)$ is the same as that in $G_{r^*}(P)$. Note that if $(r_1, r_2]$ is shrunk so that it still contains $r^*$, then the above property still holds for the shrunk interval. Based on this property, combining with our previous algorithm, we have the following theorem.

\begin{theorem}
\label{theorem:weightedL2}
    The reverse shortest path problem for $L_2$ weighted unit-disk graphs can be solved in $O(n^{5/4} \log^{5/2} n)$ time.
\end{theorem}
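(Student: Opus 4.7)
The plan is to plug the preprocessing of Lemma~\ref{lem:preprocess} into the parametric-search algorithm of Section~\ref{subsec:WeightedRSPalgorithm} and then carry out a careful accounting that charges the expensive parametric step only to large-cell pairs. First I would compute the grid $\Psi(P)$ along with its initial interval $(r_1,r_2]$ in $O(n\log^3 n)$ time, and then apply Lemma~\ref{lem:preprocess} to shrink $(r_1,r_2]$ in $O(n^{5/4}\log^{5/2} n)$ time so that for every small-cell pair $(C,C')$ and any $r\in (r_1,r_2)$, an edge of $G_r(P)$ connects $p\in P(C)$ and $p'\in P(C')$ if and only if the same edge exists in $G_{r^*}(P)$.

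Next I would run the main loop essentially as in Section~\ref{subsec:WeightedRSPalgorithm}. The only part that calls for parametric search is the second step of the second update, which processes each pair $(\square_{z(r)},\square)$ with $\square\in \boxplus_{z(r)}$ and $\square\neq \square_{z(r)}$. I would branch on the pair type: if $(\square_{z(r)},\square)$ is a large-cell pair, I would apply Lemma~\ref{lem:partition}, at cost $O(n\log^4 n)$; otherwise, I would pick any $r\in (r_1,r_2)$ and execute the original, unparameterized WX second update on the two cells. Correctness of the latter branch follows from Lemma~\ref{lem:preprocess}: for each $v\in P(\square)$ and each $u_i\in P(\square_{z(r)})$, whether $v\in \bigodot_{u_i}(r)$ is the same for any $r\in (r_1,r_2)$ as for $r^*$, so the subsets $V_i(r)=V_i(r^*)$ are produced correctly, and Euclidean distances used in the additively-weighted nearest-neighbor computations do not depend on $r$.

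For the time analysis I would invoke an ``each cell is centered at most once'' argument: each $C\in\calC$ serves as $\square_{z(r)}$ in at most one iteration, since $P(C)$ is removed from $Q(r)$ immediately afterwards, and $|N(C)|=O(1)$, so only $O(|\calC|)=O(n)$ cell-pair updates occur across the entire main loop. Of these, at most $O(n^{1/4}/\log^{3/2} n)$ involve a large cell, so Lemma~\ref{lem:partition} is invoked at most $O(n^{1/4}/\log^{3/2} n)$ times for a total of $O(n^{5/4}\log^{5/2} n)$. The remaining small-cell pair updates, done without parameterization, aggregate to $O(n\log^2 n)$ by the WX analysis. Adding the $O(n^{5/4}\log^{5/2} n)$ preprocessing and invoking Lemma~\ref{lem:correctness} for termination (return $r_2$ as soon as $t\in \square_{z(r)}$) yields the stated $O(n^{5/4}\log^{5/2} n)$ bound.

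The main obstacle I anticipate is verifying that the two branches jointly preserve the algorithm invariant $dist_r[p]=dist_{r^*}[p]$ for all $p\in P$ throughout the main loop. The large-cell branch maintains the invariant by Lemma~\ref{lem:partition} exactly as in the parametric-search algorithm of Section~\ref{subsec:WeightedRSPalgorithm}, and the small-cell branch maintains it because Lemma~\ref{lem:preprocess} \emph{freezes} the adjacency structure on small-cell pairs for every $r\in (r_1,r_2)$, ensuring the sets $V_i(r)$ and the corresponding additively-weighted nearest-neighbor answers coincide with those for $r^*$. Once this invariant is secured, the running time follows from the accounting above.
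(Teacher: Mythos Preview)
Your proposal is correct and follows essentially the same approach as the paper's proof: preprocess with Lemma~\ref{lem:preprocess}, branch the second step of the second update on large-cell versus small-cell pairs, charge Lemma~\ref{lem:partition} only to the $O(n^{1/4}/\log^{3/2} n)$ large-cell pairs, and run the unparameterized WX step on small-cell pairs. The only cosmetic difference is that the paper bounds the small-cell second-step cost by $O(n\log n)$ (since only Step~2 of the second update is at issue there) rather than your $O(n\log^2 n)$, but this does not affect the final bound.
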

\begin{proof}
    \label{proof:theorem-weightedL2}
    The goal is to compute $r^*$.
    We first build a grid $\Psi(P)$ along with an interval $(r_1, r_2]$ in $O(n \log^3 n)$ time. Then we classify all non-empty cells in $\Psi(P)$ to large cells and small cells. Next, we use Lemma~\ref{lem:preprocess} to shrink the interval $(r_1, r_2]$ in $O(n^{5/4} \log^{5/2} n)$ time.

    We proceed to the main loop of the algorithm. In each iteration, we proceed in the same way as before except that the second step of the second update \textsc{Update}($Q(r)_{\square_{z(r)}}$, $Q(r)_{\boxplus_{z(r)}})$ is now executed as follows. Recall that it suffices to perform \textsc{Update}($Q(r)_C$, $Q(r)_{C'}$) with $C=\square_{z(r)}$ and $C'\in N(C)$. If $(C, C')$ is a large-cell pair, then we apply our parametric search procedure in the same way as before. Since the number of large-cell pairs is $O(n^{1/4}/\log^{3/2}n)$ and implementing the second step of  \textsc{Update}($Q(r)_C$, $Q(r)_{C'}$) with the parametric search takes $O(n \log^4 n)$ time by Lemma~\ref{lem:partition}. Thus the total time we spend on all large-cell pairs is $O(n^{5/4} \log^{5/2} n)$. If $(C, C')$ is a small-cell pair, according to the property of $(r_1,r_2]$ in the statement of
    Lemma~\ref{lem:preprocess}, we can simply pick any value $r \in (r_1, r_2)$ and then apply the WX algorithm directly. Following the time complexity of the WX algorithm, the second step of \textsc{Update}($Q(r)_C$, $Q(r)_{C'}$) of all small-cell pairs $(C,C')$ together takes $O(n \log n)$ time. The remaining parts of our algorithm together take the same running time as the WX algorithm, which is $O(n \log^2 n)$.

    We thus conclude that the total time of our overall algorithm is bounded by $O(n^{5/4} \log^{5/2} n)$.
\end{proof}

\section{The $L_1$ RSP Problem}
\label{sec:L1RSP}

In this section, we consider the $L_1$ RSP problem and present an $O(n \log^3 n)$ time algorithm for both the unweighted and the weighted cases. Unless otherwise stated, $G_r(P)$ and $d_r(u, v)$ are defined with respect to the $L_1$ metric in this section; also $\lVert p-q\rVert$ represents the $L_1$ distance between two points $p$ and $q$. Given a set $P$ of $n$ points, a value $\lambda$, and two points $s, t \in P$, the problem is to compute the smallest $r$ such that the shortest path length between $s$ and $t$ in the $L_1$ unit-disk graph $G_{r}(P)$ is at most $\lambda$. Let $r^*$ denote the optimal value $r$ for the problem. The goal is therefore to compute $r^*$.


Observe that $r^*$ must be equal to the $L_1$ distance of two points in $P$ in both the unweighted and the weighted cases. As already discussed in Section~\ref{sec:introduction}, the decision problem can be solved in $O(n\log n)$ time in both the unweighted case~\cite{ref:CabelloSh15,ref:ChanAl16} and the weighted case~\cite{ref:WangAn21}. In the following, we first discuss our algorithm for computing $r^*$ in the $L_1$ weighted case. As will be seen later, the $L_1$ unweighted case can be solved by exactly the same algorithm except that the decision algorithm is switched to that for the unweighted case. 
In the weighted case, the single-source-shortest-path algorithm of Wang and Zhao~\cite{ref:WangAn21} can be used to solve the decision problem in $O(n\log n)$ time; in the following, we use {\em decision algorithm} to refer to that algorithm.

Let $\Pi$ denote the set of the $L_1$ distances of all pairs of points of $P$. The main idea is to search $r^*$ in $\Pi$ by using the decision algorithm.
Our searching algorithm framework follows the $L_2$ distance selection algorithm in~\cite{ref:KatzAn97} but uses a different procedure to conduct the ``batched range searching'' to generate critical values. The algorithm has $O(\log n)$ stages and the $j$-th stage computes an interval $I_j = (a_j, b_j]$ (initially $I_0 = (0, \infty]$) containing $r^*$, such that $I_j\subseteq I_{j-1}$ and $|\Pi\cap I_j|=O(n^2\sigma^j)$ for some constant $0<\sigma<1$, i.e., the number of values of $\Pi$ in $I_j$ is a constant faction of the number of values of $\Pi$ in $I_{j-1}$. As such, after $O(\log n)$ stages, only a small amount of values of $\Pi$ remain, from which it is trivial to find $r^*$.
In the following, we describe the algorithm.


To simplify the discussion, we rotate the $x$- and $y$-axes by $45\degree$ and call them {\em new axes}; the original axes are referred to as the {\em old axes}. Correspondingly, each point in the plane has an {\em old coordinate} and a {\em new coordinate}.

We build a 2-dimensional range tree $T$ for $P$ following the new axes in $O(n \log^2 n)$ time~\cite{ref:deBergCo08}.
Specifically, following the sorted order of the points of $P$ by their new $x$-coordinates, we build a balanced binary search tree $T$ such that each leaf stores a point of $P$. For each node $v$ of $T$, let $P_v$ denote the set of points stored in the leaves of the subtree rooted at $v$. Following the sorted order of the points of $P_v$ by their new $y$-coordinates, we build a balanced binary search tree $T_v$ such that each leaf stores a point of $P_v$; for each node $u$ of $T_v$, let $P_u$ denote the subset of points stored in the leaves of the subtree of $T_v$ rooted at $u$ and we call $P_u$ a {\em canonical subset} of $T$.


For each point $p$ and a value $r$, we use $\diamondsuit_p(r)$ to denote the
$L_1$ disk centered at $p$ with radius $r$; note that $\diamondsuit_p(r)$ is a
diamond.

For each point $p \in P$, we intend to find the set $P_p(I)$ of points of $P$ whose
$L_1$ distances from $p$ lie in an interval $I = (a, b]$. Notice that
all these points must be in the {\em $L_1$ annulus} $A_p(I)$ that is the
region inside $\diamondsuit_{p}(b)$ and strictly outside $\diamondsuit_{p}(a)$. Further,
$A_p(I)$ can be decomposed into four rectangles whose
edges are parallel with the new axes (e.g., see
Fig.~\ref{fig:OrthogonalRangeSearching}). Hence, points of $P_p(I)$
can be found by performing four orthogonal range queries on $T$. Each query, which takes
$O(\log^2 n)$ time, returns $O(\log^2 n)$ pairwise-disjoint canonical subsets of $T$
whose union is the subset of points of $P$ in the query rectangle~\cite{ref:deBergCo08}.
Hence, $P_p(I)$ can be obtained in $O(\log^2 n)$ time
as the union of $O(\log^2 n)$ pairwise-disjoint canonical subsets of
$T$.

\begin{figure}[t]
    \centering
    \begin{minipage}{\textwidth}
        \centering
        \includegraphics[height=1.8in]{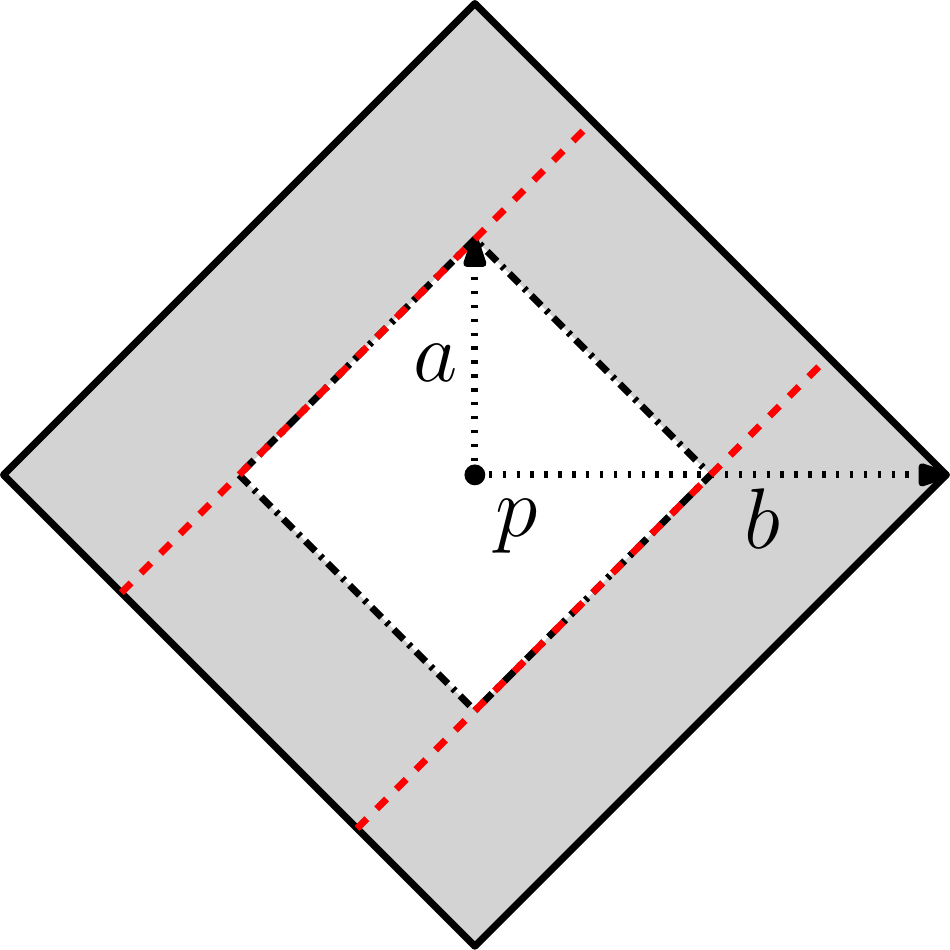}
        \caption{\footnotesize The grey region is the $L_1$ annulus $A_p(I)$ with $I=(a,b]$. The two dashed (red) segments decompose $A_p(I)$ into four rectangles.}
        \label{fig:OrthogonalRangeSearching}
    \end{minipage}
\end{figure}


Assume that we have an interval $I_{j - 1} = (a_{j - 1}, b_{j - 1}]$
(initially $j=1$ and $I_0=(0,\infty]$) such that the number of values of $\Pi$ in $I_{j-1}$ is $O(n^2 \sigma^{j-1})$ for some constant $\sigma\in (0,1)$ to be specified later, which is true initially when $j=1$.
The $j$-th stage of our
algorithm works as follows.
\begin{enumerate}
\item For each point $p \in P$, using the range tree $T$, we find the
collection $\calR_p$ of $O(\log^2 n)$ canonical subsets of $T$ whose union is
$P_p(I_{j-1})$. Computing $\calR_p$ for all $p\in P$ takes $O(n\log^2 n)$ time. Also, $\sum_{p\in P}|\calR_p|=O(n\log^2 n)$.

Note that $T$ has $O(n\log n)$ canonical subsets, denoted by $\calL_g$, $g=1,2,\ldots,O(n\log n)$, and their total size is $O(n\log^2 n)$~\cite{ref:deBergCo08}.
For each $\calL_g$, define $\calK_g=\{p\in P\ |\ \calL_g\in \calR_p\}$. As
$\sum_{p\in P}|\calR_p|=O(n\log^2 n)$, we have
$\sum_g|\calK_g|=O(n\log^2n)$ and constructing all $\calK_g$'s can be
done in $O(n\log^2 n)$ time by enumerating $\calR_p$ for all $p\in
P$.

The values of $\Pi$ in $I_{j-1}$ can be represented as the above collection of $O(n \log n)$
pairs $(\calK_g, \calL_g)$. Indeed, for any point $p\in \calK_g$ and
any point $q\in \calL_g$, we have $\lVert p-q\rVert\in I_{j-1}$. On
the other hand, for any two points $p,q\in P$ with $\lVert
p-q\rVert\in I_{j-1}$, there is a unique $g$ such that $p\in \calK_g$
and $q\in \calL_g$ and another unique $g$ such that $q\in \calK_g$ and
$p\in \calL_g$. Therefore, $\sum_g |\calK_g| \cdot |\calL_g|$ is twice
the number of values of $\Pi\cap I_{j - 1}$, and thus $\sum_g
|\calK_g| \cdot |\calL_g|=O(n^2 \sigma^{j-1})$.



\item Each pair $(\calK_g, \calL_g)$ can be viewed as a complete
bipartite graph $\calK_g \times \calL_g$ in the sense that the
distance between any point $p \in \calK_g$ and any point $q \in
\calL_g$ lies in $I_{j - 1}$.
For each $g$, we partition $\calK_g$ into subsets each of which has a size similar to that of $\calL_g$, as follows.
Let $m_g = |\calK_g|$ and $n_g = |\calL_g|$.

If $m_g \geq n_g$, then we partition $\calK_g$ into
$k = \lfloor \frac{m_g}{n_g} \rfloor$ subsets, $\calK_{g1},
\calK_{g2}, ..., \calK_{gk}$, where each subset contains $n_g$
elements except the last subset $\calK_{gk}$ contains at least
$n_g$ but at most $2n_g - 1$ elements.

If $m_g <n_g$, we exchange the names of $\calK_g$
and $\calL_g$, i.e., we use $\calK_g$ to refer to $\calL_g$ and use
$\calL_g$ to refer to the original $\calK_g$. Similarly, $m_g$ now
refers to the size of the new $\calK_g$ and $n_g$ the size
of the new $\calL_g$.
Then, we have $m_g>n_g$ and perform the same partition of (the new)
$\calK_g$ as above.
Note that the name exchange is only for convenience of discussions; alternatively, one could partition $\calL_g$ instead without doing the name exchange (but the discussion would become more tedious).
As will be seen, the name exchange will not affect the correctness of our
algorithm (i.e., Lemma~\ref{lem:L1case}), which only relies on the following properties $\sum_g
|\calK_g|=O(n\log^2 n)$, $\sum_g |\calL_g| = O(n \log^2 n)$, and $\sum_g |\calK_g|
\cdot |\calL_g| = O(n^2 \sigma^{j - 1})$. It is not difficult to see that the
name exchange does not affect these properties.

Next, for each $1 \leq i \leq k$, we consider $(\calK_{gi},\calL_g)$ as a complete bipartite graph and
construct a {\em $d$-regular LPS-expander graph} $E_{gi}$ on the vertex set $\calK_{gi} \cup
\calL_g$~\cite{ref:KatzAn97, ref:LubotzkyEx86}, for a fixed
constant $d$. The graph $E_{gi}$, which has $O(|\calK_{gi}| + |\calL_g|)$ edges,
can be computed in $O(|\calK_{gi}| + |\calL_g|)$ time~\cite{ref:KatzAn97, ref:LubotzkyEx86}.
Let $E_g$ be the union of these expander graphs, i.e.,
$E_{g} = \cup_{i = 1}^{k} E_{gi}$. Hence, the total time for
constructing $E_g$ is on the order of $\sum_{i = 1}^{k} (|\calK_{gi}| + |\calL_g|)$, which is bounded by
$O(|\calK_g| + \lfloor \frac{m_g}{n_g} \rfloor \cdot |\calL_g|) =
O(|\calK_g|)$. Therefore, constructing $E_g$'s for all $g$ takes
$\sum_g O(|\calK_g|) = O(n \log^2 n)$ time. Also, the number of
edges in $E_g$ is $O(|\calK_g| + |\calL_g|)$, and thus the total number
of edges of all $E_g$'s is $\sum_g O(|\calK_g| + |\calL_g|) = O(n
\log^2 n)$.

\item
Each edge connecting $p$ and $q$ in each $E_g$ with $p \in \calK_g$ and
$q\in \calL_g$ is associated with the $L_1$ distance $\lVert p-q\rVert$. Let
$W$ denote the set of all these distances over all $\{E_g\}_g$. Then
the size of $W$ is $O(n\log^2 n)$. Note that all values of $W$
are in $I_{j-1}$.

\item
Using the decision algorithm, we perform binary search on $W$ to find
the largest value $a_j\in W$ with $a_j<r^*$ and the smallest value
$b_j\in W$ with $r^*\leq b_j$. Define $I_j=(a_j,b_j]$. Hence,
$I_j\subseteq I_{j-1}$ and $I_j$ contains $r^*$. Also, $(a_j, b_j)$
does not contain any value of $W$. Note that when performing binary
search, we do not have to sort $W$ first, but instead use the linear time
selection algorithm~\cite{ref:BlumTi73}. As such,
since $|W|=O(n\log^2 n)$ and the decision algorithm takes $O(n\log n)$ time, finding $a_j$
and $b_j$ in $W$ can be done in $O(n \log^2 n)$ time.
\end{enumerate}

This finishes the $j$-th stage of the algorithm, which runs in $O(n\log^2 n)$ time.
Since $|\Pi\cap I_{j-1}|=O(n^2\sigma^{j-1})$, the following lemma
proves that $|\Pi\cap I_j|=O(n^2\sigma^j)$.

\begin{lemma}
    \label{lem:L1case}
    There exists a constant $\sigma\in (0,1)$ such that the number of values of $\Pi$ in $I_j = (a_j, b_j]$ is at most $\sigma$ times the number of values of $\Pi$ in $I_{j - 1} = (a_{j - 1}, b_{j - 1}]$.
\end{lemma}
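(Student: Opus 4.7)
The plan is to exploit the double-counting structure of the range-tree decomposition together with a per-expander pseudo-randomness bound inherited from Katz and Sharir~\cite{ref:KatzAn97}. First, I would establish a counting identity. Every unordered pair $\{p, q\} \in \Pi$ with $\|p-q\| \in I_{j-1}$ is captured exactly twice by the bipartite graphs $\{\calK_g \times \calL_g\}_g$: once as $(p, q)$ in the unique pair $(\calK_g, \calL_g)$ with $p \in \calK_g$ and $q \in \calL_g$, and once as $(q, p)$ with the roles reversed. Hence $\sum_g |\calK_g|\cdot |\calL_g| = 2|\Pi \cap I_{j-1}|$, and the identity restricted to distances in $(a_j, b_j]$ reads $\sum_g |\{(p, q) \in \calK_g \times \calL_g : \|p-q\| \in (a_j, b_j]\}| = 2|\Pi \cap (a_j, b_j]|$. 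Using the partition $\calK_g = \bigcup_i \calK_{gi}$ from step~2, it therefore suffices to establish, for some absolute constant $\sigma'' < 1$,
\[
\bigl|\{(p, q) \in \calK_{gi} \times \calL_g : \|p-q\| \in (a_j, b_j)\}\bigr| \leq \sigma''\,|\calK_{gi}|\cdot |\calL_g|
\]
for every pair $(g, i)$.

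Fix $(g, i)$. By step~2, $|\calK_{gi}| \in [n_g, 2n_g)$ while $|\calL_g| = n_g$, so $E_{gi}$ is a constant-degree $d$-regular bipartite LPS expander on two sides of sizes $\Theta(n_g)$. Since $(a_j, b_j) \cap W = \emptyset$ by step~4, and every edge-distance of $E_{gi}$ is an element of $W$, no edge of $E_{gi}$ has distance in $(a_j, b_j)$. The desired per-expander bound is exactly the pseudo-randomness statement underpinning the Katz--Sharir distance-selection algorithm in Section~4 of~\cite{ref:KatzAn97}: the $\Theta(n_g)$ edges of $E_{gi}$ behave as a representative sample of the $\Theta(n_g^2)$ pair-distances, so an open distance-interval avoided by the sample can contain only a bounded fraction of the pairs. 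Concretely, applying the expander mixing lemma to any $A \subseteq \calK_{gi}$ and $B \subseteq \calL_g$ between which $E_{gi}$ has no edge yields $|A|\cdot |B| \leq (\lambda/d)^2 n_g^2$; combining this rectangle-avoidance with the further argument of~\cite{ref:KatzAn97} produces the claimed fractional bound with $\sigma''$ independent of $n_g$.

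Summing the per-expander bound over all $(g, i)$ and invoking the double-counting identity then gives
\[
2\,|\Pi \cap (a_j, b_j)| \;\leq\; \sigma''\sum_{g, i}|\calK_{gi}|\cdot |\calL_g| \;=\; 2\sigma''\,|\Pi \cap I_{j-1}|.
\]
The unordered pairs with distance equal to the endpoint $b_j$ contribute only a lower-order additive term (bounded by the $L_1$ unit-distance count on $P$), so $|\Pi \cap I_j| \leq \sigma\,|\Pi \cap I_{j-1}|$ for a suitable constant $\sigma \in (\sigma'', 1)$; the complementary regime in which this additive term begins to dominate $|\Pi \cap I_{j-1}|$ can be handled by terminating the stage loop early. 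The principal obstacle is the per-expander bound in the second paragraph: the expander mixing lemma alone supplies only the rectangle-avoidance statement, and extracting from it a total-count bound of the stated form with $\sigma''$ independent of $n_g$ requires the full pseudo-randomness argument developed in~\cite{ref:KatzAn97}, which is the technical heart of the distance-selection analysis.
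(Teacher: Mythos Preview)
Your high-level structure is correct: the double-counting identity holds, the expander $E_{gi}$ has no edge with distance in $(a_j,b_j)$, and the expander property yields a rectangle-avoidance bound $|A|\cdot|B|\le 4N^2/d$ for any $A\subseteq\calK_{gi}$, $B\subseteq\calL_g$ with no edge between them. But you explicitly defer the crucial step---turning rectangle avoidance into the per-expander total-count bound $N_{gi}\le\sigma''|\calK_{gi}|\cdot|\calL_g|$---to ``the further argument of~\cite{ref:KatzAn97},'' and then name this as ``the principal obstacle.'' That deferred step \emph{is} the content of the lemma; without it you have only restated what needs to be proved.

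The paper supplies this missing step via a geometric $1/h$-cutting. For each $(\calK_{gi},\calL_g)$, form the $L_1$ annuli $A_p(I_j)$ for $p\in\calK_{gi}$; their boundaries are $O(|\calK_{gi}|)$ axis-parallel lines (in the rotated frame), admitting a $1/h$-cutting into $O(h^2)$ cells. Split $N_{gi}$ over cells $\Delta$: pairs $(p,q)$ with $q\in\Delta$ and $A_p(I_j)$ fully containing $\Delta$ have $\lVert p-q\rVert\in(a_j,b_j)$, so the expander has no edge between $\calK'_{gi}(\Delta)$ and $\calL_g(\Delta)$, giving $|\calK'_{gi}(\Delta)|\cdot|\calL_g(\Delta)|\le 4(|\calK_{gi}|+|\calL_g|)^2/d$ and a total contribution $O(h^2(|\calK_{gi}|+|\calL_g|)^2/d)$; pairs where the annulus boundary crosses $\Delta$ contribute $O(|\calK_{gi}|\cdot|\calL_g|/h)$ by the cutting property. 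With $|\calK_{gi}|\asymp|\calL_g|$ and $h=d^{1/3}$, this yields $N_{gi}=O(d^{-1/3})\cdot|\calK_{gi}|\cdot|\calL_g|$ directly for all of $I_j=(a_j,b_j]$. Your separate treatment of the endpoint $b_j$ via an ``$L_1$ unit-distance count'' is thus unnecessary, and in any case is not obviously a lower-order term.
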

\begin{proof}
    \label{proof:lem-L1case}
    Consider a pair $(\calK_{gi}, \calL_g)$ obtained from Step 2 in the $j$-th stage of our algorithm.
    Let $\calA_{gi}$ be the set of the $L_1$ annuli $A_p(I_j)$ of all points $p\in \calK_{gi}$. Let $H_{gi}$ be the set of supporting lines of all edges of the $L_1$ annuli of $\calA_{gi}$. For a parameter $h\leq |H_{gi}|$ to be specified later, we consider a $1/h$-cutting $\Xi$ of size $O(h^2)$ for the lines of $H_{gi}$, which consists of $O(h^2)$ (possibly unbounded) triangles (called {\em cells}) such that each cell is crossed by at most $|H_{gi}|/h$ lines of $H_{gi}$ (implying that the number of annuli of $\calA_{gi}$ intersecting each cell is $O(|\calA_{gi}|/h)$). Note that such a cutting always exists~\cite{ref:ChazelleA90,ref:ChazelleCu93}. Our algorithm does not need to compute the cutting and we use it here for the analysis purpose only.


    For ease of exposition, we assume that each point of $\calL_g$ is contained in the interior of a cell of $\Xi$.
    For each cell $\Delta\in \Xi$, let $\calL_g(\Delta)$ denote the set of points of $\calL_g$ inside $\Delta$, $\calA_{gi}'(\Delta)$ the set of annuli in $\calA_{gi}$ that fully contain $\Delta$, and $\calA_{gi}''(\Delta)$ the set of annuli in $\calA_{gi}$ that have an edge intersecting $\Delta$.
    Define $N_{gi}$ to be the number of $L_1$ distances between points of $\calK_{gi}$ and points of $\calL_g$ that lie in $I_j$. Then we have
    \begin{equation}\label{equ:10}
    N_{gi} \leq \sum_{\Delta\in \Xi} |\calA_{gi}'(\Delta)| \cdot |\calL_g(\Delta)| + \sum_{\Delta\in \Xi} |\calA_{gi}''(\Delta)| \cdot |\calL_g(\Delta)|.
    \end{equation}

    Since the number of annuli of $\calA_{gi}$ intersecting each cell $\Delta$ is $O(|\calA_{gi}|/h)$ and $|\calA_{gi}| = |\calK_{gi}|$, we have $|\calA_{gi}''(\Delta)| =O(|\calK_{gi}| / h)$. Note that $\sum_{\Delta\in \Xi} |\calL_g(\Delta)| = |\calL_g|$. As such, we obtain

    \begin{equation}\label{equ:20}
    \sum_{\Delta\in\Xi} |\calA_{gi}''(\Delta)| \cdot |\calL_g(\Delta)| = O\left(\frac{|\calK_{gi}| \cdot |\calL_g|}{h}\right).
    \end{equation}

    Consider the complete bipartite graph $(\calK_{gi},\calL_g)$. According to the definition of $a_j$ and $b_j$, the expander graph $E_{gi}$ has no edge whose associated $L_1$ distance lies in $(a_j,b_j)$. Note that for any $\Delta\in \Xi$, $\calL_g(\Delta)$ is a subset of $\calL_g$ and $\calK'_{gi}(\Delta)$ is a subset of $\calK_{gi}$, where $\calK'_{gi}(\Delta)$ is the set of centers of the annuli of $\calA'_{gi}(\Delta)$. For each point $p$ of $\calK'_{gi}(\Delta)$, since its corresponding annulus in $\calA'_{gi}(\Delta)$ fully contains $\Delta$, $\lVert p-q\rVert$ must lie in $(a_j,b_j)$ for any point $q\in \calL_g(\Delta)$. As such, the $L_1$ distance between any point $p\in \calK'_{gi}(\Delta)$ and any point $q\in \calL_g(\Delta)$ is in $(a_j,b_j)$.
    Hence, no edge in $E_{gi}$ connects a point of $\calL_g(\Delta)$ with a point of $\calK'_{gi}(\Delta)$.
    By Corollary 2.4 in~\cite{ref:KatzAn97}, if $A$ and $B$ are two vertex subsets of a $d$-regular expander graph of $N$ points and no edge of the graph connects a vertex of $A$ with a vertex of $B$, then $|A| \cdot |B| \leq 4 N^2 / d$. With this result, we obtain that $|\calA'_{gi}(\Delta)|\cdot |\calL_g(\Delta)|=|\calK'_{gi}(\Delta)|\cdot |\calL_g(\Delta)|\leq 4(|\calK_{gi}| + |\calL_g|)^2/d$ for any cell $\Delta\in \Xi$. As $\Xi$ has $O(h^2)$ cells, we can derive
    \begin{equation}\label{equ:30}
    \sum_{\Delta\in \Xi} |\calA'_{gi}(\Delta)| \cdot |\calL_g(\Delta)| \leq O(h^2) \cdot \frac{4(|\calK_{gi}| + |\calL_g|)^2}{d} = O\left(\frac{h^2 (|\calK_{gi}| + |\calL_g|)^2}{d}\right).
    \end{equation}

    Combining \eqref{equ:10}, \eqref{equ:20}, and \eqref{equ:30}, we have $$N_{gi} = O\left(\frac{h^2 (|\calK_{gi}| + |\calL_g|)^2}{d} + \frac{|\calK_{gi}| \cdot |\calL_g|}{h}\right).$$

    According to our partition of $\calK_g$ in Step 2 of each stage of the algorithm, it holds that $|\calL_g| \leq |\calK_{gi}| < 2 \cdot |\calL_g|$, which implies $(|\calK_{gi}| + |\calL_g|)^2 \leq 5 |\calK_{gi}| \cdot |\calL_g|$.
    Thus we have, $$N_{gi} = O\Big( [\frac{h^2}{d} + \frac{1}{h}] \cdot |\calK_{gi}| \cdot |\calL_g| \Big).$$

    By setting $h = d^{1/3}$ and $\sigma$ to be appropriately proportional to $1 / d^{1/3}$, we have $N_{gi} \leq \sigma \cdot |\calK_{gi}| \cdot |\calL_g|$. Summing up all these inequalities for all subsets $\calK_{gi}$ of $\calK_g$ leads to $N_g \leq \sigma \cdot |\calK_g| \cdot |\calL_g|$, where $N_g$ is the number of $L_1$ distances between points of $\calK_g$ and points of $\calL_g$ that lie in $I_j$.
    This further leads to $\sum_g N_g \leq \sigma \cdot \sum_g |\calK_g| \cdot |\calL_g|$.
    Note that $\sum_g N_g$ is equal to twice the number of values of $\Pi$ in $I_j$, i.e., $\sum_g N_g=2\cdot |\Pi\cap I_j|$, while $\sum_g |\calK_g| \cdot |\calL_g|$ is equal to twice the number of values of $\Pi$ in $I_{j-1}$, i.e., $\sum_g |\calK_g|=2\cdot |\Pi\cap I_{j-1}|$. Therefore, we obtain $|\Pi\cap I_j|\leq \sigma\cdot |\Pi\cap I_{j-1}|$. The lemma thus follows.
\end{proof}

By Lemma~\ref{lem:L1case}, after $O(\log n)$ stages, our algorithm will obtain an interval $I_j$ with  $|\Pi\cap I_j|\leq n$. Then, we can explicitly compute these values of $|\Pi\cap I_j|$ in $O(n\log^2 n)$ time using the range tree $T$ (as in Step 1 of each stage of our algorithm), after which $r^*$ can be easily found in $O(n\log^2 n)$ time by binary search on these values using the decision algorithm. As each stage runs in $O(n\log^2 n)$ time, the total time of the overall algorithm is $O(n \log^3 n)$.

For the unweighted case, we use exactly the same algorithm except that we switch to a decision algorithm for the unweighted case. Note that although the decision algorithm runs in $O(n)$ time after $O(n\log n)$ preprocessing for sorting~\cite{ref:ChanAl16}, each stage of the algorithm still takes $O(n\log^2 n)$ time. Hence, the total time is still $O(n \log^3 n)$. The following theorem summarizes our result.

\begin{theorem}
    \label{theorem:L1case}
    The reverse shortest path problem for $L_1$ unit-disk graphs in the unweighted/weighted case can be solved in $O(n \log^3 n)$ time.
\end{theorem}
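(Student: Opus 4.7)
The plan is to solve the problem via a weighted-binary-search framework that mimics Katz--Sharir's $L_2$ distance selection~\cite{ref:KatzAn97}, but where the geometric primitive exploits that an $L_1$ disk is an axis-aligned diamond (i.e., an axis-aligned square after a $45\degree$ rotation). Throughout, I would use the fact that $r^*\in\Pi$, where $\Pi$ is the multiset of pairwise $L_1$ distances of $P$, and that for any given $r$, deciding $r\ge r^*$ reduces to one call of the $L_1$ SSSP decision algorithm~\cite{ref:WangAn21} in the weighted case (or~\cite{ref:CabelloSh15,ref:ChanAl16} in the unweighted case), costing $O(n\log n)$ time.

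First I would rotate the coordinate system by $45\degree$ and build a 2D range tree $T$ on $P$ in $O(n\log^2 n)$ time so that each $L_1$ annulus $A_p(I)$ with $I=(a,b]$ decomposes into four axis-aligned rectangles. The main loop then maintains an interval $I_j=(a_j,b_j]\supseteq\{r^*\}$ with $|\Pi\cap I_j|=O(n^2\sigma^j)$ for a fixed constant $\sigma\in(0,1)$. In each stage I would (i) for every $p\in P$ compute a family $\mathcal{R}_p$ of $O(\log^2 n)$ canonical subsets covering $P_p(I_{j-1})$ via four orthogonal range queries, which aggregated over $p$ yields $O(\log n)$-many pairs $(\mathcal{K}_g,\mathcal{L}_g)$ with $\sum_g|\mathcal{K}_g|\cdot|\mathcal{L}_g|=O(n^2\sigma^{j-1})$; (ii) partition the larger side of each pair into blocks whose size matches the smaller side and build a $d$-regular LPS expander graph on each block together with the other side (total edges $O(n\log^2 n)$); (iii) collect the $L_1$ lengths $W$ of all expander edges and perform Frederickson--Johnson-style binary search via the linear selection algorithm and the decision algorithm to find the largest $a_j\in W$ with $a_j<r^*$ and the smallest $b_j\in W$ with $r^*\le b_j$, giving $I_j=(a_j,b_j]$. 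Each stage thus costs $O(n\log^2 n)$.

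The main obstacle, and the technical heart, is Lemma~\ref{lem:L1case}: showing that one expander stage shrinks $|\Pi\cap I|$ by a constant factor. I would argue it exactly as sketched, using a $1/h$-cutting $\Xi$ of size $O(h^2)$ for the $O(|\mathcal{K}_{gi}|)$ supporting lines of edges of the annuli $A_p(I_j)$, $p\in\mathcal{K}_{gi}$. Splitting contributions to the number $N_{gi}$ of in-range pairs between ``cell fully inside annulus'' and ``cell crossed by an annulus boundary,'' the crossing term is $O(|\mathcal{K}_{gi}||\mathcal{L}_g|/h)$ by the cutting bound; the fully-inside term forces every such pair to have no connecting expander edge, so by Corollary~2.4 of~\cite{ref:KatzAn97} each cell contributes at most $4(|\mathcal{K}_{gi}|+|\mathcal{L}_g|)^2/d$, and summing over $O(h^2)$ cells together with $|\mathcal{L}_g|\le|\mathcal{K}_{gi}|<2|\mathcal{L}_g|$ gives $N_{gi}=O\bigl((h^2/d+1/h)|\mathcal{K}_{gi}||\mathcal{L}_g|\bigr)$. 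Choosing $h=d^{1/3}$ and $d$ a sufficiently large constant yields $\sigma<1$. Summing over the blocks and all $g$ gives $|\Pi\cap I_j|\le\sigma|\Pi\cap I_{j-1}|$.

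After $O(\log n)$ stages we have $|\Pi\cap I_j|\le n$; I would then enumerate these distances explicitly via $T$ in $O(n\log^2 n)$ time and perform one final binary search with the decision algorithm to recover $r^*$. The total running time is $O(\log n)\cdot O(n\log^2 n)=O(n\log^3 n)$, matching the claim in both the weighted and unweighted cases (in the latter the decision algorithm is even faster, but the range-tree work dominates each stage).
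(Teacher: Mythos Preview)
Your proposal is correct and follows essentially the same approach as the paper: rotate by $45\degree$, use a 2D range tree to represent annulus hits as canonical-subset pairs, build LPS expanders on balanced blocks, binary-search the expander edge lengths with the decision algorithm, and prove the constant-factor shrinkage via a $1/h$-cutting argument combined with the expander mixing bound. One small slip: the number of pairs $(\mathcal{K}_g,\mathcal{L}_g)$ is $O(n\log n)$ (there are that many canonical subsets in the range tree), not $O(\log n)$ as you wrote; your subsequent size bounds $\sum_g|\mathcal{K}_g|,\sum_g|\mathcal{L}_g|=O(n\log^2 n)$ are nonetheless correct and are what the argument actually needs.
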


\paragraph{The $L_1$ distance selection problem.} We remark that our technique can be
used to solve the $L_1$ distance selection problem in $O(n\log^3 n)$ time. Given a set $P$ of
$n$ points and an integer $k \in [1, \binom{n}{2}]$, the problem is to
find the $k$-th smallest value in $\Pi$, where $\Pi$ is the set of the
$L_1$ distances of all pairs of points of $P$. Katz and
Sharir~\cite{ref:KatzAn97} solved the $L_2$ version of the problem in
$O(n^{4/3} \log^2 n)$ time. Following their algorithmic scheme and
using our technique for the RSP problem (more specifically,
the technique for computing a compact representation for points of $P$ whose interpoint distances lie
in a given interval; similar techniques for a different problem were also given in~\cite{ref:KatzAn97}), we can solve the $L_1$ version of the problem in
$O(n\log^3 n)$ time. We briefly discuss it below. Suppose $r^*$ is the
$k$-th smallest value of $\Pi$ that we are looking for.

First of all, we need a decision algorithm for the decision problem:
Given any $r$, decide whether $r^*\leq r$. The following algorithm can
solve the decision problem in $O(n\log^2 n)$ time. First, we build a 2D range tree $T$ for $P$ as before. Then, for each point $p\in P$, we find the number $n_p$ of points of $P$ whose distances from $p$ are at most $r$, which can be done in $O(\log^2 n)$ time by an orthogonal range query on $T$. Observe that $1/2\cdot \sum_{p\in P}n_p$ is equal to the number of values of $\Pi$ smaller than or equal to $r$. Hence,
$r^*\leq r$ if and only if $1/2\cdot \sum_{p\in P}n_p\geq k$. Clearly, the total time of the algorithm is $O(n\log^2 n)$.

With the above decision algorithm, our algorithm for computing $r^*$ works as follows.
The algorithm again has $O(\log n)$ stages. In each stage, we compute an interval
$I_j = (a_j, b_j]$ and perform a binary search guided by the decision
algorithm on the distances of $\Pi$ in $I_j$. These
distances are represented by complete bipartite graphs in the same way as our RSP algorithm (i.e., the four steps). The difference is that we use the above decision algorithm in Step 4. Following the same analysis, we can still prove Lemma~\ref{lem:L1case}.
Because the new decision algorithm runs in $O(n\log^2 n)$ time, each stage now takes $O(n\log^3 n)$ time. The total time of the algorithm is thus $O(n \log^4 n)$.
A logarithmic factor can be further reduced using a ``Cole-like''
technique in exactly the same way as in~\cite{ref:KatzAn97}, so that the
number of calls of the decision algorithm in each stage can be
reduced to a constant. This improves the running time of overall algorithm to $O(n
\log^3 n)$.

One may wonder whether the ``Cole-like'' technique can improve the
runtime of our RSP algorithm in Theorem~\ref{theorem:L1case}.
Unfortunately this is not the case. Indeed, each stage of the RSP
algorithm runs in $O(n\log^2 n)$ time even if the time of
the decision algorithm is excluded. Hence, although the
``Cole-like'' technique can reduce the number of calls on the decision
algorithm, the total time of the algorithm is dominated by other parts of the algorithm, which is still $O(n\log^3 n)$.

\section{Concluding remarks}
\label{sec:con}

In this paper, we propose two algorithms for the RSP problem in $L_2$ unweighted unit-disk graphs with time complexities of $O(\lfloor \lambda \rfloor \cdot n\log n)$ and $O(n^{5/4}\log^{7/4} n)$, respectively. We also give an algorithm for the RSP problem in $L_2$ weighted unit-disk graphs with a time complexity of $O(n^{5/4}\log^{5/2} n)$. Interestingly, our second $L_2$ unweighted RSP algorithm and the $L_2$ weighted RSP algorithm break the $O(n^{4/3})$ time barrier for certain geometric problems~\cite{ref:EricksonOn95,ref:EricksonNe96}. In addition, we propose an algorithm that can solve the $L_1$ unweighted/weighted case in $O(n \log^3 n)$ time.

Our RSP problem is defined with respect to a pair of points $(s,t)$.
Our techniques can be extended to solve a more general ``single-source'' version of
the problem: Given a source point $s \in P$ and a value $\lambda$, compute the smallest value $r^*$ such that the lengths of shortest paths from $s$ to all vertices of $G_r(P)$ are at most $\lambda$, i.e., $\max_{t \in P} d_{r^*} (s, t) \leq \lambda$. The decision problem (i.e., deciding whether $r \geq r^*$ for any $r$) now becomes deciding whether $\max_{t \in P} d_{r} (s, t) \leq \lambda$. The algorithm of Chan and Skrepetos~\cite{ref:ChanAl16}, the algorithm of Wang and Xue~\cite{ref:WangNe20}, and the algorithm of Wang and Zhao~\cite{ref:WangAn21} are actually for finding shortest paths from $s$ to all vertices of $G_r(P)$. Thus we can solve the decision problem by using the algorithm of Chan and Skrepetos~\cite{ref:ChanAl16} for the $L_2$/$L_1$ unweighted case, the algorithm of Wang and Xue~\cite{ref:WangNe20} for the $L_2$ weighted case, and the algorithm of Wang and Zhao~\cite{ref:WangAn21} for the $L_1$ weighted case in the same way as before but with an additional last step to compute the value $\max_{t \in P} d_{r} (s, t)$ (the total running times do not change asymptotically). As such, to compute $r^*$, we can follow the same algorithm scheme as before but instead use the above new decision algorithm. In addition, for the $L_2$ unweighted case, we make the following changes to the first algorithm (the second algorithm is changed accordingly). After the $i$-th step of the BFS, which computes a set $S_i$ along with an interval $(r_1,r_2]$. If all points of $P$ have been discovered after this step and $i\leq \lfloor \lambda \rfloor$, then we have $r^*=r_2$ and stop the algorithm; the proof is similar to Lemma~\ref{lem:30}. We also stop the algorithm with $r^*=r_2$ if $i=\lfloor \lambda \rfloor$ and not all points of $P$ have been discovered; the proof is similar to Lemma~\ref{lem:35}. As before, the algorithm will stop in at most $\lfloor \lambda \rfloor$ steps. In this way, the first algorithm can compute $r^*$ in $O(\lfloor \lambda \rfloor \cdot n \log n)$  time. Analogously, the second algorithm can compute $r^*$ in $O(n^{5/4} \log^{7/4} n)$ time. For the $L_2$ weighted case, our original algorithm terminates once
$t$ is reached but now we instead halt the algorithm once all points of $P$
are reached, which does not affect the running time asymptotically.
As such, the ``single-source'' version of the $L_2$ weighted RSP problem can be solved in $O(n^{5/4}\log^{5/2}n)$ time. The $L_1$ unweighted/weighted case can be solved in $O(n\log^3 n)$
time.


\footnotesize
\bibliographystyle{plain}
\bibliography{reference}

\end{document}